\documentclass[UKenglish,a4paper,pdfa,cleveref]{lipics-v2021}

\usepackage{booktabs,csquotes,etoolbox,mathdots,mathtools,tikz,xparse}
\usetikzlibrary{babel,cd}

\title{Computing the Fr\'echet Distance Between Uncertain Curves in One Dimension}
\author{Kevin Buchin}{Department of Mathematics and Computer Science, TU Eindhoven, Netherlands \and
\url{https://www.win.tue.nl/~kbuchin/}}{k.a.buchin@tue.nl}{https://orcid.org/0000-0002-3022-7877}{}
\author{Maarten L\"offler}{Department of Information and Computing Sciences, Utrecht University,
Netherlands \and\url{https://webspace.science.uu.nl/~loffl001/}}{m.loffler@uu.nl}{}{Partially
supported by the Dutch Research Council (NWO) under project no.\@ 614.001.504.}
\author{Tim Ophelders}{Department of Information and Computing Sciences, Utrecht University,
Netherlands \and Department of Mathematics and Computer Science, TU Eindhoven, Netherlands
\and\url{https://www.win.tue.nl/~tophelde/}}{t.a.e.ophelders@uu.nl}{}{}
\author{Aleksandr Popov}{Department of Mathematics and Computer Science, TU Eindhoven, Netherlands
\and\url{https://www.win.tue.nl/~apopov/}}{a.popov@tue.nl}{https://orcid.org/0000-0002-0158-1746}{Supported
by the Dutch Research Council (NWO) under project no.\@ 612.001.801.}
\author{J\'er\^ome Urhausen}{Department of Information and Computing Sciences, Utrecht University,
Netherlands}{j.e.urhausen@uu.nl}{}{}
\author{Kevin Verbeek}{Department of Mathematics and Computer Science, TU Eindhoven, Netherlands
\and\url{https://www.win.tue.nl/~kverbeek/}}{k.a.b.verbeek@tue.nl}{}{}

\authorrunning{K.~Buchin, M.~L\"offler, T.~Ophelders, A.~Popov, J.~Urhausen, and K.~Verbeek}
\Copyright{Kevin Buchin, Maarten L\"offler, Tim Ophelders, Aleksandr Popov, J\'er\^ome Urhausen, and
Kevin Verbeek}
\begin{CCSXML}
<ccs2012>
    <concept>
        <concept_id>10003752.10010061.10010063</concept_id>
        <concept_desc>Theory of computation~Computational geometry</concept_desc>
        <concept_significance>500</concept_significance>
    </concept>
</ccs2012>
\end{CCSXML}
\ccsdesc[500]{Theory of computation~Computational geometry}
\keywords{Curves, Uncertainty, Fr\'echet Distance, 1D, Hardness, Weak Fr\'echet Distance}
\acknowledgements{Research on the topic of this paper was initiated at the 5th Workshop on Applied
Geometric Algorithms (AGA 2020) in Langbroek, Netherlands.}

\hideLIPIcs
\nolinenumbers

\graphicspath{{../}}

\theoremstyle{definition}
\newtheorem{problem}[theorem]{Problem}

\newcommand*{\dsh}{~--~}
\newcommand*{\fr}{d_\mathrm{F}}
\newcommand*{\dfr}{d_\mathrm{dF}}
\newcommand*{\wfr}{d_\mathrm{wF}}
\newcommand*{\frmin}{\fr^{\,\min}}
\newcommand*{\wfrmin}{\wfr^{\,\min}}
\newcommand*{\dfrmax}{\dfr^{\,\max}}
\newcommand*{\frmax}{\fr^{\,\max}}
\newcommand*{\U}{\mathcal{U}}
\newcommand*{\V}{\mathcal{V}}
\newcommand*{\R}{\mathbb{R}}
\newcommand*{\Reg}{\mathcal{R}}
\newcommand*{\Fd}{\mathcal{F}_{\delta}}
\newcommand*{\dir}{\mathbf{d}}

\newcommand{\pr}{\mathrm{Pr}}
\newcommand{\Ima}{\mathrm{Im}}
\newcommand*{\True}{\textsf{True}}
\newcommand*{\False}{\textsf{False}}
\newcommand*{\concat}{\sqcup}
\newcommand*{\Concat}{\bigsqcup}
\newcommand{\rMono}{\mathit{rm}}

\newcommand{\cost}{\mathrm{cost}}
\newcommand{\grow}[1]{\overrightarrow{#1}}
\newcommand{\rev}[1]{#1^{\raisebox{-2pt}{\(\scriptscriptstyle -1\)}}}
\newcommand*{\ia}[2]{\mathmakebox[\widthof{\(#2\)}]{#1}}

\definecolor{myBlue}{rgb}{0.121, 0.47, 0.705}
\definecolor{myGreen}{rgb}{0.2, 0.627, 0.172}
\definecolor{myYellow}{rgb}{1.000, 0.604, 0.000}
\definecolor{myRed}{rgb}{0.89, 0.102, 0.109}
\definecolor{myPurple}{rgb}{0.415, 0.239, 0.603}

\begin{document}
\maketitle

\begin{abstract}
We consider the problem of computing the Fr\'echet distance between two curves for which the exact
locations of the vertices are unknown.
Each vertex may be placed in a given \emph{uncertainty region} for that vertex, and the objective is
to place vertices so as to minimise the Fr\'echet distance.
This problem was recently shown to be NP-hard in~2D, and it is unclear how to compute an optimal
vertex placement at all.

We present the first general algorithmic framework for this problem.
We prove that it results in a polynomial-time algorithm for curves in~1D with intervals as
uncertainty regions.
In contrast, we show that the problem is NP-hard in~1D in the case that vertices are placed to
maximise the Fr\'echet distance.

We also study the weak Fr\'echet distance between uncertain curves.
While finding the optimal placement of vertices seems more difficult than the regular Fr\'echet
distance\dsh and indeed we can easily prove that the problem is NP-hard in~2D\dsh the optimal
placement of vertices in~1D can be computed in polynomial time.
Finally, we investigate the discrete weak Fr\'echet distance, for which, somewhat surprisingly, the
problem is NP-hard already in~1D.
\end{abstract}

\section{Introduction}
The \emph{Fr\'echet distance} is a popular distance measure for curves.
Its computational complexity has drawn considerable attention in computational
geometry~\cite{agarwal:2014,alt:1995,bringmann:2014,bringmann:2016,buchin:2017,driemel:2012,harpeled:2014}.
The Fr\'echet distance between two (polygonal) curves is often illustrated using a person and a dog:
imagine a person is walking along one curve having the dog, which walks on the other curve, on a
leash.
The person and the dog may change their speed independently but may not walk backwards.
The Fr\'echet distance corresponds to the minimum leash length needed with which the person and the
dog can walk from start to end on their respective curve.

The Fr\'echet distance and its variants have found many applications, for instance, in the context
of protein alignment~\cite{jiang:2008}, handwriting recognition~\cite{zheng:2008}, map
matching~\cite{brakatsoulas:2005} and construction~\cite{ahmed:2012,buchin-mapconstruction:2017}, and
trajectory similarity and clustering~\cite{buchin-klcluster:2019,gudmundsson:2014}.
In most of these applications, we obtain the curves by a sequence of measurements, and these
measurements are inherently imprecise.
However, it is often reasonable to assume that the true location is within a certain radius of the
measurement, or more generally that it stays within an \emph{uncertainty region}.

Re-imagine the person and the dog, except now each is given a sequence of regions they have to visit.
More specifically, they need to visit one location per region and move on a straight line between
locations without going backwards.
Suppose they need to minimise the leash length.
This corresponds to the following problem.
Each curve is given by a sequence of uncertainty regions; we minimise the Fr\'echet distance over
all possible choices of locations in the regions.
This is called the \emph{lower bound} problem for the Fr\'echet distance between uncertain curves.

Similar problems involving uncertainty have drawn more and more attention in the past few years in
computational geometry.
Most results are on uncertain point sets, where we often aim to minimise or maximise some quantity
stemming from the point set, but also perform visibility queries in polygons or find Delaunay
triangulations~\cite{abellanas:2001,buchin:2011,fan:2013,knauer:2011,loeffler:2009,loeffler:2014,
loeffler:2010,loeffler:2006,kreveld:2010}.
More recently there have also been several results on curves with
uncertainty~\cite{ahn:2012,icalp,sijben_dfd,fan:2018}.

The earliest results for a variant of the problem we consider do not concern the Fr\'echet distance as
such, but its variant the discrete Fr\'echet distance, where we restrict our attention to the
vertices of the curves.
Ahn et al.~\cite{ahn:2012} show a polynomial-time algorithm that decides whether the lower bound
discrete Fr\'echet distance is below a certain threshold, for two curves with uncertainty regions
modelled as circles in constant dimension.
The lower bound Fr\'echet distance with uncertainty regions modelled as point sets admits a simple
dynamic program~\cite{icalp}.
However, as has been recently shown, the decision problem for the continuous Fr\'echet distance is
NP-hard already in two dimensions with vertical line segments as uncertainty regions and one precise
and one uncertain curve~\cite{icalp}; it is not even clear how to compute the lower bound at all
with any uncertainty model that is not discrete.
We present a general algorithmic framework for computing the lower bound Fr\'echet distance that can
be instantiated in many settings.
In the general 2D~setting, this gives an exponential-time algorithm; we turn our attention to curves
in~1D.
We instantiate our framework in~1D and show that it results in an efficient algorithm for
imprecision modelled as intervals.

Next to the discrete Fr\'echet distance, the most common variant of the Fr\'echet distance is the
weak Fr\'echet distance~\cite{alt:1995}.
In the person--dog analogy, this variant allows backtracking on the paths.
The weak Fr\'echet distance (for certain curves) has interesting properties
in~1D~\cite{dog_difficult,buchin:2019,harpeled:2014}: it can be computed in linear time in~1D, while
in~2D it cannot be computed significantly faster than quadratic time under the strong
exponential-time hypothesis.
To our knowledge, the weak Fr\'echet distance has not been studied in the uncertain setting before.
We give a polynomial-time algorithm that solves the lower bound problem in~1D.
In contrast to that, we show that the problem is NP-hard in~2D, and that discrete weak Fr\'echet
distance is NP-hard already in~1D.
We summarise these results in \cref{tab:complexity}.

\begin{table}
\centering
\caption{Complexity results for the lower bound problems for uncertain curves.}
\label{tab:complexity}
\begin{tabular}{l c c c c}
\toprule
   &\multicolumn{2}{c}{Fr\'echet distance}             & \multicolumn{2}{c}{Weak Fr\'echet distance}\\
   & discrete                   & continuous           & discrete & continuous\\
\midrule
1D & polynomial~\cite{ahn:2012} & polynomial           & NP-hard  & polynomial\\
2D & polynomial~\cite{ahn:2012} & NP-hard~\cite{icalp} & NP-hard  & NP-hard\\
\bottomrule
\end{tabular}
\end{table}

The table provides an interesting insight.
First of all, it appears that for continuous distances the dimension matters, whereas for the
discrete ones the results are the same both in~1D and~2D.
Moreover, it may be surprising that discretising the problem has a different effect: for the
Fr\'echet distance it makes it easier, while for the weak Fr\'echet distance the problem becomes
harder.
We discuss the polynomial-time algorithm for Fr\'echet distance in~1D in \cref{sec:lb1d}.
We give the algorithm for weak Fr\'echet distance in~1D in \cref{sec:w1dalg}.
We show the NP-hardness constructions for the weak (discrete) Fr\'echet distance in
\cref{sec:wdhard}.

Finally, we also turn our attention to the problem of maximising the Fr\'echet distance, or finding
the upper bound.
It has been shown that the problem is NP-hard in~2D for several uncertainty models, including
discrete point sets, both for discrete and continuous Fr\'echet distance~\cite{icalp}.
We strengthen that result by presenting a similar construction that already shows NP-hardness in~1D.
The proof is given in \cref{sec:ubfr}.

\section{Preliminaries}
Denote \([n] \equiv \{1, 2, \dots, n\}\).
Consider a \emph{sequence} of points \(\pi = \langle p_1, p_2, \dots, p_n\rangle\).
We also use \(\pi\) to denote a \emph{polygonal curve,} defined by the sequence by linearly
interpolating between the points and can be seen as a continuous function:
\(\pi(i + \alpha) = (1 - \alpha) p_i + \alpha p_{i + 1}\) for \(i \in [n - 1]\) and
\(\alpha \in [0, 1]\).
The \emph{length} of such a curve is the number of its vertices, \(\lvert \pi\rvert = n\).
Denote the concatenation of two sequences \(\pi\) and \(\sigma\) of lengths \(n\) and \(m\)
by \(\pi \concat \sigma\): the result consists of \(\pi\), then \(\sigma\).
We can generalise this notation:
\[\pi \equiv \Concat_{i \in [n]} p_i = p_1 \concat p_2 \concat \dots \concat p_n\,.\]
Denote a subcurve from vertex \(i\) to \(j\) of \(\pi\) as
\(\pi[i: j] = p_i \concat p_{i + 1} \concat \dots \concat p_j\).
Occasionally we use the notation \(\langle\pi(i) \mid i \in I\rangle_{i = 1}^m\) to denote a curve
built on a subsequence of vertices of \(\pi\), where vertices are only taken if they are in set
\(I\).
For example, setting \(I = \{1, 3, 4\}\), \(m = 5\), \(\pi = \langle p_1, p_2, \dots, p_5\rangle\)
means \(\langle\pi(i) \mid i \in I\rangle_{i = 1}^m = \langle p_1, p_3, p_4\rangle\).

Denote the Fr\'echet distance between two polygonal curves \(\pi\) and \(\sigma\) by
\(\fr(\pi, \sigma)\), the discrete Fr\'echet distance by \(\dfr(\pi, \sigma)\), and the
weak Fr\'echet distance by \(\wfr(\pi, \sigma)\).
Recall the definition of Fr\'echet distance for polygonal curves of lengths \(m\) and \(n\).
It is based on \emph{parametrisations} (non-decreasing surjections) \(\alpha\) and \(\beta\) with
\(\alpha \colon [0, 1] \to [1, m]\), \(\beta \colon [0, 1] \to [1, n]\).
Parametrisations establish a \emph{matching}.
Denote the cost of a matching \(\mu = (\alpha, \beta)\) as
\(\cost_\mu(\pi, \sigma) = \max_{t \in [0, 1]} \lVert\pi\circ\alpha(t) - \sigma\circ\beta(t)\rVert\).
Then we can define Fr\'echet distance and its variants as
\[\fr(\pi, \sigma) = \inf_{\text{matching } \mu} \cost_\mu(\pi, \sigma)\,,\quad
\fr(\pi, \sigma) = \inf_{\text{discrete matching } \mu} \cost_\mu(\pi, \sigma)\,,\]
\[\wfr(\pi, \sigma) = \inf_{\text{weak matching }\mu} \cost_\mu(\pi, \sigma)\,.\]
The discrete matching is restricted to vertices, and the weak matching is not a pair of
parametrisations, but a path \((\alpha, \beta) \colon [0, 1]^2 \to [1, m] \times [1, n]\), with
\(\alpha(0) = 1, \alpha(1) = m\) and \(\beta(0) = 1, \beta(1) = n\).
In the person--dog analogy for the Fr\'echet distance, the best choice of parametrisations means
that the person and the dog choose the best speed, and the leash length is then the largest needed
leash length during the walk.

An \emph{uncertain} point in one dimension is a set of real numbers \(u \subseteq \R\).
The intuition is that only one point from this set represents the true location of the point;
however, we do not know which one.
A \emph{realisation} \(p\) of such a point is one of the points from \(u\).
In this paper we consider two special cases of uncertain points.
An \emph{indecisive} point is a finite set of numbers \(u =
\{x_1, \dots, x_\ell\}\).
An \emph{imprecise} point is a closed interval \(u = [x_1, x_2]\).
Note that a precise point is a special case of both indecisive and imprecise points.

Define an \emph{uncertain curve} as a sequence of uncertain points
\(\U = \langle u_1, \dots, u_n\rangle\).
A \emph{realisation} \(\pi \Subset \U\) of an uncertain curve is a polygonal curve
\(\pi = \langle p_1, \dots, p_n\rangle\), where each \(p_i\) is a realisation of the uncertain
point \(u_i\).
For uncertain curves \(\U\) and \(\V\), define the lower bound and upper bound Fr\'echet
distance.
The discrete and weak Fr\'echet distance are defined similarly.
\[\frmin(\U, \V) = \min_{\pi \Subset \U, \sigma \Subset \V} \fr(\pi, \sigma)\,,\qquad
\frmax(\U, \V) = \max_{\pi \Subset \U, \sigma \Subset \V} \fr(\pi, \sigma)\,.\]

\section{Lower Bound Fr\'echet Distance: General Approach}\label{sec:lbdd}
In this section, we consider the following decision problem.
\begin{problem}
Given two uncertain curves \(\U = \langle u_1, \dots, u_m\rangle\) and
\(\V = \langle v_1, \dots, v_n\rangle\) in \(Y = \R^d\) for some \(d, m, n \in \mathbb{N}^+\) and a
threshold \(\delta > 0\), decide if \(\frmin(\U, \V) \leq \delta\).
\end{problem}
Note that this problem formulation is general both in terms of the shape of uncertainty regions and
the dimension of the problem.
We propose an algorithmic framework that solves this problem.
As has been shown previously~\cite{icalp}, the problem is NP-hard in~2D for vertical line segments
as uncertainty regions, but admits a simple dynamic program for indecisive points in~2D.
So, in many uncertainty models, especially in higher dimensions, the following approach will not
result in an efficient algorithm.
However, our approach is general in that it can be instantiated in restricted settings, e.g.\@ in~2D
assuming that the segments of the curves can only be horizontal or vertical.
The inherent complexity of the problem appears to be related to the number of directions to
consider, with the infinite number in~2D without restrictions and two directions in~1D).
We conjecture that in this restricted setting the approach yields a polynomial-time algorithm;
verifying this and making a more general statement delineating the hardness of restricted settings
are both interesting open problems.
Our approach shows a straightforward way to engineer an algorithm for various restricted settings
in arbitrary dimension, but we cannot make any statements about its efficiency in most settings.
To illustrate the approach, we instantiate it in~1D and analyse its efficiency in \cref{sec:lb1d}.
The interested reader might refer to that section for a more intuitive explanation of the approach.

First we introduce some extra notation.
For \(i \in [m]\), denote \(\U_i = \langle u_1, \dots, u_i\rangle\) and
\(\U_i^* = \langle u_1, \dots, u_i, Y\rangle\).
We call \(\U_i\) and \(\U_i^*\) the \emph{subcurve} and the \emph{free subcurve} of \(\U\) at \(i\),
respectively.
Intuitively, a realisation of \(\U_i^*\) extends a realisation of \(\U_i\) by a single edge whose
final vertex position is unrestricted.
Let \(S := S^{d - 1}\) be the unit \((d - 1)\)-sphere.
Denote the \emph{direction} of the \(i\)-th edge \(\pi[i, i + 1]\) of a realisation \(\pi\) by
\(\dir_i(\pi) \subseteq S\).
For example, in~1D there are only two options, in~2D the directions can be picked from a unit
circle, in~3D from a unit sphere, etc.
In the degenerate case where the edge has length \(0\) (or \(\pi\) has no \(i\)-th edge), let
\(\dir_i(\pi) = S\).

We want to find realisations \(\pi \Subset \U\) and \(\sigma \Subset \V\) such that \(\pi\) and
\(\sigma\) have Fr\'echet distance at most \(\delta\).
Call such a pair \((\pi, \sigma)\) a \(\delta\)-realisation of \((\U, \V)\).
Recall that two polygonal curves \(\pi \colon [1, i] \to Y\) and \(\sigma \colon [1, j] \to Y\) have
Fr\'echet distance \(\fr(\pi, \sigma)\) at most \(\delta\) if and only if there exist
\emph{parametrisations} (non-decreasing surjections) \(\alpha \colon [0, 1] \to [1, i]\) and
\(\beta \colon [0, 1] \to [1, j]\) such that the path \((\pi \circ \alpha, \sigma \circ \beta)\)
lies in the \(\delta\)-free space
\(\Fd = \{(p, q) \in Y \times Y \mid \lVert p - q\rVert \leq \delta\}\).
For \(\delta\)-close (free) subcurves of \(\U\) at \(i\) and \(\V\) at \(j\), we capture their pairs
of endpoints and final directions using \(\Reg_{i, j}, \Reg_{i, j^*}, \Reg_{i^*, j},
\Reg_{i^*, j^*} \subseteq Y \times Y \times S \times S\):
\begin{align*}
\Reg_{i, j} &= \{(\pi(i)\hphantom{{}\mathbin{+} 1}, \sigma(j)\hphantom{{}\mathbin{+} 1}, s, t)
	\mid \pi \Subset \ia{\U_i}{\U_i^*}, \sigma \Subset \ia{\V_j}{\V_j^*},
    s \in \dir_i(\pi), t \in \dir_j(\sigma), \fr(\pi, \sigma) \leq \delta\},\\
\Reg_{i, j^*} &= \{(\pi(i)\hphantom{{}\mathbin{+} 1}, \sigma(j + 1), s, t)
	\mid \pi \Subset \ia{\U_i}{\U_i^*}, \sigma \Subset \V_j^*,
    s \in \dir_i(\pi), t \in \dir_j(\sigma), \fr(\pi, \sigma) \leq \delta\},\\
\Reg_{i^*, j} &= \{(\pi(i + 1), \sigma(j)\hphantom{{}\mathbin{+} 1}, s, t)
	\mid \pi \Subset \U_i^*, \sigma \Subset \ia{\V_j}{\V_j^*},
    s \in \dir_i(\pi), t \in \dir_j(\sigma), \fr(\pi, \sigma) \leq \delta\},\\
\Reg_{i^*, j^*} &= \{(\pi(i + 1), \sigma(j + 1), s, t)
	\mid \pi \Subset \U_i^*, \sigma \Subset \V_j^*,
    s \in \dir_i(\pi), t \in \dir_j(\sigma), \fr(\pi, \sigma) \leq \delta\}.
\end{align*}
Note that for \(\pi \Subset \U_{i}\), \(i\) is the final vertex, so \(\dir_i(\pi) = S\).
Therefore, \(\Reg_{i, j}\) captures the reachable subset of \(Y \times Y\) for the realisations of
the last points of the prefixes, and the two other dimensions contain all points from \(S\) to
capture that we may proceed in any allowed direction.
The set \(\Reg_{i^*, j}\) captures the reachable subset of \(Y \times Y\) for the point in the
parametrisation where we are between vertices \(i\) and \(i + 1\) on \(\U\) and at \(j\) on \(\V\);
we have not restricted the range to \(u_{i + 1}\) yet.
The allowed directions for parameter \(s\) now depend on how we reached this point in the
parametrisation, since segments connecting realisations are straight line segments, and the
direction needs to be kept consistent once chosen.
From this description the reader can deduce what the other sets capture by symmetry.
See also \cref{fig:cells}, where the sets are positioned as in a regular free-space diagram,
replacing the edges, vertices, and cells.

\begin{figure}
\begin{minipage}[t]{.5\linewidth}
\centering
\begin{tikzpicture}[scale=1.6]
\draw[help lines] (1.5, 1.5) grid[step=2] (4.5, 4.5);
\node[fill=white] at (2, 2) {\(\Reg_{i, j}\)};
\node[fill=white] at (4, 2) {\(\Reg_{i + 1, j}\)};
\node[fill=white] at (2, 4) {\(\Reg_{i, j + 1}\)};
\node[fill=white] at (4, 4) {\(\Reg_{i + 1, j + 1}\)};
\node[fill=white] at (3, 2) {\(\Reg_{i^*, j}\)};
\node[fill=white] at (3, 4) {\(\Reg_{i^*, j + 1}\)};
\node[fill=white] at (2, 3) {\(\Reg_{i, j^*}\)};
\node[fill=white] at (4, 3) {\(\Reg_{i + 1, j^*}\)};
\node at (3, 3) {\(\Reg_{i^*, j^*}\)};
\end{tikzpicture}
\subcaption{The sets on a cell of a regular free-space diagram.}
\label{fig:cells}
\end{minipage}%
\begin{minipage}[t]{.5\linewidth}
\centering
\begin{tikzcd}[column sep=small,ampersand replacement=\&,baseline=(current bounding box.south)]
    \&                             \&\ar[d]              \Reg_{i^*, j + 1} \&\ar[dl] \Reg_{i + 1, j + 1}\\
\!\!\&\ar[l,dotted]  \Reg_{i, j^*} \&\ar[l]\ar[d]\ar[dl] \Reg_{i^*, j^*}   \&\ar[l]  \Reg_{i + 1, j^*}\\
    \&\ar[dl,dotted] \Reg_{i, j}   \&\ar[d,dotted]       \Reg_{i^*, j}     \&\\
~   \&~\&~\&
\end{tikzcd}
\subcaption{Dependencies of the dynamic program. \(a \to b\) means that \(a\) depends on \(b\).}
\label{fig:deps_reg}
\end{minipage}
\caption{Illustration for the dynamic program of \cref{lem:dp}.}
\label{fig:cells_deps}
\end{figure}
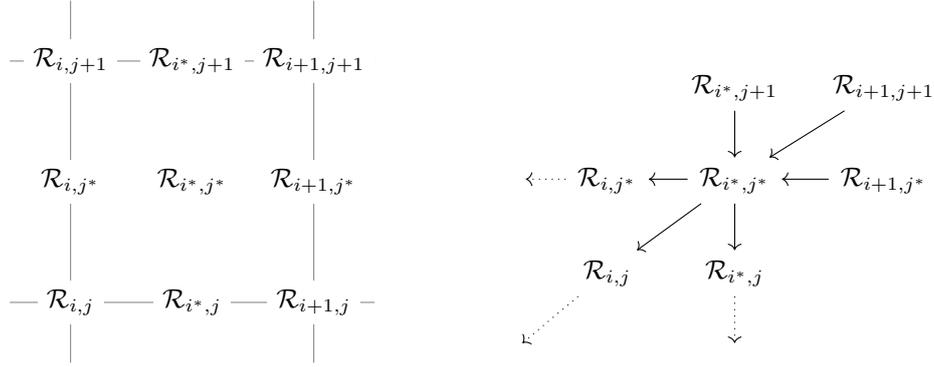

To solve the decision problem, we must decide whether \(\Reg_{m, n}\) is non-empty.
If so, then there are realisations of \(\U_m \equiv \U\) and \(\V_n \equiv \V\) that are close
enough in terms of the Fr\'echet distance.
We compute \(\Reg_{\cdot, \cdot}\) using dynamic programming.
We illustrate the propagation dependencies in \cref{fig:deps_reg} and make them explicit in
\cref{lem:dp}.

\begin{lemma}\label{lem:dp}
Let \(\star(A) := \{(p + \lambda s, q + \mu t, s, t) \mid
(p, q, s, t) \in A \text{ and } \lambda, \mu \geq 0\}\).
We have
\begin{alignat*}{2}
&\Reg_{\cdot, 0} = \Reg_{0, \cdot} &&= \emptyset\,,\\
&\Reg_{i + 1, j^*} &&= \{(p, q, s, t) \in u_{i + 1} \times \ia Y {v_{j + 1}} \times S \times S
    \mid (p, q, \ia\cdot{s}, t) \in \Reg_{i^*, j^*}\}\,,\\
&\Reg_{i^*, j + 1} &&= \{(p, q, s, t) \in \ia Y {u_{i + 1}} \times v_{j + 1} \times S \times S
    \mid (p, q, s, \ia\cdot{t}) \in \Reg_{i^*, j^*}\}\,,\\
&\Reg_{i + 1, j + 1} &&= \{(p, q, s, t) \in u_{i + 1} \times v_{j + 1} \times S \times S
    \mid (p, q, \ia\cdot{s}, \ia\cdot{t}) \in \Reg_{i^*, j^*}\}\,,\\
&\Reg_{0^*, 0^*} &&= \Fd \times S \times S\,,\\
&\Reg_{i^*, j^*}
&&= (\Fd \times S \times S) \cap \star(\Reg_{i, j} \cup \Reg_{i^*, j} \cup \Reg_{i, j^*})
\quad\text{for \(i > 0\) or \(j > 0\).}
\end{alignat*}
\end{lemma}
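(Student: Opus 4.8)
The plan is to verify each of the recurrences by a direct set-theoretic argument, matching elements of the left-hand side with certificates for the right-hand side and vice versa; the underlying principle in every case is that a parametrisation witnessing $\fr(\pi,\sigma)\le\delta$ can be decomposed at the moment it last leaves (or first enters) a given column or row of the free-space diagram. I would treat the cases in the order they are listed, since the later ones reuse the reasoning of the earlier ones.

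First, the base cases. The identities $\Reg_{\cdot,0}=\Reg_{0,\cdot}=\emptyset$ hold because there is no curve of length $0$: $\U_0$ and $\V_0$ are empty, so there are no realisations to take endpoints of, and the defining set is vacuously empty. For $\Reg_{0^*,0^*}$, a realisation of $\U_0^*=\langle Y\rangle$ is a single point $p\in Y$ (a one-vertex curve), similarly $\sigma$ is a single point $q$, and $\fr(\pi,\sigma)=\lVert p-q\rVert\le\delta$ iff $(p,q)\in\Fd$; the direction of a degenerate (single-vertex) edge is all of $S$ by definition, giving exactly $\Fd\times S\times S$.

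Next, the three ``projection'' recurrences, e.g.\ the one for $\Reg_{i+1,j^*}$. Here I would argue both inclusions. For $\subseteq$: given $(p,q,s,t)\in\Reg_{i+1,j^*}$, there are $\pi\Subset\U_{i+1}$ (or $\U_{i+1}^*$) and $\sigma\Subset\V_j^*$ with $\pi(i+1)=p\in u_{i+1}$, $\sigma(j+1)=q\in v_{j+1}$ (or, for the starred case, $q\in Y$), direction constraints, and $\fr(\pi,\sigma)\le\delta$. Truncating $\pi$ to $\U_i^*$ by dropping its last vertex constraint but keeping the same point position $p$ as the free final vertex shows $(p,q,s',t)\in\Reg_{i^*,j^*}$ for the direction $s'$ along the $i$-th edge of $\pi$; the key point is that the position $p$ is unchanged, so membership in $\Fd$ of the pair of endpoints is preserved, and the same parametrisation witnesses the Fr\'echet bound. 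This is why the recurrence only constrains the \emph{position} $p$ to lie in $u_{i+1}$ and throws away the incoming direction (the $\cdot$ in the $s$-slot). For $\supseteq$: given $(p,q,s,t)$ with $p\in u_{i+1}$, $q\in v_{j+1}$ (resp.\ $Y$), and $(p,q,\cdot,t)\in\Reg_{i^*,j^*}$, the witnessing realisation of $\U_i^*$ has final vertex exactly at $p$; since $p\in u_{i+1}$, this same curve is a legal realisation of $\U_{i+1}$ with $\pi(i+1)=p$, and appending nothing more is needed. The cases $\Reg_{i^*,j+1}$ and $\Reg_{i+1,j+1}$ are symmetric / simultaneous versions of this.

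Finally, and this is the main obstacle, the recurrence for $\Reg_{i^*,j^*}$ with $i>0$ or $j>0$. For $\subseteq$: take $(p,q,s,t)\in\Reg_{i^*,j^*}$ with witnesses $\pi\Subset\U_i^*$, $\sigma\Subset\V_j^*$ and a matching $\mu=(\alpha,\beta)$ of cost $\le\delta$. Clearly $(p,q)\in\Fd$ since it is the pair of final endpoints. It remains to write $(p,q,s,t)$ in the form $(p_0+\lambda s,\,q_0+\mu t,\,s,t)$ with $\lambda,\mu\ge0$ and $(p_0,q_0,s,t)$ in one of $\Reg_{i,j}$, $\Reg_{i^*,j}$, $\Reg_{i,j^*}$. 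The idea is to look at the last parameter value where the matching is at the start of the $i$-th edge of $\pi$ and/or the $j$-th edge of $\sigma$: let $p_0$ be the position of vertex $i$ of $\pi$ (if $i>0$) and $q_0$ the position of vertex $j$ of $\sigma$ (if $j>0$), so that $p=p_0+\lambda s$ and $q=q_0+\mu t$ for the chosen edge directions and nonnegative scalars. Restricting $\pi,\sigma,\mu$ to the prefixes up to those vertices gives realisations of $\U_i$ (or $\U_i^*$, when we stop partway — but here we stop exactly at vertex $i$, so $\U_i$) and $\V_j$ with Fr\'echet distance $\le\delta$, hence $(p_0,q_0,s,t)$ lies in $\Reg_{i,j}$; when one index is $0$ we land in $\Reg_{i^*,j}$ or $\Reg_{i,j^*}$ instead, which is why all three appear in the union and why the ``$0^*$'' case is separated out. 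The subtlety to handle carefully is the direction bookkeeping: the edge directions $s,t$ recorded in the prefix region must be exactly the directions used to extend to $p,q$, which is guaranteed because the $i$-th edge of $\pi$ is a single straight segment and its direction is $\dir_i(\pi)$ throughout. For $\supseteq$: given $(p,q,s,t)$ in $\Fd\times S\times S$ of the form $(p_0+\lambda s,q_0+\mu t,s,t)$ with $(p_0,q_0,s,t)$ in one of the three prefix regions, take the prefix realisations and extend each by moving from $p_0$ to $p$ along direction $s$ and from $q_0$ to $q$ along direction $t$ — legal because these are the recorded directions — then append to the matching the segment that stays at the new endpoints; the appended portion has cost $\lVert p-q\rVert\le\delta$ since $(p,q)\in\Fd$, so the extended matching still has cost $\le\delta$, giving $(p,q,s,t)\in\Reg_{i^*,j^*}$. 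Throughout I would lean on the standard fact (used implicitly in the free-space formulation recalled just before the lemma) that a monotone path through the free space can be cut at any vertical or horizontal grid line and reassembled, so that the continuous-Fr\'echet bound behaves well under prefix decomposition.
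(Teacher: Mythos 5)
Your base cases and the three projection recurrences are argued essentially as in the paper, but there is a genuine gap in the hard direction of the last recurrence, namely the inclusion \(\Reg_{i^*,j^*} \subseteq (\Fd\times S\times S)\cap\star(\Reg_{i,j}\cup\Reg_{i^*,j}\cup\Reg_{i,j^*})\). You claim that for \(i,j>0\) one can always cut the witnessing matching at the vertex pair \((i,j)\), i.e.\ that the prefixes \(\pi[1{:}i]\) and \(\sigma[1{:}j]\) have Fr\'echet distance at most \(\delta\) and hence \((p_0,q_0,s,t)\in\Reg_{i,j}\), with \(\Reg_{i^*,j}\) and \(\Reg_{i,j^*}\) needed only when \(i=0\) or \(j=0\). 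This is false: a monotone matching need not ever be simultaneously at vertex \(i\) of \(\pi\) and vertex \(j\) of \(\sigma\); in free-space terms it can enter the last cell through its left edge (\(\alpha(x)=i\), \(\beta(x)>j\)) or its bottom edge (\(\alpha(x)>i\), \(\beta(x)=j\)) without passing through the corner. Concretely, in 1D take \(\pi^*=\langle 0,0,10\rangle\) and \(\sigma^*=\langle 0,10,10\rangle\) (so \(i=j=2\)): then \(\fr(\pi^*,\sigma^*)=0\), yet \(\fr(\pi^*[1{:}2],\sigma^*[1{:}2])=10\), so your proposed prefix pair is not a witness for any small \(\delta\); the correct decomposition lands in \(\Reg_{i^*,j}\) even though \(i,j>0\). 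The paper's proof handles exactly this by a three-way case analysis on the existence of \(x\) with \(\alpha(x)=i,\beta(x)=j\), or \(\alpha(x)>i,\beta(x)=j\), or \(\alpha(x)=i,\beta(x)>j\), truncating to realisations of \((\U_i,\V_j)\), \((\U_i^*,\V_j)\), or \((\U_i,\V_j^*)\) respectively; that case analysis is the missing idea, and it is the reason all three regions appear in the union for all \(i,j\), not only at the boundary.

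A smaller imprecision in your reverse inclusion: you bound the cost of the appended portion of the matching by \(\lVert p-q\rVert\le\delta\), but that portion traverses the whole extension from \((p_0,q_0)\) to \((p,q)\), so you must argue every intermediate pair stays within distance \(\delta\); this follows since both endpoints lie in \(\Fd\) and \(\Fd\) is convex, which is how the paper argues it. With that fix, and with the handling of the case where the prefix ends at a vertex (so a new edge in an arbitrary direction may be added), that direction of your argument is sound.
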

\begin{proof}
The first equation holds because the empty function has no parametrisation, so the Fr\'echet
distance of any pair of realisations is infinite.
The equation for \(\Reg_{i + 1, j^*}\) holds because for \(\pi \Subset \U_{i + 1}\),
\(\dir_{i + 1}(\pi) = S\), and the only additional constraint that a realisation of \(\U_{i + 1}\) has
over one of \(\U_i^*\) is that the final vertex lies in \(u_{i + 1}\).
Using symmetric properties on \(\V\), we obtain the equations for \(\Reg_{i^*, j + 1}\) and
\(\Reg_{i + 1, j + 1}\).
The equation for \(\Reg_{0^*, 0^*}\) concerns curves \(\pi\) and \(\sigma\) consisting of a single
vertex, so \(\dir_0(\pi) = \dir_0(\sigma) = S\), and \(\fr(\pi, \sigma) \leq \delta\) if and only if
\((\pi(1), \sigma(1)) \in \Fd\).
The equation for \(\Reg_{i^*, j^*}\) remains.
First we show that the right-hand side is contained in \(\Reg_{i^*, j^*}\).
Suppose that \(\pi\) and \(\sigma\) form a witness for
\((p, q, s, t) \in \Reg_{i, j} \cup \Reg_{i^*, j} \cup \Reg_{i, j^*}\).
We obtain realisations \(\pi^* \Subset \U_i^*\) and \(\sigma^* \Subset \V_j^*\) by extending the
last edge of \(\pi\) and \(\sigma\) in the direction it is already going (or adding a new edge in an
arbitrary direction if \(\pi \Subset \U_i\) or \(\sigma \Subset \V_j\)), to
\((p + \lambda s, q + \mu t)\).
If \((p + \lambda s, q + \mu t) \in \Fd\), then, by convexity of \(\Fd\), the extensions of the last
edges have Fr\'echet distance at most \(\delta\) (since the points at which the extension starts
have distance at most \(\delta\)), so \((p + \lambda s, q + \mu t, s, t) \in \Reg_{i^*, j^*}\).
Conversely, we show that the right-hand side contains \(\Reg_{i^*, j^*}\).
Let \(\pi^* \Subset \U_i^*\) and \(\sigma^* \Subset \V_j^*\) together with parametrisations
\(\alpha \colon [0, 1] \to [1, i + 1]\) and \(\beta \colon [0, 1] \to [1, j + 1]\) form a witness that
\((p, q, s, t) \in \Reg_{i^*, j^*}\).
Then, for any \(x \in [0, 1]\), the restrictions \(\pi_x\) of \(\pi^*\) and \(\sigma_x\) of
\(\sigma^*\) to the domains \([1, \alpha(x)]\) and \([1, \beta(x)]\) have Fr\'echet distance at most
\(\delta\).
Because \(\alpha\) and \(\beta\) are non-decreasing surjections, whenever \(i > 0\) or \(j > 0\),
there exists some \(x\) such that 
\begin{enumerate}
    \item \(\alpha(x) = i\) and \(\beta(x) = j\), in which case
    \(\pi_x \Subset \U_i\) and \(\sigma_x \Subset \V_j\), or
    \item \(\alpha(x) > i\) and \(\beta(x) = j\), in which case \(\pi_x \Subset \U_i^*\) and
    \(\sigma_x \Subset \V_j\), or
    \item \(\alpha(x) = i\) and \(\beta(x) > j\), in which case
    \(\pi_x \Subset \U_i\) and \(\sigma_x \Subset \V_j^*\). 
\end{enumerate}
Note that if \(i = 0\), only the second case applies, and if \(j = 0\), only the third case applies.
In each case, the last edge of \(\pi^*\) and \(\sigma^*\) extends the \(i\)-th and \(j\)-th edge of
\(\pi_x\) and \(\sigma_x\), respectively.
So \((\pi_x, \sigma_x)\) forms a witness that \((p, q, s, t)\) is contained in the right-hand side.
\end{proof}

\subparagraph{Simplifying the approach.}
Due to their dimension, the above sets can be impractical to work with.
However, for the majority of these sets, at least one of the factors \(S\) carries no additional
information, as formulated below.
Denote by \(\pr_c\) the projection map of the \(c\)-th component, so that
\(\pr_1 \colon (p, q, s, t) \mapsto p\), and in general
\(\pr_{c_1, \dots, c_k}(x) = (\pr_{c_1}(x), \dots, \pr_{c_k}(x))\).
The equations of \cref{lem:dp} imply the equivalences
\begin{align*}
(p, q, s, t) \in \Reg_{i, j\hphantom{^*}}   &\iff \hphantom{~t,}(p, q)\in \pr_{1, 2}(\Reg_{i, j})\,,\\
(p, q, s, t) \in \Reg_{i^*, j} &\iff (p, q, s) \in \pr_{1, 2, 3}(\Reg_{i^*, j})\,,\\
(p, q, s, t) \in \Reg_{i, j^*} &\iff (p, q, t) \in \pr_{1, 2, 4}(\Reg_{i, j^*})\,.
\end{align*}
Consequently, to find \(\Reg_{i, j}\), \(\Reg_{i^*, j}\), and \(\Reg_{i, j^*}\), it suffices to
compute the projections above.
This simplifies the prior dependencies as shown in \cref{fig:deps_proj}.
\begin{align*}
\pr_{1, 2}(\Reg_{i + 1, j + 1}) &= (u_{i + 1} \times v_{j + 1}) \cap \pr_{1, 2}(\Reg_{i^*, j^*})\\
&= (u_{i + 1} \times \ia Y {v_{j + 1}}) \cap \pr_{1, 2}(\Reg_{i^*, j + 1})\\
&= (\ia Y {u_{i + 1}} \times v_{j + 1}) \cap \pr_{1, 2}(\Reg_{i + 1, j^*})\,.
\end{align*}

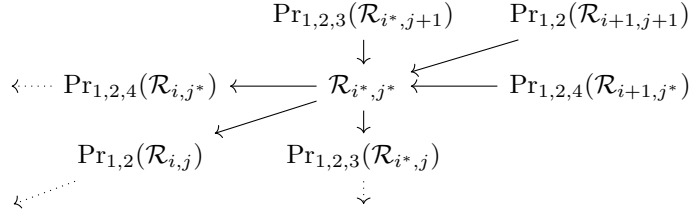
\begin{figure}
\centering
\begin{tikzcd}[column sep=small, row sep=small]
    &                                           &\ar[d] \pr_{1, 2, 3}(\Reg_{i^*, j + 1})    &\ar[dl] \pr_{1, 2}(\Reg_{i + 1, j + 1})\\
\!\!&\ar[l,dotted] \pr_{1, 2, 4}(\Reg_{i, j^*}) &\ar[l]\ar[d]\ar[dl] \Reg_{i^*, j^*}        &\ar[l] \pr_{1, 2, 4}(\Reg_{i + 1, j^*})\\
\!\!&\ar[dl,dotted] \pr_{1, 2}(\Reg_{i, j})     &\ar[d,dotted] \pr_{1, 2, 3}(\Reg_{i^*, j}) &\\
~   &~&~&
\end{tikzcd}
\caption{Simplified dependencies with projections as follows from \cref{lem:dp}.}
\label{fig:deps_proj}
\end{figure}

\subparagraph{Instantiating the approach.}
The dynamic program of \cref{lem:dp} can naturally be adapted to constrained realisations whose edge
directions are to be drawn from a subset \(S' \subseteq S^{d - 1}\), by replacing \(S\) by \(S'\),
so the framework can be used for restricted settings in~2D.
For \(S = S^{d - 1}\) the complexity of \(R_{i, j}\) can be exponential, so it can be useful to
restrict the problem.

We can look at the construction used to prove NP-hardness of the problem in~2D~\cite{icalp} as an
example for our approach.
There the curve \(\V\) is precise, so each \(v_j\) is a single point and each \(t\) is predetermined,
and curve \(\U\) consists of uncertainty regions that are vertical line segments, so each \(u_i\)
has a fixed \(x\)-coordinate and a range of \(y\)-coordinates.
If we now exclude the fixed values from our propagation, we get to track pairs \((y, s)\) of the
feasible \(y\)-coordinates on the current interval and the directions.
We start with a single region.
The hardness construction uses gadgets on the precise curve to force the uncertain curve to go
through certain points.
In our approach, this means that we keep restricting the set of feasible directions while passing
by vertices on \(\V\), and eventually each point in the starting region gives rise to two disjoint
reachable points on one of the following uncertainty regions.
So we can use our algorithm to correctly track the feasible \(y\)-coordinates through the
construction; however, we would need to keep track of regions of exponential complexity, which is,
predictably, inefficient.
Therefore, it is important to analyse the complexity of the propagated regions to determine
whether our approach gives rise to an efficient algorithm.
To illustrate our approach, we use it in the~1D case to devise an efficient algorithm in
\cref{sec:lb1d}.

\section{Lower Bound Fr\'echet Distance: One Dimension}\label{sec:lb1d}
In this section we instantiate the approach of \cref{sec:lbdd} in~1D and analyse its efficiency.
We first show the formal definitions that result from this process, and then give some intuition for
how the resulting algorithm works in~1D.

In our case, \(S = S^0\), so there are only two directions: positive \(x\)-direction and negative
\(x\)-direction.
We make use of the projections interpretation and split the projections into two regions based on
the value of the relevant direction; then all the regions we maintain are in \(\R^2\) and have a
geometric interpretation as feasible combinations of realisations of the last uncertain points on
the prefixes of the curves.
We omit \(\Reg_{i, j}\) from our computations except for checking whether \(\Reg_{m, n}\) is
non-empty.
As follows from the definition of the sets, \(\Reg_{i, j} \subseteq \Reg_{i^*, j}\) and
\(\Reg_{i, j} \subseteq \Reg_{i, j^*}\), so we can simplify the computation of
\(\Reg_{i^*, j^*}\), and then we do not need the explicit computation of \(\Reg_{i, j}\).
Furthermore, we do not compute any of \(\Reg_{i^*, j^*}\) explicitly, opting instead to substitute
them into the relevant expressions.
Therefore, we maintain the sets \(\Reg_{i, j^*}\) and \(\Reg_{i^*, j}\), splitting each into two
based on the relevant direction.
Based on our earlier free space cell interpretation (see \cref{fig:cells}), call the directions
along \(\U\) \emph{right} and \emph{left} and call the directions along \(\V\) \emph{up} and
\emph{down}.
We then have the following mapping from the regions of \cref{sec:lbdd} to the simpler intuitive
regions of this section.
\begin{align*}
U_{i, j} &= \{(p, q) \mid (p, q, \cdot, t) \in \Reg_{i, j^*} \land \ia t s = \hphantom{-}1\}\,,\\
D_{i, j} &= \{(p, q) \mid (p, q, \cdot, t) \in \Reg_{i, j^*} \land \ia t s = -1\}\,,\\
R_{i, j} &= \{(p, q) \mid (p, q, s, \cdot) \in \Reg_{i^*, j} \land s = \hphantom{-}1\}\,,\\
L_{i, j} &= \{(p, q) \mid (p, q, s, \cdot) \in \Reg_{i^*, j} \land s = -1\}\,.
\end{align*}
It is also easier to express the \(\star\) operator of \cref{lem:dp} in this setting.
Depending on which of the directions we consider fixed because we already committed to a direction,
the propagation through the cell interior works by adding either a quadrant or a half-plane to every
point in the starting region; we can denote this with a Minkowski sum.
Based on these considerations, we give the following simplified definition.

\subparagraph{Formal definition.}
Denote \(\R^{\leq 0} = \{x \in \R \mid x \leq 0\}\) and
\(\R^{\geq 0} = \{x \in \R \mid x \geq 0\}\).
Consider the space \(\R \times \R\) of the coordinates of the two curves in~1D.
We are interested in what is feasible within the \emph{interval free space,} which in this space
turns out to be a band around the line \(y = x\) of width \(2\delta\) in \(L_1\)-distance called
\(\Fd\).
For notational convenience, define the following regions (see \cref{fig:def_ii_jj}):
\[\Fd = \{(x, y) \in \R^2 \mid \lvert x - y\rvert \leq \delta\}\,,\qquad
I_i = (u_i \times \R) \cap \Fd\,,\qquad J_j = (\R \times v_j) \cap \Fd\,.\]
The propagation within the diagram consists of starting anywhere within the current region and going
in restricted directions, since we need to distinguish between going in the positive and the
negative \(x\)-direction along both curves.
We introduce the corresponding notation for restricting the directions in the form of quadrants,
half-planes, and slabs:
\[Q_{LD} = \R^{\leq 0} \times \R^{\leq 0}\,,\quad Q_{LU} = \R^{\leq 0} \times \R^{\geq 0}\,,\quad
Q_{RD} = \R^{\geq 0} \times \R^{\leq 0}\,,\quad Q_{RU} = \R^{\geq 0} \times \R^{\geq 0}\,,\]
\[H_L = \R^{\leq 0} \times \R\,,\qquad H_R = \R^{\geq 0} \times \R\,,\qquad
H_D = \R \times \R^{\leq 0}\,,\qquad H_U = \R \times \R^{\geq 0}\,.\]
\[S_L = \R^{\leq 0} \times \{0\}\,,\qquad S_R = \R^{\geq 0} \times \{0\}\,,\qquad
S_D = \{0\} \times \R^{\leq 0}\,,\qquad S_U = \{0\} \times \R^{\geq 0}\,.\]
We introduce notation for propagating in these directions from a region by taking the appropriate
Minkowski sum, denoted with \(\oplus\).
For \(a, b \in \{L, R, U, D\}\) and a region \(X\),
\[X^a = X \oplus H_a\,,\qquad X^{ab} = X \oplus Q_{ab}\,,\qquad X^{a0} = X \oplus S_a\,.\]

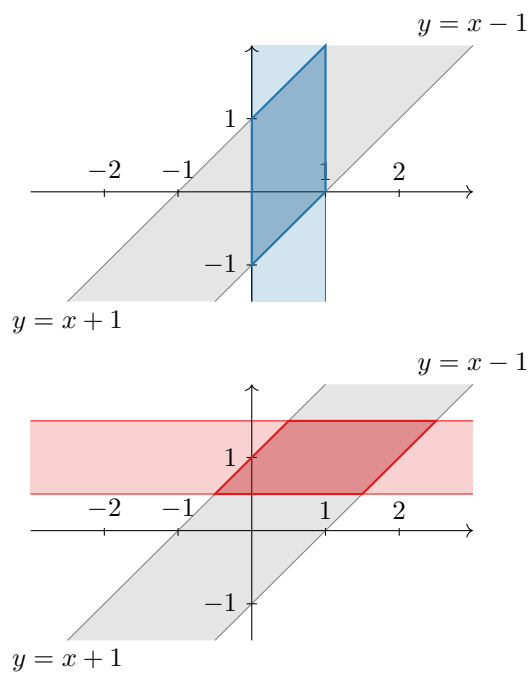
\begin{figure}
\centering
\begin{tikzpicture}[scale=.97]
\fill[myBlue,fill opacity=.2] (0, -1.5) rectangle (1, 2);
\draw[myBlue] (0, -1.5) -- (0, 2) (1, -1.5) -- (1, 2);
\fill[gray,fill opacity=.2] (-2.5, -1.5) node[below,text=black,opacity=1] {\(y = x + 1\)} --
    (1, 2) -- (3, 2) node[above,text=black,opacity=1] {\(y = x - 1\)} -- (-0.5, -1.5) -- cycle;
\draw[gray] (-2.5, -1.5) -- (1, 2) (-.5, -1.5) -- (3, 2);
\draw[thin,->] (-3, 0) -- (3, 0);
\draw[thin,->] (0, -1.5) -- (0, 2);
\foreach \x in {-2,-1,1,2} \draw (\x, -2pt) -- (\x, 1pt) node[above] {\(\x\)};
\foreach \y in {-1,1} \draw (-2pt, \y) node[left] {\(\y\)} -- (2pt, \y);
\filldraw[thick,myBlue,fill opacity=.3] (0, -1) -- (1, 0) -- (1, 2) -- (0, 1) -- cycle;
\end{tikzpicture}\hfil
\begin{tikzpicture}[scale=.97]
\fill[myRed,fill opacity=.2] (-3, .5) rectangle (3, 1.5);
\draw[myRed] (-3, .5) -- (3, .5) (-3, 1.5) -- (3, 1.5);
\fill[gray,fill opacity=.2] (-2.5, -1.5) node[below,text=black,opacity=1] {\(y = x + 1\)} --
    (1, 2) -- (3, 2) node[above,text=black,opacity=1] {\(y = x - 1\)} -- (-0.5, -1.5) -- cycle;
\draw[gray] (-2.5, -1.5) -- (1, 2) (-.5, -1.5) -- (3, 2);
\draw[thin,->] (-3, 0) -- (3, 0);
\draw[thin,->] (0, -1.5) -- (0, 2);
\foreach \x in {-2,-1,1,2} \draw (\x, -2pt) -- (\x, 1pt) node[above] {\(\x\)};
\foreach \y in {-1,1} \draw (-2pt, \y) node[left] {\(\y\)} -- (2pt, \y);
\filldraw[thick,myRed,fill opacity=.3] (-.5, .5) -- (.5, 1.5) -- (2.5, 1.5) -- (1.5, .5) -- cycle;
\end{tikzpicture}
\caption{On the left, the filled region is \(I_i = (u_i \times \R) \cap \Fd\) for \(u_i = [0, 1]\).
On the right, the filled region is \(J_j = (\R \times v_j) \cap \Fd\) for \(v_j = [0.5, 1.5]\).
In both cases \(\delta = 1\).}
\label{fig:def_ii_jj}
\end{figure}

Now we can discuss the propagation.
We start with the base case, where we compute the feasible combinations for the boundaries of the
cells of a regular free-space diagram corresponding to the first vertex on one of the curves.
For the sake of better intuition we do not use \((0, 0)\) as the base case here.
So, we fix our position to the first vertex on \(\U\) and see how far we can go along \(\V\); and
the other way around.
As we are bound to the same vertex on \(\U\), as we go along \(\V\), we keep restricting the
feasible realisations of \(u_1\).
Thus, we cut off unreachable parts of the interval as we propagate along the other curve.
We do not care about the direction we were going in after we cross a vertex on the curve where we
move.
So, if we stay at \(u_1\) and we cross over \(v_j\), then we are free to go both in the negative
and the positive direction of the \(x\)-axis to reach a realisation of \(v_{j + 1}\).
We get the following expressions, where \(U_{i, j}\) denotes the propagation upwards
from the pair of vertices \(u_i\) and \(v_j\) and propagation down, left, and right is
defined similarly:
\[U_{1, 1} = (I_1 \cap J_1)^{U0} \cap \Fd\,,\qquad D_{1, 1} = (I_1 \cap J_1)^{D0} \cap \Fd\,,\]
\[R_{1, 1} = (I_1 \cap J_1)^{R0} \cap \Fd\,,\qquad L_{1, 1} = (I_1 \cap J_1)^{L0} \cap \Fd\,,\]
\[U_{1, j + 1} = ((U_{1, j} \cup D_{1, j}) \cap J_{j + 1})^{U0} \cap \Fd\,,\qquad
D_{1, j + 1} = ((U_{1, j} \cup D_{1, j}) \cap J_{j + 1})^{D0} \cap \Fd\,,\]
\[R_{i + 1, 1} = ((R_{i, 1} \cup L_{i, 1}) \cap I_{i + 1})^{R0} \cap \Fd\,,\qquad
L_{i + 1, 1} = ((R_{i, 1} \cup L_{i, 1}) \cap I_{i + 1})^{L0} \cap \Fd\,.\]

Once the boundary regions are computed, we can proceed with propagation:
\[U_{i + 1, j} = (U_{i, j}^U \cup R_{i, j}^{RU} \cup L_{i, j}^{LU}) \cap I_{i + 1}\,,\qquad
D_{i + 1, j} = (D_{i, j}^D \cup R_{i, j}^{RD} \cup L_{i, j}^{LD}) \cap I_{i + 1}\,,\]
\[R_{i, j + 1} = (R_{i, j}^R \cup U_{i, j}^{RU} \cup D_{i, j}^{RD}) \cap J_{j + 1}\,,\qquad
L_{i, j + 1} = (L_{i, j}^L \cup U_{i, j}^{LU} \cup D_{i, j}^{LD}) \cap J_{j + 1}\,.\]

To solve the decision problem, we check if the last vertex combination is feasible:
\[((R_{m - 1, n} \cup L_{m - 1, n}) \cap I_m) \cup ((U_{m, n - 1} \cup D_{m, n - 1}) \cap J_n) \neq
\emptyset\,.\]

\subparagraph{Intuition.}
If the consecutive regions are always disjoint, we do not need to consider the possible
directions: we always know (in~1D) where the next region is, and thus what direction we take.
However, if the regions may overlap, it may be that for different realisations of a curve a segment
goes in the positive or in the negative direction.
The propagation we compute is based on the parameter space where we look at whether we have reached
a certain vertex on each curve yet, inspired by the traditional free-space diagram.
It may be that we pass by several vertices on, say, \(\V\) while moving along a single segment on
\(\U\).
The direction we choose on \(\U\) needs to be kept consistent as we compute the next regions,
otherwise we might include realisations that are invalid as feasible solutions.
Therefore, we need to keep track of the chosen direction, reflected by the pair \((s, t)\) in the
general approach and the separate sets in this section.
Otherwise, these regions in~1D are simply the feasible pairs of realisations of the last vertices on
the prefixes of the curves.

It may be helpful to think of the approach in terms of diagrams.
Consider a combination of specific vertices on the two curves, say, \(u_i\) and \(v_j\), and suppose
that we want to stay at \(u_i\) but move to \(v_{j + 1}\) on the other curve.
Which realisations of \(u_i\), \(v_j\), and \(v_{j + 1}\) can we pick that allow this move to stay
within the \(2\delta\)-band?

Suppose the \(x\)-coordinate of the diagram corresponds to the \(x\)-coordinate of \(\U\).
Then we may pick a realisation for \(u_i\) anywhere in the vertical slab corresponding to the
uncertainty interval for \(u_i\), namely, in the slab \(u_i \times \R\).
The fixed realisation for \(u_i\) would then yield a vertical line.
Now suppose the \(y\)-coordinate of the diagram corresponds to the \(x\)-coordinate of \(\V\).
For \(v_j\), picking a realisation corresponds to picking a horizontal line from the slab
\(\R \times v_j\); for \(v_{j + 1}\), it corresponds to picking a horizontal line from
\(\R \times v_{j + 1}\).
Picking a realisation for the pair \((u_i, v_j)\) thus corresponds to a point in \(u_i \times v_j\).

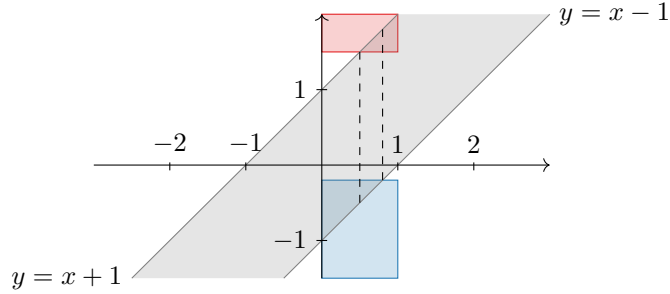
\begin{figure}
\centering
\begin{tikzpicture}
\filldraw[myBlue,fill opacity=.2] (0, -1.5) rectangle (1, -0.2);
\filldraw[myRed,fill opacity=.2] (0, 1.5) rectangle (1, 2);
\fill[gray,fill opacity=.2] (-2.5, -1.5) node[left,text=black,opacity=1] {\(y = x + 1\)} --
    (1, 2) -- (3, 2) node[right,text=black,opacity=1] {\(y = x - 1\)} -- (-0.5, -1.5) -- cycle;
\draw[gray] (-2.5, -1.5) -- (1, 2) (-.5, -1.5) -- (3, 2);
\draw[dashed] (0.8, -0.2) -- (0.8, 1.8) (0.5, -0.5) -- (0.5, 1.5);
\draw[thin,->] (-3, 0) -- (3, 0);
\draw[thin,->] (0, -1.5) -- (0, 2);
\foreach \x in {-2,-1,1,2} \draw (\x, -2pt) -- (\x, 1pt) node[above] {\(\x\)};
\foreach \y in {-1,1} \draw (-2pt, \y) node[left] {\(\y\)} -- (2pt, \y);
\end{tikzpicture}
\caption{A diagram for \(u_i = [0, 1]\), \(\textcolor{myBlue}{v_j = [-1.5, -0.2]}\), and
\(\textcolor{myRed}{v_{j + 1} = [1.5, 2]}\) with \(\delta = 1\).
Note that the feasible realisations for \(u_i\) are \([0.5, 0.8]\).}
\label{fig:simpl_prop}
\end{figure}

Of course, we may only maintain the coupling as long the distance between the coupled points is at
most \(\delta\).
For a fixed point on \(\U\), this corresponds to a \(2\delta\) window for the coordinates along
\(\V\).
Therefore, the allowed couplings are contained within the band defined by \(y = x \pm \delta\), and
when we pick the realisations for \((u_i, v_j)\), we may only pick points from \(u_i \times v_j\)
for which \(\lvert y - x\rvert \leq \delta\) holds.

As we consider the propagation to \(v_{j + 1}\), note that we may not move within \(u_i\), so the
allowed realisations for the pair \((u_i, v_{j + 1})\) are limited.
In particular, we can find that region by taking the subset of \(u_i \times v_{j + 1}\) for which
\(\lvert y - x\rvert \leq \delta\) holds and restricting the \(x\)-coordinate further to be
feasible for the pair \((u_i, v_j)\).
See \cref{fig:simpl_prop} for an illustration of this.
In this figure, we know that \(v_{j + 1}\) lies above \(v_j\); if we did not know that, we would
have to attempt propagation both upwards and downwards.
For the second curve, the same holds.

\subparagraph{Complexity.}
We now discuss the complexity of the regions we are propagating to analyse the efficiency of the
algorithm presented above.
We will perform the following steps:
\begin{enumerate}
\item Define complexity of the regions and establish the complexity of the base case.
\item Study the possible complex regions that can arise from all simple regions.
\item Study what happens to the complex regions as we propagate and conclude that the complexity is
bounded by a constant.
\end{enumerate}
The boundaries of the regions are always horizontal, vertical, or coincide with the boundaries
of \(\Fd\).
A region can be thus represented as a union of (possibly unbounded) axis-aligned rectangular
regions, further intersected with the interval free space.
We define the \emph{complexity} of a region as the minimal required number of such rectangular
regions.
Define a \emph{simple} region as a region of complexity at most~\(1\).
Observe that a simple region is necessarily convex; and a non-simple region has to be non-convex.
The illustration in \cref{fig:simple} shows the most general example of a simple region.
An empty region is also a simple region.
To enumerate the possible non-simple regions, we need to examine where higher region complexity may
come from in our algorithm.
To that aim, we first prove some simple statements about the propagation procedure.

\begin{figure}
\centering
\begin{tikzpicture}
\filldraw[fill=lightgray] (0, 0) -- (0, 2) -- (2, 4) -- (4, 4) -- (4, 2) -- (2, 0) -- cycle;
\draw[dashed] (0, 0) -- (0, -1) (0, 2) -- (0, 5) (4, -1) -- (4, 2) (4, 4) -- (4, 5);
\draw[dotted] (-1, 0) -- (0, 0) (2, 0) -- (5, 0) (-1, 4) -- (2, 4) (4, 4) -- (5, 4);
\draw (-1, 1) -- (0, 2) (2, 4) -- (3, 5) (1, -1) -- (2, 0) (4, 2) -- (5, 3);
\end{tikzpicture}
\caption{An example simple region.
We get less general ones by setting any side length to~\(0\).}
\label{fig:simple}
\end{figure}
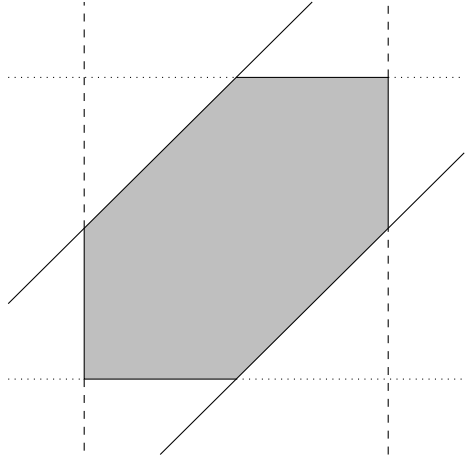

First, we discuss the complexity of the regions we can get in the base case of the propagation.
\begin{lemma}\label{lemma:base}
For all \(i \in [m - 1]\) and \(j \in [n - 1]\), regions \(U_{1, j}\), \(D_{1, j}\), \(R_{i, 1}\),
and \(L_{i, 1}\) are simple.
\end{lemma}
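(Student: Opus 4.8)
The plan is to prove the statement by induction, using the reflection $(x,y)\mapsto(y,x)$ of the coordinate plane to reduce to one of the two families. This reflection swaps $\U$ with $\V$, sends each $u_i$ to $v_i$, exchanges the roles of $U,D$ with $R,L$, and maps $\Fd$ to itself; since the recurrences for $R_{i,1},L_{i,1}$ are literally those for $U_{1,j},D_{1,j}$ with the two curves interchanged, it suffices to show that $U_{1,j}$ and $D_{1,j}$ are simple for all $j$, which I would do by induction on $j$. Throughout I would use that a simple region is precisely a set of the form $B\cap\Fd$ for a (possibly unbounded or degenerate) axis-aligned rectangle $B$, together with the empty set, and the two closure facts: (i) $I_i$ and $J_j$ are simple, being $(u_i\times\R)\cap\Fd$ and $(\R\times v_j)\cap\Fd$; and (ii) the intersection of simple regions is simple, since $(B_1\cap\Fd)\cap(B_2\cap\Fd)=(B_1\cap B_2)\cap\Fd$ and $B_1\cap B_2$ is again an axis-aligned rectangle.

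\textbf{The operational lemma.}
The heart of the argument is the claim that for every simple region $X$, each of $X^{U0}\cap\Fd$, $X^{D0}\cap\Fd$, $X^{R0}\cap\Fd$, $X^{L0}\cap\Fd$ is simple. I would prove the case of $X^{U0}$ by direct computation, the others following by symmetry: writing $X=\bigl([x_1,x_2]\times[y_1,y_2]\bigr)\cap\Fd$, one checks that $X\oplus S_U=\bigl\{(x,y)\mid x_1'\le x\le x_2',\ y\ge\max(y_1,\,x-\delta)\bigr\}$ with $[x_1',x_2']=[\max(x_1,y_1-\delta),\min(x_2,y_2+\delta)]$, and that this set equals $\bigl([x_1',x_2']\times[y_1,\infty)\bigr)\cap\{y-x\ge-\delta\}$; intersecting with $\Fd$ merely re-imposes $y-x\le\delta$, so $X^{U0}\cap\Fd=\bigl([x_1',x_2']\times[y_1,\infty)\bigr)\cap\Fd$ is again a rectangle clipped to $\Fd$. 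Intuitively, the Minkowski sum with the vertical ray $S_U$ erases the top side of the bounding rectangle of $X$ and trims its $x$-range, but because $\Fd$ already supplies both slope-$1$ constraints, re-clipping to $\Fd$ returns a region of the same form.

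\textbf{The induction.}
For the induction I would carry the stronger hypothesis that not only $U_{1,j}$ and $D_{1,j}$, but also their union $U_{1,j}\cup D_{1,j}$, is simple (a union of two simple regions is not simple in general, and it is exactly this union that the recurrences feed into the next level, so the extra claim is essential). Writing $W_j$ for the common region with $U_{1,j}=W_j^{U0}\cap\Fd$ and $D_{1,j}=W_j^{D0}\cap\Fd$ -- so $W_1=I_1\cap J_1$ and $W_{j+1}=(U_{1,j}\cup D_{1,j})\cap J_{j+1}$ -- we have $U_{1,j}\cup D_{1,j}=\bigl(W_j^{U0}\cup W_j^{D0}\bigr)\cap\Fd=\bigl(W_j\oplus(\{0\}\times\R)\bigr)\cap\Fd$, which, since a simple $W_j$ is convex and hence projects to an interval $[z_1,z_2]$ on the $x$-axis, equals $\bigl([z_1,z_2]\times\R\bigr)\cap\Fd$: a vertical slab clipped to $\Fd$, hence simple (the case $W_j=\emptyset$ being trivial). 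The base $j=1$ is then immediate from (i), (ii), and the operational lemma; for the step, $W_{j+1}$ is the intersection of the simple regions $U_{1,j}\cup D_{1,j}$ and $J_{j+1}$, hence simple by (ii), and then $U_{1,j+1}=W_{j+1}^{U0}\cap\Fd$ and $D_{1,j+1}=W_{j+1}^{D0}\cap\Fd$ are simple by the operational lemma, with $U_{1,j+1}\cup D_{1,j+1}$ simple as above.

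\textbf{Main obstacle.}
The only delicate point is the operational lemma: one must verify that Minkowski-summing a band-clipped rectangle with a coordinate ray and then re-clipping to the diagonal band $\Fd$ cannot create a non-convex region or a spurious extra diagonal edge that would push the complexity above $1$. This is short but slightly fiddly; it is cleanest to note that every constraint in sight is a half-plane of slope $0$, $\infty$, or $1$, that the two slope-$1$ half-planes are exactly the two sides of $\Fd$, and to dispatch the degenerate cases ($X$ empty or lower-dimensional) separately. I would also double-check that the final bookkeeping matches the paper's definition of complexity, i.e.\ that ``one axis-aligned rectangle intersected with the interval free space'' is exactly a simple region.
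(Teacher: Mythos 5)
Your proposal is correct and follows essentially the same route as the paper's proof: the base case via intersecting the slabs with \(\Fd\), the identity \(X^{U0}\cup X^{D0}=X\oplus(\{0\}\times\R)\) yielding a vertical slab, and the intersect--Minkowski-sum--re-clip induction, with the \(R,L\) case handled by the curve-swapping symmetry. Your version merely makes explicit what the paper leaves implicit (the coordinate computation behind the ``operational lemma'' and carrying the simplicity of the union \(U_{1,j}\cup D_{1,j}\) through the induction), which is fine.
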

\begin{proof}
Consider first the intersection \(I_1 \cap J_1\).
It is the intersection of a vertical slab, a horizontal slab, and the diagonal slab (interval free
space).
All three are convex sets, hence their intersection is also convex and uses only vertical,
horizontal, and diagonal line segments, so the result is a simple region.
To obtain \(U_{1, 1}\), \(D_{1, 1}\), \(R_{1, 1}\), and \(L_{1, 1}\), we take the Minkowski sum of
the region with the corresponding half-slab.
Both are convex, so the result again is convex; we then intersect it with the interval free space
again, getting a simple region.

Now assume that \(U_{1, j}\) is simple; we show that \(U_{1, j + 1}\) is simple.
Note that for some region \(X\), \(U_{1, j} = X^{U0} \cap \Fd\) and \(D_{1, j} = X^{D0} \cap \Fd\).
Then
\[U_{1, j} \cup D_{1, j} = (X^{U0} \cup X^{D0}) \cap \Fd
= (X \oplus (\{0\} \times \R)) \cap \Fd\,.\]
So, we get a vertical slab the width of \(X\), intersected with \(\Fd\), so the result is convex.
We then intersect the region with the simple region \(J_{j + 1}\); take Minkowski sum with a slab;
and again intersect with the interval free space.
Clearly, the result is convex and uses only the allowed boundaries, so we get a simple region.

The argument for \(D_{1, j}\) is symmetric; the arguments for \(R_{i, 1}\) and \(L_{i, 1}\) are
equally straightforward.
Hence, all the regions we get in the base case are simple.
\end{proof}

To proceed, we need to make the relation in pairs \((U, D)\) and \((R, L)\) clear, so we know
where the complexity may come from.
Denote a half-plane with a vertical or horizontal boundary starting at coordinate \(s\) and going in
direction \(X\) by \(H_s^X\).
For example, a half-plane bounded on the left by the line \(x = 2\) is denoted \(H_2^R\).
\begin{lemma}\label{lemma:intersect}
Take two imprecise curves \(\U\) and \(\V\) of lengths \(m\) and \(n\), respectively, and let
\(i \in [m - 1]\) and \(j \in [n - 1]\).
Consider the pair \(R_{i, j}\), \(L_{i, j}\) and assume the regions are simple.
Then exactly one of the following options holds:
\begin{enumerate}
\item \(R_{i, j} = L_{i, j} = \emptyset\), so both regions are empty;
\item \(R_{i, j} = J_j \cap H_s^R \neq \emptyset \land L_{i, j} = \emptyset\) for some \(s\), so one
region is empty and the other spawns the entire feasible range, except that it may be cut with a
vertical line on the left;
\item \(L_{i, j} = J_j \cap H_s^L \neq \emptyset \land R_{i, j} = \emptyset\) for some \(s\), so one
region is empty and the other spawns the entire feasible range, except that it may be cut with a
vertical line on the right;
\item \(L_{i, j} \cap R_{i, j} \neq \emptyset\), so both regions are non-empty, and they intersect.
\end{enumerate}
We can make the same statement for the pair \(U_{i, j}\), \(D_{i, j}\), replacing the half-planes
with \(H_s^U\) and \(H_s^D\).
\end{lemma}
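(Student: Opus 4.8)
The plan is to exploit the recursive structure of the propagation equations for $R_{i,j}$ and $L_{i,j}$, namely
$R_{i,j+1} = (R_{i,j}^R \cup U_{i,j}^{RU} \cup D_{i,j}^{RD}) \cap J_{j+1}$ (and the symmetric expression for $L$), together with the base-case recursion along $i$, to argue inductively that the pair $(R_{i,j}, L_{i,j})$ always lands in one of the four listed configurations. The key geometric observation is that $R_{i,j}$, being obtained by a final Minkowski sum with a rightward half-plane $H_R$ (or rightward slab) and then intersecting with $J_j$ (a horizontal slab crossed with $\Fd$), is always of the form $J_j \cap H_s^R$ for some threshold $s$ whenever it is nonempty and simple — that is, within the band $J_j$ it is a ``right-unbounded'' region cut off on the left by a vertical line. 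Symmetrically $L_{i,j}$, if nonempty and simple, is $J_j \cap H_s^L$, cut off on the right. So the first step is to establish this normal form: a nonempty simple $R_{i,j}$ equals $J_j \cap H_s^R$, and a nonempty simple $L_{i,j}$ equals $J_j \cap H_{s'}^L$.

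Once the normal forms are in hand, the case analysis is almost immediate. If both are empty we are in case~1. If exactly one is empty we are in case~2 or~3. If both are nonempty, then $R_{i,j} = J_j \cap H_s^R$ and $L_{i,j} = J_j \cap H_{s'}^L$ are both nonempty subsets of the same band $J_j$; a right-unbounded piece and a left-unbounded piece of a connected band must overlap (their union would otherwise leave a gap inside $J_j$, but more simply: any point of $J_j$ sufficiently far right lies in $R_{i,j}$... wait, that is not quite it since $J_j$ itself is $x$-unbounded, so instead argue that the two half-planes $H_s^R$ and $H_{s'}^L$ either are disjoint, in which case one checks $R_{i,j}$ and $L_{i,j}$ cannot both have arisen as nonempty from the propagation because the reachable $x$-interval at $(i,j)$ is a single interval, or they overlap). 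This reachable-interval fact is exactly what the normal-form step must encode: the set of feasible $x$-coordinates (realisations of $u_i$) appearing in $R_{i,j} \cup L_{i,j}$ is a single interval, with $L$ covering a left part and $R$ covering a right part, necessarily overlapping if both are nonempty.

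The induction proceeds along the diagonal order of \cref{fig:deps_reg}: the base cases $R_{i,1}, L_{i,1}$ are simple by \cref{lemma:base}, and the normal form there follows from their explicit definitions $R_{i+1,1} = ((R_{i,1}\cup L_{i,1})\cap I_{i+1})^{R0}\cap\Fd$ — a Minkowski sum with $S_R$ kills any left-boundary that is not a clean vertical cut. For the inductive step, assuming $R_{i,j}, L_{i,j}, U_{i,j}, D_{i,j}$ are simple and in normal form, I would check that each of the three summands $R_{i,j}^R$, $U_{i,j}^{RU}$, $D_{i,j}^{RD}$, after the final $\cap J_{j+1}$, is a region of the form $J_{j+1}\cap H_t^R$, that their union is again of this form (the union of right-half-planes is the widest one), and hence $R_{i,j+1}$ is too; symmetrically for $L_{i,j+1}$. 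Simplicity of $R_{i,j+1}$ in particular follows because $J_{j+1}\cap H_t^R$ is convex.

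The main obstacle I anticipate is the coupling between the $(R,L)$ recursion and the $(U,D)$ recursion: $R_{i,j+1}$ depends on $U_{i,j}$ and $D_{i,j}$, so the induction must carry the normal-form hypothesis for \emph{all four} families simultaneously, and I must verify that $U_{i,j}^{RU}$ — a quadrant Minkowski sum applied to a region that is a horizontal band $I_i$ cut by a horizontal line — really does project to a clean right-half-plane after intersecting with $J_{j+1}$, rather than producing a diagonal staircase. This is where the fact that $\Fd$ is a $45^\circ$ band is used: the quadrant $Q_{RU}$ added to a point, intersected with $\Fd$ and with the horizontal slab $J_{j+1}$, yields a set whose left boundary inside $J_{j+1}$ is a single vertical segment (the diagonal boundary of $\Fd$ and the horizontal boundary of $J_{j+1}$ interact cleanly because the quadrant direction $RU$ is ``parallel'' to the upper diagonal of $\Fd$). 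Making this interaction precise, and checking the analogous statement for $D_{i,j}^{RD}$ with the lower diagonal, is the crux; everything else is bookkeeping over the four cases.
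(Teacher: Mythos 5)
There is a genuine gap: the ``normal form'' your whole argument rests on\dsh that every nonempty \emph{simple} \(R_{i,j}\) equals \(J_j \cap H_s^R\) (and symmetrically for \(L_{i,j}\))\dsh is false. Already in the base case, take \(\delta = 1\), \(u_1 = [0, 1]\) and \(v_1 = [0.5, 5]\): then \(R_{1,1} = (I_1 \cap J_1)^{R0} \cap \Fd\) is convex, hence simple, and nonempty, but it contains no point at any height \(y \in (2, 5]\) because \(I_1 \cap J_1\) is empty there; by contrast, a nonempty \(J_1 \cap H_s^R\) is nonempty at height \(y\) exactly when \(s \leq y + \delta\), a condition that only becomes weaker as \(y\) grows, so it always reaches the top of the band. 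In general \(R_{i,j}\) can be cut off by a horizontal or diagonal boundary inherited from the \(I\)-constraints of earlier columns, so it need not span \(J_j\). The lemma only asserts the form \(J_j \cap H_s^R\) in the special situation where the sibling region \(L_{i,j}\) is empty, and establishing \emph{that} is precisely the work of the paper's induction on \(j\): the coupled contributions satisfy \(U_{i,j'}^{LU} \cap U_{i,j'}^{RU} = U_{i,j'}^{U0}\), so if \(U_{i,j'}^{U0}\) misses \(J_{j'+1}\) without lying above it, it must lie entirely to the left, forcing \(U_{i,j'}^{RU} \cap J_{j'+1} = J_{j'+1}\)\dsh that is where the full-band form comes from, and it hinges on the quadrant terms \(U^{RU}, D^{RD}\) that your opening description of the recurrence (``a final Minkowski sum with \(H_R\), then intersect with \(J_j\)'') omits.

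Your fallback for case~4 is also not enough as stated. The claim that the feasible \(x\)-coordinates of \(R_{i,j} \cup L_{i,j}\) form a single interval with \(L\) covering a left part and \(R\) a right part is exactly the nontrivial content that has to be carried through the induction, not an input to it; and even granted overlapping \(x\)-projections, two planar regions with overlapping projections need not intersect, while the lemma asserts \(R_{i,j} \cap L_{i,j} \neq \emptyset\) as sets in \(\R^2\). The paper gets case~4 by exhibiting an explicit common subset in each branch of its case analysis (e.g.\@ \(U_{i,j'}^{U0} \cap J_{j'+1}\) when that is nonempty, or \(R_{i,j} = J_j \supseteq L_{i,j}\) in the ``entirely to the left'' branch), and invokes the induction hypothesis only when neither \(U_{i,j'}\) nor \(D_{i,j'}\) contributes. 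Repairing your proposal would essentially mean redoing that case analysis, so as written the proposal does not go through.
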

\begin{proof}
We show the statement for the pair \(R_{i, j}\), \(L_{i, j}\).
We prove the statement by induction on \(j\).
First of all, for \(j = 1\) we know that either both regions are empty (case~1), or they are both
non-empty and intersect (case~4), showing the claim.
So let \(j = j' + 1\) for the rest of the proof and assume that the lemma holds for the pair
\(R_{i, j'}\), \(L_{i, j'}\).

\begin{figure}
\centering
\begin{tikzpicture}
\fill[myRed,opacity=.2] (-1, .5) -- (1.5, .5) -- (2, 1) -- (2, 5) -- (-1, 5) -- cycle;
\fill[myBlue,opacity=.2] (1, .5) rectangle (6, 5);
\draw (-1, 0) -- (4, 5) (1, 0) -- (6, 5);
\draw (-1, 4) -- (6, 4) (-1, 3) -- (6, 3);
\draw[myRed,very thick,opacity=.4] (-.5, .5) -- (1.5, .5) -- (2, 1) -- (2, 3) -- cycle;
\draw[myBlue,very thick,opacity=.4] (4, 5) -- (1, 2) -- (1, .5) -- (1.5, .5) -- (6, 5);
\fill[myBlue,opacity=.4] (2, 3) -- (4, 3) -- (5, 4) -- (3, 4) -- cycle;
\node at (.1, .1) {\(\Fd\)};
\node at (-.7, 3.5) {\(J_j\)};
\node[myRed] at (.5, .8) {\(U_{i, j'}^{LU}\)};
\node[myBlue] at (2.5, 2.2) {\(U_{i, j'}^{RU}\)};
\node[myPurple] at (1.5, 1.5) {\(U_{i, j'}^{U0}\)};
\end{tikzpicture}
\caption{Propagation of \(U_{i, j'}\) to \(L_{i, j}\) and \(R_{i, j}\).
Note \(J_j \subset \Fd\).
Observe that \(U_{i, j'}^{LU} \cap U_{i, j'}^{RU} = U_{i, j'}^{U0}\), and if \(U_{i, j'}^{U0}
\cap J_j = \emptyset\) but \(J_j\) does not lie below \(U_{i, j'}\), then \(U_{i, j'}^{LU} \cap J_j
= \emptyset\) and \(U_{i, j'}^{RU} \cap J_j = J_j\), so one of the regions is empty and the other
covers the entire feasible region.}
\label{fig:upper_intersect}
\end{figure}
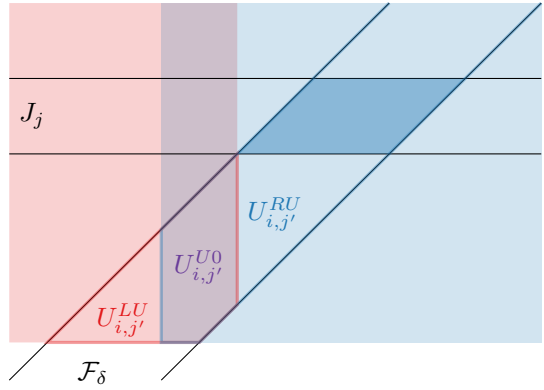

We go over the possible combinations of the previous regions that are combined in the propagation
and show that for any such combination we end up in one of the cases.
Recall that \(R_{i, j} = R_{i, j' + 1} =
(R_{i, j'}^R \cup U_{i, j'}^{RU} \cup D_{i, j'}^{RD}) \cap J_{j' + 1}\).
Similarly, \(L_{i, j} = (L_{i, j'}^L \cup U_{i, j'}^{LU} \cup D_{i, j'}^{LD}) \cap J_{j' + 1}\).
Consider the following cases:
\begin{itemize}

\item \(U_{i, j'} \neq \emptyset\).
Note that \(U_{i, j'}^{LU} \cap U_{i, j'}^{RU} = U_{i, j'}^{U0}\), so a vertical half-slab from a
lower boundary.
If \(U_{i, j'}^{U0} \cap J_{j' + 1} \neq \emptyset\), then both \(L_{i, j}\) and \(R_{i, j}\) are
non-empty and intersect, landing in case~4.
Otherwise, suppose \(U_{i, j'}^{U0} \cap J_{j' + 1} = \emptyset\).
This intersection can be empty due to two reasons.
Firstly, \(U_{i, j'}^{U0}\) may lie entirely above \(J_{j' + 1}\).
Then \(U_{i, j'}^{LU} \cap J_{j' + 1} = U_{i, j'}^{RU} \cap J_{j' + 1} = \emptyset\), so
\(U_{i, j'}\) does not contribute anything to either region; this case is considered below.
Secondly, \(U_{i, j'}^{U0}\) may lie entirely to the left of \(J_{j' + 1}\).
Then we get the situation shown in \cref{fig:upper_intersect}: it must be that
\(U_{i, j}^{LU} \cap J_{j' + 1} = \emptyset\) and \(U_{i, j}^{RU} \cap J_{j' + 1} = J_{j' + 1}\).
This means, in particular, that \(R_{i, j} = J_{j' + 1} = J_j\).
It might be that \(L_{i, j}\) and \(R_{i, j}\) are both non-empty; as \(L_{i, j} \subseteq J_j\),
they intersect, and so we end up in case~4.
Otherwise, \(L_{i, j}\) must be empty, ending up in case~2.
So, whenever \(U_{i, j'}\) contributes, we end up in one of the cases.

\item \(D_{i, j'} \neq \emptyset\).
We can make arguments symmetric to the previous case, landing us in either case~4 or case~3.
If \(D_{i, j'}\) does not contribute to either region, we consider the next case.

\item Neither \(U_{i, j'}\) nor \(D_{i, j'}\) contribute to \(L_{i, j}\) or \(R_{i, j}\), meaning 
we can simplify the expressions to \(R_{i, j' + 1} = R_{i, j'}^R \cap J_{j' + 1}\) and
\(L_{i, j' + 1} = L_{i, j'}^L \cap J_{j' + 1}\).
We use the induction hypothesis and distinguish between the cases for the pair \(R_{i, j'}\),
\(L_{i, j'}\).
Starting in case~1, we get that \(L_{i, j} = R_{i, j} = \emptyset\), ending up in case~1.
Starting in case~2, we get \(L_{i, j} = \emptyset\), and \(R_{i, j} = R_{i, j' + 1} =
J_{j' + 1} \cap R_{i, j'}^R\).
Observe that \(R_{i, j'}^R\) is a half-plane that can be denoted by \(H_s^R\) for some appropriate
\(s\); depending on whether the intersection is empty, we end in either in case~1 or in case~2.
Starting in case~3 is symmetric and lands us in either case~1 or case~3.
If we start in case~4, then the half-planes \(R_{i, j'}^R\) and \(L_{i, j'}^L\) intersect, and so
for the pair \(R_{i, j}\), \(L_{i, j}\), we end up in case~4; or in case~2 or~3 if
\(L_{i, j'}^R\cap J_{j' + 1}\) or \(R_{i, j'}^R\cap J_{j' + 1}\) is empty.
\end{itemize}
This covers all the cases.
By induction, we conclude that the lemma holds.
The proof for \(U\), \(D\) is symmetric.
\end{proof}

Let us now introduce the higher complexity regions.
\begin{definition}
A \emph{staircase with \(k\) steps} is an otherwise simple region with \(k\) cut-outs on the same
side of the region, each consisting of a single horizontal and a single vertical segments,
introducing higher complexity.
All the options for a staircase with one step (regions of complexity~2) are illustrated in
\cref{fig:staircase}.
\end{definition}
We should note that a staircase with \(k\) steps, when intersected with \(\Fd\), can yield up to
\(k + 1\) disjoint simple regions.
More specifically, every step that extends outside \(\Fd\) splits a staircase of \(k\) steps into
two staircases of at most \(k-1\) steps in total.

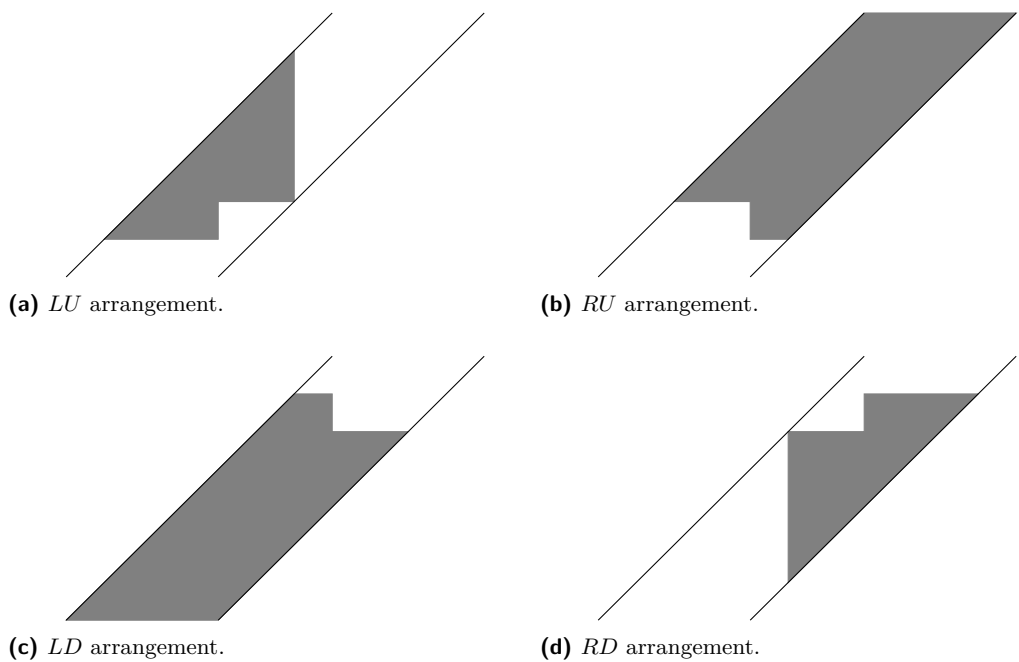
\begin{figure}
\begin{minipage}[b]{.5\linewidth}
\centering
\begin{tikzpicture}
\filldraw[gray] (0, 1) -- (1.5, 1) -- (1.5, 1.5) -- (2.5, 1.5) -- (2.5, 3.5)-- cycle;
\draw (-.5, .5) -- (3, 4) (1.5, .5) -- (5, 4);
\end{tikzpicture}
\subcaption{\(LU\) arrangement.}
\label{fig:staircase_1}
\end{minipage}%
\begin{minipage}[b]{.5\linewidth}
\centering
\begin{tikzpicture}
\filldraw[gray] (.5, 1.5) -- (1.5, 1.5) -- (1.5, 1) -- (2, 1) -- (5, 4) -- (3, 4) -- cycle;
\draw (-.5, .5) -- (3, 4) (1.5, .5) -- (5, 4);
\end{tikzpicture}
\subcaption{\(RU\) arrangement.}
\label{fig:staircase_2}
\end{minipage}%
\par\vspace{\baselineskip}%
\begin{minipage}[b]{.5\linewidth}
\centering
\begin{tikzpicture}
\filldraw[gray] (0.5, 1.5) -- (1, 1.5) -- (1, 1) -- (2, 1) -- (-.5, -1.5) -- (-2.5, -1.5) -- cycle;
\draw (-2.5, -1.5) -- (1, 2) (-.5, -1.5) -- (3, 2);
\end{tikzpicture}
\subcaption{\(LD\) arrangement.}
\label{fig:staircase_3}
\end{minipage}%
\begin{minipage}[b]{.5\linewidth}
\centering
\begin{tikzpicture}
\filldraw[gray] (0, 1) -- (1, 1) -- (1, 1.5) -- (2.5, 1.5) -- (0, -1) -- cycle;
\draw (-2.5, -1.5) -- (1, 2) (-.5, -1.5) -- (3, 2);
\end{tikzpicture}
\subcaption{\(RD\) arrangement.}
\label{fig:staircase_4}
\end{minipage}
\caption{All possible combinations for a single-step staircase.
Each can be further intersected by a vertical or horizontal slab (\(I_i\) or \(J_j\)) or shifted so
that the boundary is affected by \(\Fd\).}
\label{fig:staircase}
\end{figure}

We make the following observation relating the regions in pairs \(R_{i, j}\), \(L_{i, j}\) and
\(U_{i, j}\), \(D_{i, j}\).

\begin{lemma}\label{lemma:union_step}
Take two imprecise curves \(\U\) and \(\V\) of lengths \(m\) and \(n\), respectively, and let
\(i \in [m - 1]\) and \(j \in [n - 1]\).
Consider the pair \(R_{i, j}\), \(L_{i, j}\) and assume both regions are non-empty.
If \(j = 1\), the regions have the same \(y\)-coordinate for their lower and upper boundaries.
If \(j = j' + 1\) and the regions \(U_{i, j'}\), \(D_{i, j'}\) are simple, then the union 
\(R_{i, j} \cup L_{i, j}\) is either simple or a staircase with one step.
Furthermore, both \(R_{i, j}\) and \(L_{i, j}\) are either simple or staircases with one step.
\end{lemma}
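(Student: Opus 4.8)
The plan is to dispatch the base layer $j = 1$ directly, and to treat the inductive layer $j = j' + 1$ by first collapsing the six Minkowski‑sum terms that define $R_{i,j}$ and $L_{i,j}$ into half‑planes, reducing the claim to a statement about removing an axis‑aligned box from the convex region $J_{j'+1}$, and then settling that statement with \cref{lemma:intersect}. For $j = 1$, both $R_{i,1}$ and $L_{i,1}$ arise from one common ``seed'' $X$ (namely $X = I_1 \cap J_1$ if $i = 1$ and $X = (R_{i-1,1} \cup L_{i-1,1}) \cap I_i$ otherwise) by $R_{i,1} = X^{R0} \cap \Fd$ and $L_{i,1} = X^{L0} \cap \Fd$. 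Since $X^{R0} = X \oplus S_R$ and $X^{L0} = X \oplus S_L$ translate points only horizontally, the set of $y$‑coordinates of $X^{R0}$ and of $X^{L0}$ equals that of $X$; and since $X \subseteq \Fd$ (all of $I_i$, $J_1$, and the earlier‑computed regions lie in $\Fd$), we have $X \subseteq X^{R0} \cap \Fd$ and $X \subseteq X^{L0} \cap \Fd$, so intersecting with $\Fd$ discards no $y$‑coordinate. Hence $R_{i,1}$ and $L_{i,1}$ share their lower and upper boundary $y$‑coordinates.

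For $j = j' + 1$, recall $R_{i,j} = (R_{i,j'}^R \cup U_{i,j'}^{RU} \cup D_{i,j'}^{RD}) \cap J_{j'+1}$ and $L_{i,j} = (L_{i,j'}^L \cup U_{i,j'}^{LU} \cup D_{i,j'}^{LD}) \cap J_{j'+1}$. I would first record three facts. Because $R_{i,j'} \subseteq J_{j'}$, which is bounded, $R_{i,j'}^R = R_{i,j'} \oplus H_R$ is simply the right half‑plane $\{x \geq r\}$, where $r$ is the smallest $x$‑coordinate occurring in $R_{i,j'}$ (empty if $R_{i,j'} = \emptyset$), and symmetrically $L_{i,j'}^L = \{x \leq \ell\}$ with $\ell$ the largest $x$‑coordinate in $L_{i,j'}$; note this holds regardless of how complex $R_{i,j'}, L_{i,j'}$ are. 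For the four remaining terms, distributivity of $\oplus$ over union together with $Q_{RU} \cup Q_{LU} = H_U$ and $Q_{RD} \cup Q_{LD} = H_D$ gives $U_{i,j'}^{RU} \cup U_{i,j'}^{LU} = U_{i,j'} \oplus H_U = \{y \geq u\}$ and $D_{i,j'}^{RD} \cup D_{i,j'}^{LD} = D_{i,j'} \oplus H_D = \{y \leq d\}$, where $u, d$ are the smallest $y$‑coordinate of $U_{i,j'}$ and the largest $y$‑coordinate of $D_{i,j'}$ (and $u = +\infty$, $d = -\infty$ for an empty region). Since the unions defining $R_{i,j}$ and $L_{i,j}$ together use exactly these six terms and both are cut with $J_{j'+1}$,
\[R_{i,j} \cup L_{i,j} = (\{x \geq r\} \cup \{x \leq \ell\} \cup \{y \geq u\} \cup \{y \leq d\}) \cap J_{j'+1} = J_{j'+1} \setminus B, \qquad B = (\ell, r) \times (d, u)\,,\]
so $R_{i,j} \cup L_{i,j}$ is the convex set $J_{j'+1}$ with one open axis‑aligned box removed, the box being empty or unbounded as soon as one of the four regions is empty.

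To finish, I would show that $B$ meets $J_{j'+1}$ in at most one corner region. Applying \cref{lemma:intersect} to the simple pair $U_{i,j'}, D_{i,j'}$: in its case~4, $U_{i,j'} \cap D_{i,j'} \neq \emptyset$, so $u \leq d$, hence $B = \emptyset$ and $R_{i,j} \cup L_{i,j} = J_{j'+1}$ is simple; otherwise one of $U_{i,j'}, D_{i,j'}$ is empty, so $B$ is unbounded in the $y$‑direction and cannot be interior to the bounded set $J_{j'+1}$. Moreover, since $R_{i,j}$ and $L_{i,j}$ are non‑empty, the pair $R_{i,j'}, L_{i,j'}$ obeys the same overlap‑or‑empty dichotomy (for $j' = 1$ both contain the seed $X$ and hence overlap; in general this is maintained as an induction invariant), which forces $\ell \geq r$ and again $B = \emptyset$, unless their $x$‑extents are disjoint, in which case $B$ is a genuine but unbounded half‑slab whose intersection with the convex $J_{j'+1}$ is a single corner region, so $J_{j'+1} \setminus B$ is simple or a one‑step staircase. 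For the individual regions I would repeat the bookkeeping one‑sidedly: in $R_{i,j} = (\{x \geq r\} \cup U_{i,j'}^{RU} \cup D_{i,j'}^{RD}) \cap J_{j'+1}$ at most one of $U_{i,j'}^{RU}, D_{i,j'}^{RD}$ is non‑trivial unless they overlap (\cref{lemma:intersect}), all three sets open towards larger $x$, and since $U_{i,j'}^{RU}$ recedes leftward as $y$ grows while $D_{i,j'}^{RD}$ recedes leftward as $y$ shrinks, the left boundary of their union is a weakly unimodal function of $y$; hence $R_{i,j}$ has at most one step, and symmetrically $L_{i,j}$.

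I expect the genuinely delicate part to be this last step: controlling how $B$ (and, for the individual regions, the left boundary of the rightward‑opening union) interacts with the two diagonal sides of $J_{j'+1}$, and excluding the configurations in which $B$ would punch a hole in $J_{j'+1}$, split it into two components, or carve a rectangular notch of complexity three. This is exactly where the non‑emptiness of $R_{i,j}, L_{i,j}$ and the dichotomy of \cref{lemma:intersect} applied to $R_{i,j'}, L_{i,j'}$ must be used; the convenient point is that the moment that dichotomy yields overlapping $x$‑extents --- or the moment $U_{i,j'} \cap D_{i,j'} \neq \emptyset$ --- the box $B$ collapses to the empty set and $R_{i,j} \cup L_{i,j}$ is plainly simple.
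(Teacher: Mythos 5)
Your argument is correct and follows essentially the same route as the paper's proof: the \(j = 1\) case via the common seed region whose \(y\)-range is preserved under \(\oplus S_R\), \(\oplus S_L\) and intersection with \(\Fd\); the collapse of the six propagation terms into half-planes so that \(R_{i,j} \cup L_{i,j}\) is \(J_{j'+1}\) minus at most one corner region, settled by the dichotomy of \cref{lemma:intersect} applied to the pairs \(U_{i,j'}, D_{i,j'}\) and \(R_{i,j'}, L_{i,j'}\); and the quadrants-plus-half-plane analysis for each of \(R_{i,j}\), \(L_{i,j}\) individually. Your complement-box \(B\) and the unimodal-left-boundary remark are only presentational repackagings of the paper's argument (which phrases the same facts as ``the union of two half-planes with perpendicular boundaries'' and ``adding the half-plane cannot add a step''), so the two proofs match in substance and in level of detail.
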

\begin{proof}
First of all, for \(j = 1\), \cref{lemma:base} implies that \(R_{i, j}\) and \(L_{i, j}\) are
simple; furthermore, the
propagation starts from the same region, so the \(y\)-range is the same and the statement holds.
For the rest of the proof assume that \(j = j' + 1\) and regions \(U_{i, j'}\) and \(D_{i, j'}\) are
simple.

Consider the region \(R_{i, j} = (R_{i, j'}^R \cup U_{i, j'}^{RU} \cup D_{i, j'}^{RD}) \cap J_j\).
In principle, the union of the two quadrants may create a staircase with a single step.
However, as the reader may verify, adding the half-plane to the union cannot add a step, since doing
so would require a horizontal ray forming the top or the bottom boundary of the union of quadrants,
which is impossible.
Symmetrical arguments can be made for \(L_{i, j}\).
So, under the given assumptions the regions are always either simple or staircases with one step.
\Cref{fig:simpl_staircase,fig:single_step_prop_staircase} show some examples.

Now consider the union of regions \(R_{i, j} \cup L_{i, j}\):
\[R_{i, j} \cup L_{i, j} = (R_{i, j'}^R \cup L_{i, j'}^L \cup U_{i, j'}^U \cup D_{i, j'}^D) \cap J_j\,.\]
The only source of higher complexity is the union operator in the propagation.
This is the union of four half-planes.
If both \(R_{i, j'}\) and \(L_{i, j'}\) are non-empty, we know from \cref{lemma:intersect} that they
intersect, so \(J_j \subseteq R_{i, j'}^R \cup L_{i, j'}^L = \R^2\).
The same holds for the pair \(U_{i, j'}\), \(D_{i, j'}\).
Now assume that at least one region from each pair is empty, say, \(L_{i, j'}\) and \(D_{i, j'}\).
If one more region is empty, then one of \(L_{i, j}\), \(R_{i, j}\) is empty, which contradicts our
assumption.
Note that the union of two half-planes with perpendicular boundaries, intersected with a horizontal
strip and the interval free space, can create a staircase with one step.
In our particular setting we get the staircase in the \(RU\) arrangement, shown in
\cref{fig:staircase_2}.
Other choices for empty regions will give one of the other arrangements of \cref{fig:staircase}.
There are no other options, so the statement about the union \(R_{i, j} \cup L_{i, j}\) is proven.
\end{proof}

\begin{figure}
\begin{minipage}[b]{.48\linewidth}
\centering
\begin{tikzpicture}[scale=.96]
\fill[myBlue,opacity=.2] (1.5, -1) rectangle (5, 4);
\fill[myRed,opacity=.2] (-1, 1) rectangle (5, 4);
\fill[myRed,opacity=.2] (1.5, 1) rectangle (2.5, 4);
\draw[myRed] (1.5, 4) -- (1.5, 1) -- (2.5, 1) -- (2.5, 4);
\draw (1, -1) -- (6, 4) (-1, -1) -- (4, 4);
\draw (.5, -1) -- (.5, 4) (3.5, -1) -- (3.5, 4);
\draw[myRed,very thick] (1, 1) -- (2.5, 2.5) -- (2.5, 1) -- cycle;
\draw[myBlue,very thick] (1.5, 1.5) -- (3.5, 3.5) -- (3.5, 1.5) -- (1.5, -.5) -- cycle;
\draw[myPurple,very thick] (1.5, 1.5) -- (2.5, 2.5);
\node[myBlue] at (4.5, .5) {\(R_{i, j}^R\)};
\node[myRed] at (4.5, 1.5) {\(U_{i, j}^{RU}\)};
\node[myRed] at (-.5, 1.5) {\(U_{i, j}^{LU}\)};
\node[myRed] at (2, 3.5) {\(U_{i, j}^{U0}\)};
\node at (2.5, -.7) {\(I_{i + 1}\)};
\node at (0, -.7) {\(\Fd\)};
\end{tikzpicture}
\subcaption{Taking the union of \(R_{i, j}^R\) and \(U_{i, j}^{RU}\) creates a simple region.
The other region is also simple, but the union of resulting regions is a staircase.}
\label{fig:simpl_staircase}
\end{minipage}\hfill%
\begin{minipage}[b]{.48\linewidth}
\centering
\begin{tikzpicture}[scale=.96]
\fill[myBlue,opacity=.2] (-1, 0) -- (2, 0) -- (2.5, .5) -- (2.5, 4) -- (-1, 4) -- cycle;
\fill[myRed,opacity=.2] (-1, 1) rectangle (5, 4);
\draw (1, -1) -- (6, 4) (-1, -1) -- (4, 4);
\draw (.5, -1) -- (.5, 4) (3.5, -1) -- (3.5, 4);
\draw[myPurple,very thick] (.5, 0) -- (2, 0) -- (2.5, .5) -- (2.5, 1) -- (3, 1) -- (3.5, 1.5) --
                              (3.5, 3.5) -- (.5, .5) -- cycle;
\node[myBlue] at (-.5, .5) {\(L_{i, j}^{LU}\)};
\node[myRed] at (4.5, 1.5) {\(U_{i, j}^U\)};
\node[myPurple] at (2.5, 1.5) {\(U_{i + 1, j}\)};
\node at (2.5, -.7) {\(I_{i + 1}\)};
\node at (0, -.7) {\(\Fd\)};
\end{tikzpicture}
\subcaption{Taking the union of \(L_{i, j}^{LU}\) and \(U_{i, j}^U\) creates a staircase.
Intersection with \(I_{i + 1}\) preserves it: see the coloured outline of the resulting region for
\(U_{i + 1, j}\).}
\label{fig:single_step_prop_staircase}
\end{minipage}
\caption{Examples of staircase arrangements.}
\label{fig:staircase_origins}
\end{figure}

Now consider the propagation when we start from not necessarily simple regions or regions that do
not match in their \(y\)-range (or \(x\)-range), as described in \cref{lemma:union_step}.
Consider the complexity contribution when propagating across a cell\dsh say, \(U_{i, j}\) to
\(U_{i + 1, j}\).
To perform the propagation, we take
\[U_{i, j}^U = U_{i, j} \oplus H_U = U_{i, j} \oplus (\R \times \R^{\geq 0})\,.\]
From the definition of the Minkowski sum, it is easy to see that for non-empty \(U_{i, j}\) this
results in an upper half-plane with respect to the lowest point in \(U_{i, j}\).
Therefore, when propagating a region across the cell, it either contributes nothing if it is empty,
or it contributes a half-plane otherwise.
Therefore, to establish if we can arrive at progressively more complex regions, we need to
consider the other boundaries as source of complexity.
This insight together with the previous results informs the following argument.

\begin{lemma}\label{lemma:complexity}
The regions that we propagate are either simple, or staircases with one step, so the regions have
constant complexity.
\end{lemma}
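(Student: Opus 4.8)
The plan is to proceed by induction along the dependency order of the dynamic program of \cref{sec:lb1d} (the partial order depicted, for the general version, in \cref{fig:deps_proj}), proving that every region $U_{i,j}$, $D_{i,j}$, $R_{i,j}$, $L_{i,j}$ is either simple or a one-step staircase; the statement about constant complexity then follows, since a one-step staircase is two axis-aligned rectangles intersected with $\Fd$. The base case is exactly \cref{lemma:base}, which gives that $U_{1,j}$, $D_{1,j}$, $R_{i,1}$, $L_{i,1}$ are simple (the base-cell formulas such as $U_{1,1} = (I_1 \cap J_1)^{U0} \cap \Fd$ are covered there too). For the inductive step there are four symmetric propagation rules; I would carry out $U_{i+1,j} = (U_{i,j}^U \cup R_{i,j}^{RU} \cup L_{i,j}^{LU}) \cap I_{i+1}$ in full and note that $D_{i+1,j}$, $R_{i,j+1}$, $L_{i,j+1}$ are handled by mirror-image arguments.

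For the inductive step I would first isolate the elementary facts about the propagation operators. For a non-empty region $X$ the half-plane propagations $X^U, X^D, X^L, X^R$ are (convex, hence simple) half-planes bounded by a single horizontal or vertical line; this is the observation already recorded in the text, and, importantly, it does not care whether $X$ is simple or a one-step staircase, because the free coordinate of the Minkowski sum with $H_U$ (etc.) collapses the step. Next I would check that the quadrant propagations appearing in the rules, here $R_{i,j}^{RU}$ and $L_{i,j}^{LU}$, are genuine quadrants (possibly degenerating to a half-plane, the whole plane, or the empty set): a one-step staircase has its single L-shaped notch on one side, and in the cases that can actually occur that side does not face the propagation direction, so the relevant down-set stays convex. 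To know which notch orientations can occur for $R_{i,j}$ and $L_{i,j}$ I would invoke \cref{lemma:intersect} and \cref{lemma:union_step}; in particular \cref{lemma:intersect} rules out the configuration where $R_{i,j}^{RU}$ would sit entirely to the right of $L_{i,j}^{LU}$ (which would make their union disconnected rather than a staircase), since whenever both are non-empty they must intersect.

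Granting these ingredients, the propagated expression $U_{i,j}^U \cup R_{i,j}^{RU} \cup L_{i,j}^{LU}$ is a union of one horizontal half-plane and at most two quadrants whose $x$-extents overlap, and the same reasoning as in the proof of \cref{lemma:union_step} applies: the union of the two quadrants is simple or a one-step staircase, and adding the horizontal half-plane cannot create a new step, as that would require a horizontal ray bounding the quadrant union from above or below, which does not happen. Intersecting with the vertical slab $I_{i+1}$ (which already carries the factor $\Fd$) preserves ``simple or one-step staircase'': a slab intersection cannot add a step, and by the remark following the definition of a staircase, a one-step staircase meeting $\Fd$ splits into at most two disjoint simple pieces, so the result is still expressible as at most two axis-aligned rectangles intersected with $\Fd$, i.e.\ has complexity at most $2$. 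Carrying the induction through all cells yields the claim, and since no region other than a simple region or a one-step staircase is ever produced, all propagated regions have constant complexity.

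The step I expect to be the real obstacle is precisely the one where an \emph{already} non-simple region (a one-step staircase) is fed into the shadow operators $\bullet^{RU}$, $\bullet^{LU}$, and $\bullet^{U0}$: I need that doing so never produces anything worse than a quadrant, a half-slab, or another one-step staircase, i.e.\ that the ``bad'' side of a one-step staircase is never the side that faces the direction in which we propagate. This is exactly where the structural restrictions of \cref{lemma:intersect} and \cref{lemma:union_step} on which arrangements $R_{i,j}, L_{i,j}, U_{i,j}, D_{i,j}$ can take (and on which side a notch can appear) must be used, and possibly strengthened so that their hypotheses only require ``simple or one-step staircase'' inputs rather than simple ones. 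Once the orientations are pinned down, what remains is a somewhat tedious but routine case distinction over which of the four regions of a cell are empty, simple, or staircases, and in which orientation.
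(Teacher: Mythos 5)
Your overall plan follows the paper's: use \cref{lemma:base} for the base case, note that half-plane propagation collapses any complexity of the region it is applied to, and control the union of the two quadrant propagations via \cref{lemma:intersect} and \cref{lemma:union_step}. But the place you yourself flag as ``the real obstacle''\dsh what happens when a one-step staircase is fed into the quadrant operators\dsh is exactly the step the paper has to resolve, and you leave it unresolved: you assert that the notch of a staircase ``does not face the propagation direction'' and that \cref{lemma:intersect,lemma:union_step} could ``possibly be strengthened'' to inputs that are staircases, but you give no argument, and as stated those lemmas only apply to simple inputs. Without this step your induction invariant (``simple or one-step staircase'') is not established; in particular nothing in your write-up excludes that a staircase propagated by a quadrant and unioned with another region acquires a second step, so the case analysis you defer is not obviously routine.

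The paper closes this gap with a different, structural observation rather than an orientation analysis of notches: staircases only arise in the configuration of \cref{lemma:union_step}, i.e.\ when both regions of a pair (say \(R_{i,j}\) and \(L_{i,j}\)) are non-empty, and then by \cref{lemma:intersect} they intersect, so \(R_{i,j}^R \cup L_{i,j}^L = \R^2\). Consequently \(U_{i+1,j} \cup D_{i+1,j} = (U_{i,j}^U \cup D_{i,j}^D \cup R_{i,j}^R \cup L_{i,j}^L) \cap I_{i+1} = I_{i+1}\), and the propagation into the next cell (e.g.\ to \(R_{i+1,j+1}\)) becomes a union of two half-planes with parallel boundaries intersected with \(J_{j+1}\), hence simple. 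So the paper shows a two-step cycle simple \(\to\) (at worst) one-step staircase \(\to\) simple, which is what prevents complexity from accumulating; for the intermediate step it also argues that adding the half-plane \(U_{i,j}^U\) to \(L_{i,j}^{LU} \cup R_{i,j}^{RU}\) cannot add a step because that would require a horizontal ray bounding the quadrant union. This collapse argument is the missing idea in your proposal; supplying it (or an actual proof of your orientation claim for staircase inputs) is what would turn your sketch into a proof.
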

\begin{proof}
As shown in \cref{lemma:base}, the base regions are always simple.
Consider the regions \(R_{i, j}\), \(L_{i, j}\) for some \(i\) and \(j = j' + 1\).
The proof for \(U_{i, j}\), \(D_{i, j}\) is symmetric.
As we have just observed, the complexity of \(R_{i, j'}\) and \(L_{i, j'}\) is irrelevant for
\(R_{i, j}\), \(L_{i, j}\), as they contribute a half-plane in the worst case.
Furthermore, we have shown in \cref{lemma:union_step} that if \(U_{i, j'}\) and \(D_{i, j'}\) are
simple, then regions \(R_{i, j}\), \(L_{i, j}\) are at worst single-step staircases.

It remains to consider what happens as we propagate further from the regions obtained in
\cref{lemma:union_step}.
So suppose \(R_{i, j}\), \(L_{i, j}\) are obtained as in \cref{lemma:union_step}.
Again, their complexity is irrelevant for the complexity of \(R_{i, j + 1}\), \(L_{i, j + 1}\), so
it remains to answer the following question.
Assuming no restrictions on \(U_{i, j}\), \(D_{i, j}\), what is the possible complexity of
\(U_{i + 1, j}\), \(D_{i + 1, j}\)?
Consider the propagation for e.g.\@
\(U_{i + 1, j} = (U_{i, j}^U \cup R_{i, j}^{RU} \cup L_{i, j}^{LU}) \cap I_{i + 1}\).
As follows from \cref{lemma:union_step} and the mechanics of propagation, the region
\(L_{i, j}^{LU} \cup R_{i, j}^{RU}\) is either a simple region or a staircase with a single step,
unbounded horizontally.
Therefore, adding the half-plane of \(U_{i, j}^U\) cannot increase the complexity.
A symmetric argument holds for \(D_{i + 1, j}\).
Hence, both \(U_{i + 1, j}\) and \(D_{i + 1, j}\) are again either simple or staircases with a
single step.

Finally, consider the propagation through the next cell to the pair \(R_{i + 1, j + 1}\),
\(L_{i + 1, j + 1}\).
For the region \(R_{i + 1, j + 1}\) we need to compute \(U_{i + 1, j}^{RU} \cup D_{i + 1, j}^{RD}\).
Note that
\[U_{i + 1, j} \cup D_{i + 1, j} = (U_{i, j}^U \cup D_{i, j}^D \cup R_{i, j}^R \cup L_{i, j}^L)
\cap I_{i + 1}\,,\]
and as both \(R_{i, j}\) and \(L_{i, j}\) are non-empty and intersect, as follows from
\cref{lemma:intersect,lemma:union_step}, we conclude \(U_{i + 1, j} \cup D_{i + 1, j} = I_{i + 1}\).
Therefore, the region \(R_{i + 1, j + 1}\) is formed with a union of two half-planes with parallel
boundaries, and so the region is simple.
The same holds for \(L_{i + 1, j + 1}\).
So, within two propagation steps we may go from simple regions to staircase regions with one step
before returning to simple regions.
As there are no other possibilities for the propagation, the statement of the lemma holds.
\end{proof}

The operations we use during propagation can be done in constant time for
constant-complexity arguments.
Using \cref{lemma:complexity}, we state the main result.
\begin{theorem}
We can solve the decision problem for lower bound Fr\'echet distance on imprecise curves
of lengths \(m\) and \(n\) in~1D in time \(\Theta(mn)\).
\end{theorem}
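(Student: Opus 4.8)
The plan is to repackage the recurrences of this section as an explicit dynamic program over the $\Theta(mn)$ cells of the free-space diagram, to argue its correctness by reduction to \cref{lem:dp}, and to bound the cost of filling each cell by a constant using \cref{lemma:complexity}.

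First I would settle correctness. The regions $U_{i,j}, D_{i,j}, R_{i,j}, L_{i,j}$ are by construction the direction-split projections of the sets $\Reg_{i,j^*}$ and $\Reg_{i^*,j}$ from \cref{sec:lbdd}, and the recurrences here are obtained from \cref{lem:dp} by (i) substituting the formula for $\Reg_{i^*,j^*}$ into the formulas for $\Reg_{i+1,j}$ and $\Reg_{i,j+1}$ and using $\Reg_{i,j}\subseteq\Reg_{i^*,j}$, $\Reg_{i,j}\subseteq\Reg_{i,j^*}$ to drop the explicit $\Reg_{i,j}$, and (ii) rewriting the operator $\star$ in the case $S = S^0$ as the Minkowski sums $X^a$, $X^{ab}$, $X^{a0}$, depending on which of the two edge directions has already been committed to. Tracing these rewritings shows that the recurrences compute exactly the claimed sets, so by \cref{lem:dp} the final displayed test --- non-emptiness of $((R_{m-1,n}\cup L_{m-1,n})\cap I_m)\cup((U_{m,n-1}\cup D_{m,n-1})\cap J_n)$, which is $\pr_{1,2}(\Reg_{m,n})$ --- holds iff $\frmin(\U,\V)\le\delta$. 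The main obstacle, I expect, is not geometric but precisely this bookkeeping: carefully verifying that the simplifications (splitting by direction, discarding $\Reg_{i,j}$, inlining $\Reg_{i^*,j^*}$) preserve the semantics of the general dynamic program.

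Next I would bound the running time. By \cref{lemma:base,lemma:intersect,lemma:union_step,lemma:complexity}, every region we ever store is empty, simple, or a one-step staircase, and hence is described by a bounded number of horizontal, vertical, and diagonal ($\partial\Fd$) segments, so it fits in $O(1)$ words under a canonical representation. Each primitive appearing in a recurrence --- Minkowski sum with one of the fixed shapes $Q_{ab}$, $H_a$, $S_a$; intersection with $\Fd$, $I_i$, or $J_j$; and union of a constant number of such pieces --- maps $O(1)$-complexity inputs to $O(1)$-complexity output and, by a finite case analysis on the constantly many shapes involved, can be performed in $O(1)$ time. Here I would also note the one mild subtlety flagged before the definition of staircases: intersecting a one-step staircase with $\Fd$ may split it into two disjoint simple pieces, but the total complexity remains bounded, so the $O(1)$ per-operation bound is unaffected.

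Finally I would assemble the cost. The base cases along the first row and column are computed in $O(m+n)$ time; for each of the $\Theta(mn)$ remaining cells we compute the four regions $U_{i,j}, D_{i,j}, R_{i,j}, L_{i,j}$, each from a constant number of already-computed regions with $O(1)$ work; and the closing feasibility test takes $O(1)$ time given the last regions. This yields the $O(mn)$ upper bound, and the matching $\Omega(mn)$ bound is immediate since the dynamic program necessarily evaluates $\Theta(mn)$ cells. Hence the decision problem for the lower bound Fr\'echet distance on imprecise curves in~1D is solved in $\Theta(mn)$ time.
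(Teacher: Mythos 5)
Your proposal is correct and follows essentially the same route as the paper: the paper's (very terse) proof likewise rests on the correctness of the direction-split regions as instantiations of \cref{lem:dp} and on \cref{lemma:complexity} giving constant-complexity regions, so that each of the $\Theta(mn)$ propagation steps costs $O(1)$ time. Your write-up merely makes explicit the bookkeeping (inlining $\Reg_{i^*,j^*}$, discarding $\Reg_{i,j}$, the staircase-splitting subtlety) that the paper leaves implicit.
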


\section{Upper Bound Fr\'echet Distance}\label{sec:ubfr}
Until this point, we have been discussing the lower bound Fr\'echet distance.
We now turn our attention to the upper bound. 
The problem is known to be NP-hard in~2D in all variants we consider~\cite{icalp}; we show here that
this remains true even in~1D.
Define the following problems for the discrete and continuous Fr\'echet distance.
\begin{problem}
\textsc{Upper Bound (Discrete) Fr\'echet:} Given two uncertain trajectories \(\U\) and \(\V\) in 1D
of lengths \(m\) and \(n\), respectively, and a threshold \(\delta > 0\), determine if
\(\frmax(\U, \V) \leq \delta\) (\(\dfrmax(\U, \V) \leq \delta\)).
\end{problem}
We show that these problems are NP-hard both for indecisive and imprecise models by giving a
reduction from \textsc{CNF-SAT}.
The construction we use is similar to that used in~2D; however, in~2D the desired alignment of
subcurves is achieved by having one of the curves be close enough to \((0, 0)\) at all times.
Here making a curve close to \(0\) will not work, so we need to add extra gadgets instead that can
`eat up' the alignment of the subcurves that we do not care about.
We start by describing the construction and then show how it leads to the NP-hardness argument.

Suppose we are given a CNF-SAT formula \(C\) on \(n\) clauses and \(m\) variables:
\[C = \bigwedge_{i \in [n]} C_i\,,\qquad C_i = \bigvee_{j \in J \subseteq [m]} x_j \lor \bigvee_{k
\in K \subseteq [m] \setminus J} \neg x_k\quad\text{for all \(i \in [n]\).}\]
We define an \emph{assignment} as a function \(a \colon \{x_1, \dots, x_m\} \to \{\True, \False\}\)
that assigns a value to each variable, \(a(x_j) = \True\) or \(a (x_j) = \False\) for any
\(j \in [m]\).
\(C[a]\) then denotes the result of substituting \(x_j \mapsto a(x_j)\) in \(C\) for all \(j \in [m]\).
We construct two curves: curve \(\U\) is an uncertain curve that represents the variables, and curve
\(\V\) is a precise curve that represents the structure of the formula.

\subparagraph{Literal level.}
Define a \emph{literal gadget} for curve \(\V\):
\[\mathrm{LG}_{i, j} = \begin{cases}
\hphantom{-}0\hphantom{.75} \concat 1.5 & \text{if \(x_j\) is a literal of \(C_i\),}\\
-1.5\hphantom{7} \concat 1.5 & \text{if \(\neg x_j\) is a literal of \(C_i\),}\\
-0.75 \concat 1.5 & \text{otherwise.}
\end{cases}\]
Consider for now the indecisive uncertainty model.
The curve \(\U\) has an indecisive point per variable, each with two options, corresponding to
\True\ and \False\ assignments.
Define a \emph{variable gadget} for curve \(\U\):
\[\mathrm{VG}_j = \{-1.5, 0\} \concat 2.5\,.\]
Here the notation \(\{-1.5, 0\}\) denotes an indecisive point with two possible locations \(-1.5\)
and \(0\).
We interpret the position \(-1.5\) as assigning \(x_j = \True\) and the position \(0\) as assigning
\(x_j = \False\).
Observe the relationship between \(\mathrm{LG}_{i, j}\) and \(\mathrm{VG}_j\) for any given
\(i \in [n]\): the distance between the first points of the gadgets is large if the given variable
assignment turns the clause true.
For instance, if a clause has the literal \(x_j\), then the choice of \(x_j = \True\) makes the
distance between the first points \(1.5 > 1\); if the literal is \(\neg x_j\) and we make the same
choice, then the distance is \(0\); and if the literal does not occur in \(C_i\), then whichever
realisation we pick, the distance is \(0.75 < 1\).

\subparagraph{Clause level.}
We now aggregate the literal gadgets into \emph{clause gadgets.}
Similarly, we aggregate the variable gadgets into the \emph{variable section:}
\[\mathrm{CG}_{i} = 3.5 \concat \Concat_{j \in [m]} \mathrm{LG}_{i, j}\,,\qquad
\mathrm{VS} = 4.5 \concat \Concat_{j \in [m]} \mathrm{VG}_j\,.\]
Suppose that we pick some realisation for all the variables with some function \(a\).
Pick a clause \(C_i\).
Suppose that \(C_i[a] = \True\).
This means there is at least one \(x_j\) assigned in a way that makes \(C_i\) turn true.
In our construction, this means that there is at least one pair of \(\mathrm{LG}_{i, j}\) and
\(\mathrm{VG}_j\) that gives a large distance between the first two points.
If we are interested in just the Fr\'echet distance between \(\mathrm{CG}_i\) and \(\mathrm{VS}\)
for some fixed \(i\), we can state the following.
\begin{lemma}\label{lemma:clause}
For some fixed \(i \in [n]\), the (discrete) Fr\'echet distance between \(\mathrm{CG}_i\),
corresponding to clause \(C_i\), and a realisation \(\pi \Subset \mathrm{VS}\), corresponding to an
assignment \(a\), is \(1\) iff \(C_i[a] = \False\), and is \(1.5\) iff \(C_i[a] = \True\), and there
are no other possible values.
\end{lemma}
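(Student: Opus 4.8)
The plan is to get the two easy bounds \(\fr(\mathrm{CG}_i,\pi)\le 1.5\) and \(\fr(\mathrm{CG}_i,\pi)\ge 1\) for free and then do the real work in the \(C_i[a]=\True\) case. Note first that \(\mathrm{CG}_i\) and \(\pi\) have the same length \(2m+1\) and the same ``zig-zag'' shape: a high start (at \(3.5\), resp.\ \(4.5\)), followed by \(m\) pairs consisting of a valley \(a_j\in\{0,-0.75,-1.5\}\) (resp.\ \(b_j\in\{-1.5,0\}\)) and a shoulder at \(1.5\) (resp.\ \(2.5\)). Hence the vertex-to-vertex (``diagonal'') matching has cost \(\max\bigl(|3.5-4.5|,\ \max_j|a_j-b_j|,\ |1.5-2.5|\bigr)=\max(1,\max_j|a_j-b_j|)\). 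By construction \(|a_j-b_j|=1.5\) exactly when the literal of \(C_i\) on \(x_j\) (if any) is satisfied by \(a\), and \(|a_j-b_j|\le 0.75\) otherwise; so the diagonal cost is \(1\) if \(C_i[a]=\False\) and \(1.5\) if \(C_i[a]=\True\). Since the diagonal is a discrete matching this already gives \(\fr(\mathrm{CG}_i,\pi)\le\dfr(\mathrm{CG}_i,\pi)\le 1.5\) always, and \(\le 1\) when \(C_i[a]=\False\). Conversely the start vertices \(3.5,4.5\) (and likewise the end vertices \(1.5,2.5\)) must be matched, so \(\fr(\mathrm{CG}_i,\pi)\ge 1\) always. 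Thus when \(C_i[a]=\False\) we get \(\fr=\dfr=1\), and it remains to prove \(\fr(\mathrm{CG}_i,\pi)\ge 1.5\) (equivalently \(\dfr\ge 1.5\)) when \(C_i[a]=\True\); fix an index \(j^*\) of a literal of \(C_i\) satisfied by \(a\).

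For the discrete distance this is clean. I would classify the cells \((p,q)\) of cost \(<1.5\): a shoulder of \(\mathrm{CG}_i\) (value \(1.5\)) is within \(1.5\) only of a shoulder of \(\pi\) (value \(2.5\)), never of a \(\pi\)-valley (\(\le 0\)) or the \(\pi\)-start (\(4.5\)); a \(\pi\)-valley (value \(\le 0\)) is within \(1.5\) only of a \(\mathrm{CG}_i\)-valley. A short induction then shows that the only monotone coupling from \((1,1)\) to \((2m+1,2m+1)\) using only cells of cost \(<1.5\) is the main diagonal: from \((1,1)\) both \((1,2)\) and \((2,1)\) cost \(\ge 3.5\); from a diagonal valley cell \((2j,2j)\) the cells \((2j,2j{+}1)\) and \((2j{+}1,2j)\) cost \(\ge 1.5\); from a diagonal shoulder cell \((2j{+}1,2j{+}1)\) the cells \((2j{+}1,2j{+}2)\) and \((2j{+}2,2j{+}1)\) cost \(\ge 1.5\). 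But this diagonal passes through the valley cell \((2j^*,2j^*)\) of cost \(|a_{j^*}-b_{j^*}|=1.5\), a contradiction; hence no discrete coupling has cost \(<1.5\), so \(\dfr=1.5\).

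For the continuous distance I would work in \(\Fd\) and show it admits no monotone path from \((1,1)\) to \((2m+1,2m+1)\) for any \(\delta<1.5\). The continuous analogues of the observations above are: on the vertical line through a valley \(x=2k\) of \(\mathrm{CG}_i\) (value \(a_k\)) the free space lies in small intervals around those valleys \(2l\) of \(\pi\) with \(b_l\le a_k+\delta\), hence around no shoulder of \(\pi\); symmetrically for the lines through valleys of \(\pi\); and a neighbourhood of the start of \(\mathrm{CG}_i\) is free only against a neighbourhood of the start of \(\pi\). A monotone path meets each such line in a connected sub-arc, so its crossing of that line is confined to a single free-space component, i.e.\ to one ``valley region''. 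Because the free region around a shoulder of one curve is disjoint from the free region around a valley of the other, the alternation of valleys and shoulders along the two curves is rigid: sweeping the path's \(x\)-coordinate through the valleys \(x=2,4,\dots,2m\) and shoulders \(x=3,5,\dots,2m+1\) of \(\mathrm{CG}_i\), its \(y\)-coordinate alternates between valley regions and shoulder regions of \(\pi\); one then argues (inductively, or by pigeonhole using that there are only \(m\) valleys on each curve and that \(j^*\) gets ``skipped'' because \(a_{j^*}=0\) cannot serve a \(\pi\)-valley at \(-1.5\)) that the path, when \(x=2j^*\), must have \(y\) in the \(\pi\)-valley region around \(y=2j^*\). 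In case \(a_{j^*}=0,\ b_{j^*}=-1.5\) this is impossible: on the horizontal line \(y=2j^*\) the free space requires \(\mathrm{CG}_i(x)\le -1.5+\delta<0\), whereas \(\mathrm{CG}_i\) has a valley of value \(0\) at \(x=2j^*\) and is therefore nonnegative nearby; the case \(a_{j^*}=-1.5,\ b_{j^*}=0\) is symmetric. Hence \(\fr(\mathrm{CG}_i,\pi)\ge 1.5\), so \(\fr=1.5\); and the statement ``no other possible values'' follows from the two cases.

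The main obstacle is this last step: making ``rigid alternation'' precise is delicate because for \(\delta\) close to \(1.5\) the free regions around a valley and around the neighbouring shoulder of a single curve overlap, so one cannot literally say the path ``is in valley region \(k\)''. The robust way is to phrase everything in terms of the connected components met by the path along the valley lines (these components still sit around valleys only, since a shoulder of the other curve lies at height \(2.5>\delta\) from a valley), and to use monotonicity plus convexity of each component to conclude that if two \(\pi\)-valley crossings fall in the same \(\mathrm{CG}_i\)-valley component, then the intervening \(\pi\)-shoulder crossing also falls there, forcing \(\mathrm{CG}_i(x)<1.5\) at a point where it must be \(\ge 2.5-\delta>1\) — the desired contradiction. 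Everything else (the diagonal-cost computation, the cheap-cell classification, the endpoint bound) is routine.
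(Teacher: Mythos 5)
Your discrete argument is sound and is essentially the paper's own: the matched endpoints force cost at least \(1\); below cost \(1.5\) a shoulder of \(\mathrm{CG}_i\) can only be coupled with a shoulder of \(\pi\) and a valley of \(\pi\) only with a valley of \(\mathrm{CG}_i\); hence the only cheap coupling is the diagonal one, whose cost is \(1\) or \(1.5\) according to whether \(C_i[a]\) is \False\ or \True.

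The continuous half\dsh the step you yourself flag as delicate\dsh has a genuine gap, and it cannot be closed: the ``rigid alternation'' is not forced. The culprit is the first edge of \(\mathrm{CG}_i\), which descends from \(3.5\) to \(a_1 \le 0\) and therefore contains points within \(\delta\) of a \(\pi\)-shoulder (value \(2.5\)) and, further down, points within \(\delta\) of a \(\pi\)-valley; so the parametrisation of \(\mathrm{CG}_i\) can lag on this edge while \(\pi\) sweeps an entire valley--shoulder--valley block, and your pigeonhole step (when one curve is at its \(j^*\)-th valley, the other sits in the free-space component around its own \(j^*\)-th valley) fails. Concretely, take \(m = 2\), \(C_i = x_2\), \(x_1 = \False\), \(x_2 = \True\), so \(\mathrm{CG}_i = \langle 3.5, -0.75, 1.5, 0, 1.5\rangle\), \(\pi = \langle 4.5, 0, 2.5, -1.5, 2.5\rangle\), and \(C_i[a] = \True\). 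Descend on \(\mathrm{CG}_i\) to value \(1.25\) on its first edge while \(\pi\) goes \(4.5 \to 0\); wait there while \(\pi\) goes \(0 \to 2.5\); let both descend simultaneously to \(-0.75\) and \(-1.5\); then let \(\pi\) climb to value \(0.9\) and wait while \(\mathrm{CG}_i\) goes \(-0.75 \to 1.5 \to 0\); finally finish both curves together. This is a monotone matching of width \(1.25\), so \(\fr(\mathrm{CG}_i, \pi) \le 1.25 < 1.5\) even though the clause is satisfied. Hence no proof of the continuous bound \(\fr \ge 1.5\) can exist, and the exact trichotomy of the lemma is only salvageable for \(\dfr\). (The paper's own proof glosses over this by asserting the one-to-one matching also in the continuous case, so your discrete analysis is in fact the solid part. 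What the reduction with threshold \(\delta = 1\) actually needs in the continuous case is only the weaker claim that \(\fr(\mathrm{CG}_i, \pi) = 1\) when \(C_i[a] = \False\) and \(\fr(\mathrm{CG}_i, \pi) > 1\) when \(C_i[a] = \True\); that is the statement to aim for, and your component-tracking machinery is better suited to it, since the lag manoeuvre above is only available at width \(1.25\) or more.)
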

\begin{proof}
First of all, note that the points~\(4.5\) and~\(3.5\) must be matched, yielding the distance of at
least~\(1\) between the curves.
Furthermore, the only point within distance~\(1.5\) of the point~\(2.5\) that occurs at the end of
every \(\mathrm{VG}_j\) is the last point of every \(\mathrm{LG}_{i, j}\), namely,~\(1.5\).
Observe that simply walking along both curves, matching point \(k\) on one curve to point \(k\) on
the other curve for every \(k\), gives us (discrete) Fr\'echet distance of at most~\(1.5\).
Thus, the optimal matching will always match the point~\(2.5\) to one of the points at~\(1.5\).
Furthermore, the optimal solution will always match the first point of \(\mathrm{LG}_{i, j}\) to the
indecisive point of \(\mathrm{VG}_j\), as the point at~\(2.5\) is always too far.
Therefore, both for Fr\'echet and discrete Fr\'echet distance the optimal matching is one-to-one,
i.e.\@ we advance along both curves on every step.
The initial synchronisation points yield the distance~\(1\), as do the second points in the literal
level gadgets; each indecisive points is matched at distance of either~\(0\),~\(0.75\), or~\(1.5\).
The latter case only occurs if the assignment of the variable makes the clause satisfied.
So, indeed, we conclude that we can only get the distance of either~\(1\) or~\(1.5\), and the latter
is only possible if some variable turns the clause to true, so if \(C_i[a] = \True\).
Otherwise, the clause is false, and the distance is~\(1\).
\end{proof}

\subparagraph{Formula level.}
We can now paste the clause gadgets together.
Once we do that, we would like to have a way to freely choose a clause to align with the variable
section: then, if there is a clause that is not satisfied, choosing that clause would yield a small
overall distance; and if all clauses are satisfied, then any one of them will give a large distance,
and so we can distinguish between whether the formula is satisfied or not.
As a starting point, it is clear that we need to prepend and append something to the variable
section that would catch the clauses that are not aligned with the variable section.
We devise the following gadget for that:
\[\mathrm{abs} = 2.5 \concat \Concat_{j \in [m]} (-0.5 \concat 0.5)\,.\]
We show that this gadget may indeed be satisfactorily aligned with any \(\mathrm{CG}_i\).
\begin{lemma}\label{lemma:abs}
The (discrete) Fr\'echet distance between \(\mathrm{abs}\) and any \(\mathrm{CG}_i\) is~\(1\).
\end{lemma}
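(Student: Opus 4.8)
The plan is to sandwich the distance between $1$ from below and $1$ from above, which pins it to exactly $1$. For the lower bound I would argue that any matching — discrete, continuous, or even weak — must align the first vertices of the two curves at parameter $0$. Hence the cost is at least $\lvert 2.5 - 3.5\rvert = 1$: the leading vertex $3.5$ of $\mathrm{CG}_i$ is within distance $1$ only of the leading vertex $2.5$ of $\mathrm{abs}$, so this contribution is unavoidable. This step is immediate and is not where the work lies.

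For the upper bound I would exhibit an explicit matching of cost exactly $1$. First observe that $\mathrm{CG}_i = 3.5 \concat \Concat_{j \in [m]} \mathrm{LG}_{i,j}$ and $\mathrm{abs} = 2.5 \concat \Concat_{j \in [m]} (-0.5 \concat 0.5)$ both have $1 + 2m$ vertices, so the one-to-one matching that advances along both curves simultaneously is well-defined. Then I would check each matched pair: the synchronisation vertices $2.5$ and $3.5$ are at distance $1$; the final vertex $1.5$ of each literal gadget $\mathrm{LG}_{i,j}$ is matched with the vertex $0.5$ of $\mathrm{abs}$, again at distance $1$; and the first vertex of $\mathrm{LG}_{i,j}$, which is one of $0$, $-0.75$, or $-1.5$, is matched with the vertex $-0.5$, at distance $0.5$, $0.25$, or $1$ respectively — all at most $1$. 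This gives $\dfr(\mathrm{abs}, \mathrm{CG}_i) \le 1$. Since a discrete matching extends to a continuous one of the same cost (along each edge the distance to a fixed point is maximised at an endpoint, by convexity of the norm), we also get $\fr(\mathrm{abs}, \mathrm{CG}_i) \le 1$; alternatively one invokes the standard inequality $\fr \le \dfr$.

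Combining the two bounds yields that both the discrete and continuous Fr\'echet distances between $\mathrm{abs}$ and any $\mathrm{CG}_i$ equal $1$. The only points that need a moment of care — and the nearest thing to an obstacle, though neither is difficult — are verifying that the two curves have the same number of vertices so that the simultaneous one-to-one matching is legitimate, and handling the continuous and discrete variants uniformly via the edge-convexity remark (or the $\fr \le \dfr$ inequality).
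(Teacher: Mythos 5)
Your proposal is correct and follows essentially the same route as the paper: the lower bound comes from the forced matching of the initial synchronisation points \(3.5\) and \(2.5\), and the upper bound from the step-by-step one-to-one matching, whose pairwise distances you verify exactly as the paper does (with the continuous case handled by interpolating the discrete matching, i.e.\@ \(\fr \leq \dfr\)). No gaps.
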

\begin{proof}
First of all, note that we must match the first synchronisation point of \(\mathrm{CG}_i\)
at~\(3.5\) to some point on the other curve, and the only point in \(\mathrm{abs}\) that is close
enough is the point at~\(2.5\) in the beginning.
This establishes the lower bound of~\(1\).
Furthermore, we can always get the distance of~\(1\) by walking step-by-step along both curves:
the distance between any of~\(-1.5\),~\(-0.75\), and~\(0\) is at most~\(1\) to~\(-0.5\), and the
distance between~\(1.5\) and~\(0.5\) is~\(1\).
Thus, the statement holds.
\end{proof}
We need as many of these gadgets as there may be misaligned clauses.
In the worst case, we may align \(\mathrm{CG}_1\) or \(\mathrm{CG}_n\) with \(\mathrm{VS}\), and so
we need \(n - 1\) of the catch gadgets before and after \(\mathrm{VS}\).
However, the new problem we get is that now the extra \(\mathrm{abs}\) clauses need to be aligned
with something.
To that end, we devise the following gadget:
\[\mathrm{abs}^2 = 1.5 \concat 0.5\,.\]
Again, we show that it can perform its function.
\begin{lemma}\label{lemma:abs2}
The (discrete) Fr\'echet distance between \(\mathrm{abs}^2\) and \(\mathrm{abs}\) is~\(1\).
\end{lemma}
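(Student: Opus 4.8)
The plan is to follow the same two-part recipe as in the proof of \cref{lemma:abs}: first pin down the lower bound via the forced matching of the starting vertices, then exhibit an explicit matching of cost exactly~\(1\) that serves both the continuous and the discrete variant simultaneously.

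For the lower bound, note that \(\mathrm{abs}^2 = 1.5 \concat 0.5\) only visits points in the interval \([0.5, 1.5]\), while the first vertex of \(\mathrm{abs}\) sits at~\(2.5\). Since any matching (continuous or discrete) must match the starting points of the two curves to each other, the cost is at least \(\lvert 2.5 - 1.5\rvert = 1\); in fact every point of \(\mathrm{abs}^2\) is at distance at least~\(1\) from~\(2.5\), so nothing can do better on that pair.

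For the upper bound I would build a matching of cost~\(1\) as follows. Match the vertex~\(2.5\) of \(\mathrm{abs}\) to the vertex~\(1.5\) of \(\mathrm{abs}^2\), at distance~\(1\). Then let \(\mathrm{abs}\) descend from~\(2.5\) to its first occurrence of~\(-0.5\), sweeping through every value \(v \in [-0.5, 2.5]\), while the point on \(\mathrm{abs}^2\) travels according to \(w(v) = \min(1.5, \max(0.5, v))\); this \(w(v)\) moves monotonically from~\(1.5\) down to~\(0.5\) as \(v\) decreases, and satisfies \(\lvert v - w(v)\rvert \le 1\) throughout. By the time \(\mathrm{abs}\) reaches its first~\(-0.5\), the point on \(\mathrm{abs}^2\) has arrived at its final vertex~\(0.5\). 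For the remainder, keep \(\mathrm{abs}^2\) parked at~\(0.5\) while \(\mathrm{abs}\) runs through its remaining vertices \(-0.5, 0.5, \dots, 0.5\); all of these lie in \([-0.5, 0.5]\) and hence within distance~\(1\) of~\(0.5\). This matching has cost~\(1\) and restricts to a discrete matching of cost~\(1\) given by the vertex pairs \((1.5, 2.5), (0.5, -0.5), (0.5, 0.5), (0.5, -0.5), \dots, (0.5, 0.5)\), whose consecutive distances are all~\(1\) or~\(0\). Combining the two bounds gives that the (discrete) Fr\'echet distance equals~\(1\).

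The only point requiring care is the descent phase: one must check that the reparametrisation of \(\mathrm{abs}^2\) induced by \(w(v)\) is genuinely monotone in the direction of travel and never leaves the leash. This is immediate from the clamp formula, so it is the sole (and routine) obstacle.
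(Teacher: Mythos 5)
Your proof is correct and follows essentially the same route as the paper's: the lower bound comes from the forced matching of the first vertex \(2.5\) of \(\mathrm{abs}\) to the vertex \(1.5\) of \(\mathrm{abs}^2\), and the upper bound from a matching that advances to the second vertex of \(\mathrm{abs}^2\) and stays at \(0.5\) while \(\mathrm{abs}\) alternates between \(-0.5\) and \(0.5\). The paper simply steps along both first edges simultaneously rather than using your clamp reparametrisation, but this is the same matching in substance.
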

\begin{proof}
First of all, note that we must match the first synchronisation point of \(\mathrm{abs}\)
at~\(2.5\) to the point at~\(1.5\) on \(\mathrm{abs}^2\), giving the lower bound of~\(1\).
Furthermore, we can always get the distance of~\(1\) by stepping to the second point on both curves
and staying at~\(0.5\) on \(\mathrm{abs}^2\) while alternating between~\(-0.5\) and~\(0.5\) on
\(\mathrm{abs}\).
Thus, the statement holds.
\end{proof}
Finally, we need to align these gadgets with something, but that is not too difficult, as they only
have the length of \(1\).
We define our final uncertain curves:
\[\U = 1 \concat \Big(\Concat_{\mathclap{i \in [n - 1]}} \mathrm{abs}\Big) \concat
\mathrm{VS} \concat \Big(\Concat_{\mathclap{i \in [n - 1]}} \mathrm{abs}\Big) \concat 1\,,\qquad
\V = \Big(\Concat_{\mathclap{i \in [n - 1]}} \mathrm{abs}^2\Big) \concat
\Big(\Concat_{\mathclap{i \in [n]}} \mathrm{CG}_i\Big) \concat
\Big(\Concat_{\mathclap{i \in [n - 1]}} \mathrm{abs}^2\Big)\,.\]
We illustrate the curves in \cref{fig:ub_construction}.
With these definitions, we can show the following.

{\def\y{-.3}
\begin{figure}
\centering
\begin{tikzpicture}[xscale=.38,yscale=.85]
\draw (-3, \y) -- (18, \y) (-3, 8*\y) -- (18, 8*\y) (-3, 14*\y) -- (18, 14*\y)
    (-3, 21*\y) -- (18, 21*\y);
\node at (-2.5, 11*\y) {\(\mathrm{VS}\)};
\node at (-2.5, 4.5*\y) {\(\mathrm{abs}\)};
\node at (-2.5, 17.5*\y) {\(\mathrm{abs}\)};
\draw[dashed] (-1.5, 0) -- (-1.5, 24*\y) node[below] {\(-1.5\)}
    (0, 0) -- (0, 24*\y) node[below] {\(0\)} (1.5, 0) -- (1.5, 24*\y) node[below] {\(1.5\)}
    (2.5, 0) node[above] {\(2.5\)} -- (2.5, 24*\y) (3.5, 0) -- (3.5, 24*\y) node[below] {\(3.5\)}
    (4.5, 0) node[above] {\(4.5\)} -- (4.5, 24*\y);
\draw[myBlue,thick] (1, 0) -- (2.5, \y) -- (-.5, 2*\y) -- (.5, 3*\y) -- (-.5, 4*\y) -- (.5, 5*\y)
    -- (-.5, 6*\y) -- (.5, 7*\y) -- (4.5, 8*\y) -- (-1.5, 9*\y) -- (2.5, 10*\y) -- (-1.5, 11*\y)
    -- (2.5, 12*\y) -- (0, 13*\y) -- (2.5, 14*\y) -- (2.5, 15*\y) -- (-.5, 16*\y) -- (.5, 17*\y)
    -- (-.5, 18*\y) -- (.5, 19*\y) -- (-.5, 20*\y) -- (.5, 21*\y) -- (1, 22*\y);

\begin{scope}[xshift=7.5cm]
\draw[dashed] (-1.5, 0) -- (-1.5, 24*\y) node[below] {\(-1.5\)}
    (0, 0) -- (0, 24*\y) node[below] {\(0\)} (1.5, 0) -- (1.5, 24*\y) node[below] {\(1.5\)}
    (2.5, 0) node[above] {\(2.5\)} -- (2.5, 24*\y) (3.5, 0) -- (3.5, 24*\y) node[below] {\(3.5\)};
\end{scope}
\begin{scope}[xshift=7.5cm,yshift=6*\y cm]
\draw[myRed,thick] (1.5, 0) -- (.5, \y) -- (3.5, 2*\y) -- (0, 3*\y) -- (1.5, 4*\y) -- (-.75, 5*\y)
    -- (1.5, 6*\y) -- (0, 7*\y) -- (1.5, 8*\y) -- (3.5, 9*\y) -- (-1.5, 10*\y) -- (1.5, 11*\y)
    -- (0, 12*\y) -- (1.5, 13*\y) -- (-1.5, 14*\y) -- (1.5, 15*\y) -- (1.5, 16*\y) -- (.5, 17*\y);
\end{scope}

\begin{scope}[xshift=14.5cm]
\draw[dashed] (-1.5, 0) -- (-1.5, 24*\y) node[below] {\(-1.5\)}
    (0, 0) -- (0, 24*\y) node[below] {\(0\)} (1.5, 0) -- (1.5, 24*\y) node[below] {\(1.5\)}
    (2.5, 0) node[above] {\(2.5\)} -- (2.5, 24*\y) (3.5, 0) -- (3.5, 24*\y) node[below] {\(3.5\)};
\end{scope}
\begin{scope}[xshift=14.5cm,yshift=-1*\y cm]
\draw[myRed,thick] (1.5, 0) -- (.5, \y) -- (3.5, 2*\y) -- (0, 3*\y) -- (1.5, 4*\y) -- (-.75, 5*\y)
    -- (1.5, 6*\y) -- (0, 7*\y) -- (1.5, 8*\y) -- (3.5, 9*\y) -- (-1.5, 10*\y) -- (1.5, 11*\y)
    -- (0, 12*\y) -- (1.5, 13*\y) -- (-1.5, 14*\y) -- (1.5, 15*\y) -- (1.5, 16*\y) -- (.5, 17*\y);
\end{scope}
\end{tikzpicture}\hfil
\includegraphics{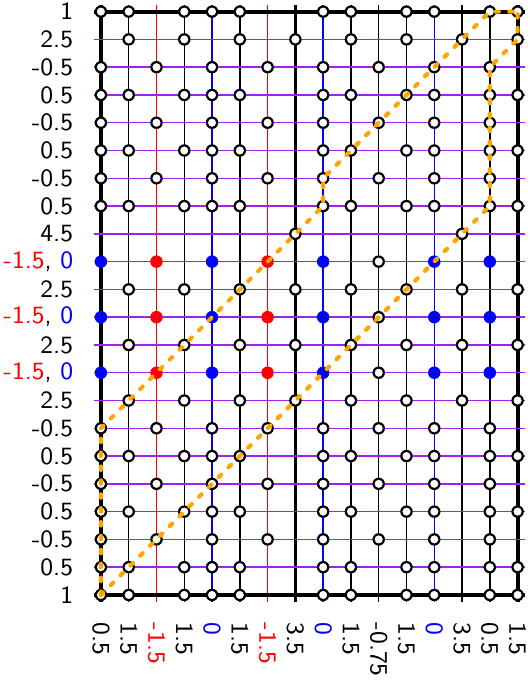}
\caption{
Left: The \textcolor{myBlue}{realisation of \(\U\)} for assignment
\(x_1 = \True\), \(x_2 = \True\), \(x_3 = \False\)
and \textcolor{myRed}{curve \(\V\)} for the formula
\(C = (x_1 \lor x_3) \land (\neg x_1 \lor x_2 \lor \neg x_3)\).
Note that \(C = \True\) with this assignment, and that both feasible alignments give (discrete)
Fr\'echet distance of~\(1.5\).
Right:
The corresponding free space.
White dots are accessible, spots without a dot are never accessible.
Blue (red) dots are only accessible if the corresponding variable is set to \True\ (\False).
Yellow dashed paths indicate potential paths through the free space; the goal is to determine
if the variables can be set such that all potential paths are blocked.
}
\label{fig:ub_construction}
\end{figure}}

\begin{theorem}
The problem \textsl{\textsc{Upper Bound (Discrete) Fr\'echet}} is NP-hard in the indecisive model.
\end{theorem}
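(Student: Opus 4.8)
The plan is to show that the map sending a \textsc{CNF-SAT} formula $C$ to the triple $(\U, \V, \delta)$ constructed above, with $\delta$ any fixed value in $[1, 1.5)$, say $\delta = 1$, is a polynomial-time reduction. Since $\U$ and $\V$ each have $\Theta(mn)$ vertices and are clearly computable in polynomial time, it suffices to prove
\[\frmax(\U, \V) = 1.5 \iff C \text{ is satisfiable}, \qquad \frmax(\U, \V) = 1 \iff C \text{ is unsatisfiable},\]
together with the analogous statement for $\dfrmax$; then any polynomial-time algorithm deciding $\frmax(\U, \V) \le \delta$ (or $\dfrmax(\U, \V) \le \delta$) would solve \textsc{CNF-SAT}. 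As $\V$ is precise, every realisation of the pair has the form $(\pi_a, \V)$, where $a$ ranges over the $2^m$ assignments and $\pi_a \Subset \U$ places the indecisive point of $\mathrm{VG}_j$ at $-1.5$ when $a(x_j) = \True$ and at $0$ when $a(x_j) = \False$. Hence $\frmax(\U, \V) = \max_a \fr(\pi_a, \V)$, and it remains to show, for every $a$, that $\fr(\pi_a, \V) = 1$ if $C[a] = \False$ and $\fr(\pi_a, \V) = 1.5$ if $C[a] = \True$ --- and similarly for $\dfr$.

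First I would pin down the structure of any matching of cost at most $1.5$. The vertex $4.5$ that opens $\mathrm{VS}$ lies within distance $1.5$ only of the vertices $3.5$ that open the clause gadgets, and two such vertices are separated on $\V$ by vertices lying at distance exactly $3$ from $4.5$; therefore $\mathrm{VS}$ must be aligned with exactly one clause gadget $\mathrm{CG}_i$. A short case analysis in the spirit of \cref{lemma:clause} --- using that the vertices $2.5$ inside $\mathrm{VS}$ and the vertices $1.5$ inside $\mathrm{CG}_i$ are too far from all the remaining vertices of the other curve for the matching to leave the diagonal while traversing $\mathrm{VS}$ against $\mathrm{CG}_i$ --- then shows that on $\mathrm{VS}$ and $\mathrm{CG}_i$ the matching behaves like the one-to-one walk, so by \cref{lemma:clause} its cost there is $1.5$ when $C_i[a] = \True$ and $1$ when $C_i[a] = \False$, and always at least $1$ since $\lvert 4.5 - 3.5\rvert = 1$. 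Consequently $\fr(\pi_a, \V) \ge 1$ for every $a$, and $\fr(\pi_a, \V) \ge 1.5$ whenever every clause of $C$ is satisfied by $a$. All of this is phrased in terms of vertices, so it transfers verbatim to $\dfr$.

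For the matching upper bounds I would exhibit explicit vertex-to-vertex matchings. For any $a$: align $\mathrm{VS}$ with $\mathrm{CG}_1$ (cost $\le 1.5$ by \cref{lemma:clause}); match each of the other $n-1$ clause gadgets one-to-one with one of the $2(n-1)$ gadgets $\mathrm{abs}$ on $\U$ (cost $1$ by \cref{lemma:abs}); match each gadget $\mathrm{abs}^2$ of $\V$ with a leftover gadget $\mathrm{abs}$ of $\U$ (cost $1$ by \cref{lemma:abs2}), and let the two terminal vertices $1$ of $\U$ absorb any gadgets $\mathrm{abs}^2$ not yet covered, at distance $0.5$. A routine count --- the $n-1$ gadgets $\mathrm{abs}$ before $\mathrm{VS}$ and the $n-1$ after suffice to cover the $n-1$ remaining clause gadgets plus enough of the $2(n-1)$ gadgets $\mathrm{abs}^2$, the rest going to the terminal vertices $1$ --- shows such a matching can be built monotonically, so $\fr(\pi_a, \V) \le 1.5$ always. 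When $C[a] = \False$, using the same matching but aligning $\mathrm{VS}$ with a clause gadget $\mathrm{CG}_i$ having $C_i[a] = \False$ (cost $1$ there by \cref{lemma:clause}) gives $\fr(\pi_a, \V) \le 1$.

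Combining the bounds yields $\fr(\pi_a, \V) \in \{1, 1.5\}$, with value $1.5$ attained exactly when $C[a] = \True$; maximising over $a$ gives the two claimed equivalences, and the same argument using the discrete versions in \cref{lemma:clause,lemma:abs,lemma:abs2} handles $\dfrmax$, so the problem is NP-hard in the indecisive model in both variants. The step I expect to be the main obstacle is the forced-alignment analysis of the second paragraph: showing rigorously that no monotone matching of cost below $1.5$ survives once all clauses are satisfied requires ruling out all reparametrisations that cross several gadget boundaries at once, and while the numerical gaps in the construction (between $4.5$, $3.5$, $2.5$ and the rest) are chosen precisely to block every such move, checking this for each gadget pairing is the bulk of the work; the gadget-counting in the upper-bound matchings is the only other point needing care.
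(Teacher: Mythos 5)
Your proposal is correct and follows essentially the same route as the paper: the same gadget construction with threshold \(\delta = 1\), the same use of \cref{lemma:clause,lemma:abs,lemma:abs2} to pin the per-assignment distance to \(1\) when \(C[a] = \False\) and \(1.5\) when \(C[a] = \True\) (via the forced alignment of \(\mathrm{VS}\) with exactly one clause gadget and the explicit gadget-to-gadget matchings), and the same final step of maximising over realisations to turn the upper bound decision at \(\delta = 1\) into a \textsc{CNF-SAT} solver, for both the continuous and discrete variants. The forced-alignment analysis you flag as the main obstacle is treated at the same level of detail in the paper itself, so nothing in your outline diverges from or falls short of the published argument.
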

\begin{proof}
First of all, notice that in our construction the synchronisation points at the start of the clauses
gadgets must be matched to the synchronisation points at the start of the variable section and the
\(\mathrm{abs}\) gadgets in the optimal matching, as hinted at in the proofs of the previous lemmas.
Furthermore, note that any number of \(\mathrm{abs}^2\) at the start can be matched to the point
at~\(1\), and any number of \(\mathrm{abs}^2\) can be matched to the point at~\(1\) at the end.
Putting these observations together with \cref{lemma:clause,lemma:abs,lemma:abs2}, it is easy to see
the following.
Choose some assignment \(a\) and consider the corresponding realisation \(\pi \Subset \U\).
Suppose that \(C[a] = \False\); this means that there is at least one \(i\) for which
\(C_i[a] = \False\).
Our construction allows us to consider the alignment where we match \(\mathrm{CG}_i\) to
\(\mathrm{VS}\), and the rest of clauses to one of \(\mathrm{abs}\); the remaining \(\mathrm{abs}\)
are matched to the \(\mathrm{abs}^2\), and the remaining \(\mathrm{abs}^2\) are matched to~\(1\).
In this matching, the (discrete) Fr\'echet distance between the curves is~\(1\), which is optimal,
and the formula is not satisfied.
Now suppose that \(C[a] = \True\); this means that for all \(i\), \(C_i[a] = \True\).
So, no matter which \(\mathrm{CG}_i\) we choose to align with \(\mathrm{VS}\), we get the distance
of~\(1.5\); therefore, the formula is satisfied, and the optimal distance is~\(1.5\).

Recall that the upper bound distance takes the maximum distance over all realisations.
Therefore, if the upper bound distance is~\(1\), then all the realisations yield the distance~\(1\),
and so all assignments \(a\) yield \(C[a] = \False\), and the formula is not satisfiable.
On the other hand, if the upper bound distance is~\(1.5\), then there is some realisation that
yields this distance, and it corresponds to an assignment \(a\) with \(C[a] = \True\), so the
formula is satisfiable.
Thus, our construction with the threshold \(\delta = 1\) solves \textsc{CNF-SAT}.
Curve \(\U\) has length \(2 + 2 \cdot (n - 1) \cdot (1 + 2m) + 1 + 2m = 2n + 4mn - 2m + 1\);
curve \(\V\) has length \(4 \cdot (n - 1) + n \cdot (1 + 2m) = 5n + 2mn - 4\).
Clearly, the construction takes polynomial time.
Therefore, the problem both for discrete and continuous Fr\'echet distance is NP-hard.
\end{proof}

We can easily extend this result to the imprecise curves.
We replace the indecisive points at \(\{-1.5, 0\}\) with intervals \([-1.5, 0]\).
The following observation is key.
\begin{observation}
Any upper bound solution that can be found as a certificate in the construction with the indecisive
points can also be found in the imprecise construction.
\end{observation}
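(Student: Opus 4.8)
The plan is to argue that the set of realizations of \(\U\) in the indecisive construction is contained in the set of realizations of \(\U\) in the imprecise construction, and that the (continuous or discrete) Fr\'echet distance to \(\V\) depends only on the realized polygonal curve, not on which uncertainty model produced it. The only points of \(\U\) affected by the switch are the variable-gadget points, whose location set \(\{-1.5, 0\}\) is replaced by the interval \([-1.5, 0]\). Since \(\{-1.5, 0\} \subseteq [-1.5, 0]\) and every other point of \(\U\) (as well as all of the precise curve \(\V\)) is untouched, any \(\pi \Subset \U\) that is valid in the indecisive setting is, vertex for vertex, also a valid realization \(\pi \Subset \U\) in the imprecise setting.

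Concretely, I would take an upper bound certificate from the indecisive construction \dsh a realization \(\pi \Subset \U\) together with the value it witnesses, namely \(\fr(\pi, \V) = 1.5\) (or \(\dfr(\pi, \V) = 1.5\) in the discrete case), as produced in the proof of the previous theorem via \cref{lemma:clause,lemma:abs,lemma:abs2}. By the containment just described, this same \(\pi\) is a realization of the imprecise \(\U\); and since neither \(\pi\) nor \(\V\) has changed as a curve, \(\fr(\pi, \V)\) and \(\dfr(\pi, \V)\) are exactly the same numbers. Hence \(\pi\) is also an upper bound certificate of value \(1.5\) in the imprecise construction, which is what the statement claims.

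There is no real obstacle here: the statement is just the combination of \enquote{fewer constraints means more realizations} with the fact that the Fr\'echet distance is a property of the two realized curves alone. The only point that needs a little care is to phrase the argument uniformly for both the continuous and the discrete Fr\'echet distance. I would then remark that this observation supplies one direction of the reduction for the imprecise model; the converse \dsh that an unsatisfiable formula still forces distance \(1\) even over the widened intervals \dsh follows because every point of \([-1.5, 0]\) remains within distance \(1\) of each relevant point of \(\V\), so the proofs of \cref{lemma:clause,lemma:abs,lemma:abs2} go through verbatim, giving NP-hardness of \textsc{Upper Bound (Discrete) Fr\'echet} in the imprecise model as well.
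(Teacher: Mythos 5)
Your proof of the observation is correct and is essentially the paper's own (implicit) justification: since \(\{-1.5, 0\} \subseteq [-1.5, 0]\) and all other vertices of \(\U\) as well as all of \(\V\) are unchanged, every realisation from the indecisive construction is also a realisation of the imprecise construction, and the (discrete) Fr\'echet distance is a function of the realised curves alone, so any certificate carries over verbatim. One caution about your closing aside (which concerns the follow-up theorem, not this statement): the claim that \emph{every} point of \([-1.5, 0]\) is within distance~\(1\) of each relevant point of \(\V\) is literally false (e.g.\@ \(-1.5\) is at distance \(1.5\) from the literal point at \(0\)); the unsatisfiable direction instead requires rounding an interior realisation to a Boolean assignment (e.g.\@ \(x_j = \True\) iff \(p_j \leq -0.75\)) and observing that the clause falsified by that assignment can then be matched at cost at most~\(1\).
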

Furthermore, note that no realisation can yield a distance above~\(1.5\) with an optimal matching.
Thus, if the formula is satisfiable, the upper bound distance is still~\(1.5\), and this distance
cannot be obtained otherwise.
We conclude that the problem is NP-hard.
\begin{theorem}
The problem \textsl{\textsc{Upper Bound (Discrete) Fr\'echet}} is NP-hard in the imprecise model.
\end{theorem}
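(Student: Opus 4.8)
The plan is to reuse the construction from the indecisive case essentially verbatim, changing only each indecisive point $\{-1.5, 0\}$ of the variable gadget $\mathrm{VG}_j$ into the interval $[-1.5, 0]$; all remaining points of $\U$ and $\V$ stay precise, and the threshold stays $\delta = 1$. Since $\{-1.5, 0\} \subseteq [-1.5, 0]$, every realisation of the indecisive instance is still a realisation of the imprecise one, which hands me one direction for free: if $C$ is satisfiable, the realisation that puts each $\mathrm{VG}_j$ at $-1.5$ (for $\True$) or $0$ (for $\False$) according to a satisfying assignment has (discrete) Fr\'echet distance $1.5$ to $\V$, exactly as argued in the proof for the indecisive model, so $\frmax(\U, \V) \ge 1.5$ and $\dfrmax(\U, \V) \ge 1.5$. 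This is the observation stated above.

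The work is in the converse: assuming $C$ is unsatisfiable, I would show that $\fr(\pi, \V) \le 1$ (hence also $\dfr(\pi, \V) \le 1$) for \emph{every} realisation $\pi \Subset \U$, including ones that place some $\mathrm{VG}_j$ at an interior value $v_j \in (-1.5, 0)$. Given $\pi$, I associate an assignment $a$ by rounding at $-1$, i.e.\ $a(x_j) = \True$ if $v_j < -1$ and $a(x_j) = \False$ otherwise. Since $C$ is unsatisfiable, $C[a] = \False$, so some clause $C_i$ satisfies $C_i[a] = \False$. For that $i$ I would use exactly the matching from the indecisive proof: align $\mathrm{CG}_i$ with $\mathrm{VS}$, the other clause gadgets with $\mathrm{abs}$ gadgets, the leftover $\mathrm{abs}$ gadgets with $\mathrm{abs}^2$ gadgets, and the leftover $\mathrm{abs}^2$ gadgets with the endpoints $1$ of $\U$. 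Outside the $\mathrm{CG}_i$ versus $\mathrm{VS}$ block every matched pair is between precise points and costs at most $1$ by \cref{lemma:abs,lemma:abs2}; inside that block the only realisation-dependent pairs match a first point of some $\mathrm{LG}_{i, j}$ to $v_j$, and a short case analysis finishes the job. If $x_j$ is a literal of $C_i$, then $C_i[a] = \False$ forces $a(x_j) = \False$, so $v_j \ge -1$ and the distance $|0 - v_j|$ is at most $1$; if $\neg x_j$ is a literal of $C_i$, then $a(x_j) = \True$, so $v_j < -1$ and the distance $|{-1.5} - v_j| = v_j + 1.5$ is below $0.5$; and if $x_j$ occurs in neither, the distance to $-0.75$ is at most $0.75$. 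So this matching has cost at most $1$, giving $\fr(\pi, \V) \le 1$, and since it is one-to-one on the literal level the bound also holds for $\dfr$.

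Combining the two directions, $\frmax(\U, \V) > 1$ (equivalently $\dfrmax(\U, \V) > 1$) holds exactly when $C$ is satisfiable; together with the fact, already noted above, that no realisation can exceed distance $1.5$ under an optimal matching, the upper bound equals $1.5$ for satisfiable $C$ and $1$ otherwise. Hence deciding whether $\frmax(\U, \V) \le 1$ (resp.\ $\dfrmax(\U, \V) \le 1$) decides \textsc{CNF-SAT}, and since the instance has the same polynomial size as in the indecisive construction this is a polynomial reduction, so the problem is NP-hard. The only genuine obstacle is the converse direction, and within it the choice of the rounding threshold: it must be placed so that whenever a literal of $C_i$ is false under $a$, the corresponding $\mathrm{LG}_{i,j}$-to-$v_j$ pair is within distance $1$ even for interior realisations. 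Placing it at $-1$ achieves this, because a $\False$ variable then lands in $[-1, 0]$ (distance at most $1$ to the literal position $0$) and a $\True$ variable in $[-1.5, -1)$ (distance below $0.5$ to the literal position $-1.5$); everything else is bookkeeping inherited from the indecisive proof.
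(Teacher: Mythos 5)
Your proposal is correct and takes essentially the same route as the paper: replace the indecisive points $\{-1.5, 0\}$ by the intervals $[-1.5, 0]$, keep the threshold $\delta = 1$, and reuse the indecisive certificate to get the satisfiable direction for free. Your explicit rounding-at-$-1$ case analysis for interior realisations in the unsatisfiable direction is in fact more detailed than the paper, which only states the key observation that indecisive certificates carry over and asserts that no realisation can exceed $1.5$ and that $1.5$ cannot be attained when the formula is unsatisfiable, leaving that verification implicit.
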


\section{Weak Fr\'echet Distance}
In this section, we investigate the weak Fr\'echet distance for uncertain curves.
In general, since weak matchings can revisit parts of the curve, the dynamic program for the regular
Fr\'echet distance cannot easily be adapted, as it relies on the fact that only the realisation of
the last few vertices is tracked.
In particular, when computing the weak Fr\'echet distance for uncertain curves, one cannot simply
forget the realisations of previously visited vertices, as the matching might revisit them.
Surprisingly, we can show that for the continuous weak Fr\'echet distance between uncertain
one-dimensional curves, we can still obtain a polynomial-time dynamic program, as shown in
\cref{sec:w1dalg}.
One may expect that the discrete weak Fr\'echet distance for uncertain curves in~1D is
also solvable in polynomial time; however, in \cref{sec:wdhard} we show that this problem is
NP-hard.
We also show that computing the continuous weak Fr\'echet distance is NP-hard for uncertain
curves in~2D.

\subsection{Algorithm for Continuous Setting}\label{sec:w1dalg}
We first introduce some definitions.
Consider polygonal one-dimensional curves \(\pi \colon [1, m] \to \R\) and
\(\sigma \colon [1, n] \to \R\) with vertices at the integer parameters.
Let \(\rev{\pi}\) denote the reversal of a polygonal curve \(\pi\).
Denote by \(\pi|_{[a, b]}\) the restriction of \(\pi\) to the domain \([a, b]\).
For integer values of \(a\) and \(b\), note that \(\pi|_{[a, b]} \equiv \pi[a: b]\).
Finally, define the \emph{image} of a curve as the set of points in \(\R\) that belong to the curve,
\(\Ima(\pi) \equiv \{\pi(x) \mid x \in [1, m]\}\) for \(\pi \colon [1, m] \to \R\).
For any polygonal curve \(\pi\), define the \emph{growing curve} \(\grow{\pi}\) of \(\pi\) as the
sequence of local minima and maxima of the sequence
\(\langle\pi(i) \mid \pi(i) \notin \Ima(\pi|_{[1, i)})\rangle_{i = 1}^m\).
Thus, the vertices of a growing curve alternate between local minima and maxima, the subsequence of
local maxima is strictly increasing, and the subsequence of local minima is strictly decreasing.

It has been shown that for precise one-dimensional curves, the weak Fr\'echet distance can be
computed in linear time~\cite{buchin:2019}.
For uncertain curves, it is unclear how to use that linear-time algorithm; however,
we can apply some of the underlying ideas.
A \emph{relaxed matching} between \(\pi\) and \(\sigma\) is defined by parametrisations
\(\alpha \colon [0, 1] \to [1, m]\) and \(\beta \colon [0, 1] \to [1, n]\) with
\(\alpha(0) = 1\), \(\alpha(1) = x \in [m - 1, m]\) and
\(\beta(0) = 1\), \(\beta(1) = y \in [n - 1, n]\).
Observe that the final points of parametrisations have to be on the last segments of the curves, but
not necessarily at the endpoints of those segments.
Moreover, define a relaxed matching \((\alpha, \beta)\) to be \emph{cell-monotone} if for all
\(t \leq t'\), we have \(\min(\lfloor\alpha(t)\rfloor, m - 1) \leq \alpha(t')\) and
\(\min(\lfloor\beta(t)\rfloor, n - 1) \leq \beta(t')\).
In other words, once we pass by a vertex to the next segment on a curve, we do not allow going back
to the previous segment; backtracking within a segment is allowed.
Let \(\rMono(\pi, \sigma)\) be the minimum matching cost over all cell-monotone relaxed matchings:
\[\rMono(\pi, \sigma) = \inf_{\text{cell-monotone relaxed matching }\mu} \cost_\mu(\pi, \sigma)\,.\]
It has been shown for precise curves~\cite{buchin:2019} that
\[\wfr(\pi, \sigma) = \max\big(\rMono(\grow{\pi}, \grow{\sigma}),
\rMono(\grow{\rev{\pi}}, \grow{\rev{\sigma}})\big)\,.\]
Let \(\rMono(\pi, \sigma)[i, j] \equiv \rMono(\pi[1: i], \sigma[1: j])\).
Then the value of \(\rMono(\pi, \sigma)\) can be computed in quadratic time as
\(\rMono(\pi, \sigma)[m, n]\) using the following dynamic program:
\begin{alignat*}{2}
&\rMono(\pi, \sigma)[0, \cdot] = \rMono(\pi, \sigma)[\cdot, 0] &&= \infty\,,\\
&\rMono(\pi, \sigma)[1, 1] &&= \lvert\pi(1) - \sigma(1)\rvert\,, \text{ and for \(i > 0\) or \(j > 0\),}\\
&\rMono(\pi, \sigma)[i + 1, j + 1] &&= \min\left\{\begin{aligned}
    \max\big(\rMono(\pi, \sigma)[i, j + 1], d\big(\pi(i), \Ima(\sigma[j: j + 1])\big)\big)\,,\\
    \max\big(\rMono(\pi, \sigma)[i + 1, j], d\big(\sigma(j), \Ima(\pi[i: i + 1])\big)\big)\,.
\end{aligned}\right.
\end{alignat*}
If \(\pi\) is a growing curve, we have \(\Ima(\pi[i, i + 1]) = \Ima(\pi[1: i + 1])\), so the
following dynamic program is equivalent if \(\pi\) and \(\sigma\) are growing curves:
\begin{alignat*}{2}
&r(\pi, \sigma)[0, \cdot] = r(\pi, \sigma)[\cdot, 0] &&= \infty\,,\\
&r(\pi, \sigma)[1, 1] &&= \lvert\pi(1) - \sigma(1)\rvert\,,
\text{ and for \(i > 0\) or \(j > 0\),}\\
&r(\pi, \sigma)[i + 1, j + 1] &&= \min\begin{cases}
    \max\big(r(\pi, \sigma)[i, j + 1], d\big(\pi(i), \Ima(\sigma[1: j + 1])\big)\big)\,,\\
    \max\big(r(\pi, \sigma)[i + 1, j], d\big(\sigma(j), \Ima(\pi[1: i + 1])\big)\big)\,.
\end{cases}
\end{alignat*}
Let \(r(\pi, \sigma) \coloneqq r(\pi, \sigma)[m, n]\) when executing the dynamic program above for
curves \(\pi \colon [1, m] \to \R\) and \(\sigma \colon [1, n] \to \R\).
We have \(\rMono(\grow{\pi}, \grow{\sigma}) = r(\grow{\pi}, \grow{\sigma})\).
Moreover, observe that the final result of computing \(r\) is the same whether we apply it to the
original or the growing curves.
In other words, \(r(\pi, \sigma) = r(\grow{\pi}, \grow{\sigma})\), so
\begin{align*}
\wfr(\pi, \sigma)
&= \max\big(\rMono(\grow{\pi}, \grow{\sigma}), \rMono(\grow{\rev{\pi}}, \grow{\rev{\sigma}})\big)\\
&= \max\big(r(\grow{\pi}, \grow{\sigma}), r(\grow{\rev{\pi}}, \grow{\rev{\sigma}})\big)\\
&= \max\big(r(\pi, \sigma), r(\rev{\pi}, \rev{\sigma})\big)\,.
\end{align*}

With regard to computing the minimum weak Fr\'echet distance over realisations of uncertain curves,
this roughly means that we only need to keep track of the image of the prefix (and he suffix) of
\(\pi\) and \(\sigma\).
To formalise this, we split up the computation over the prefix and the suffix.
Let \(i_{\min}, i_{\max} \in [m]\), \(j_{\min}, j_{\max} \in [n]\),
\([x_{\min}, x_{\max}] \subseteq \R\), and \([y_{\min}, y_{\max}] \subseteq \R\).
Abbreviate the pairs \(I := (i_{\min}, i_{\max})\), \(J := (j_{\min}, j_{\max})\) and the intervals
\(X := [x_{\min}, x_{\max}]\), \(Y := [y_{\min}, y_{\max}]\), and call a realisation \(\pi\) of an
uncertain curve \emph{\(I\)-respecting} if \(\pi(i_{\min})\) is a global minimum of \(\pi\) and
\(\pi(i_{\max})\) is a global maximum of \(\pi\).
Moreover, say that \(\pi\) is \emph{\((I, X)\)-respecting} if additionally
\(\pi(i_{\min}) = x_{\min}\) and \(\pi(i_{\max}) = x_{\max}\).
Let \(\pi' \Subset \U_I\) and \(\pi'' \Subset \U_I^X\) denote some \(I\)- and \((I, X)\)-respecting
realisations of an uncertain curve \(\U\), respectively.
Consider the minimum weak Fr\'echet distance between \((I, X)\)- and \((J, Y)\)-respecting
realisations \(\pi \Subset \U_I^X\) and \(\sigma \Subset \V_J^Y\):
\[\wfrmin(\U_I^X, \V_J^Y) \equiv \min_{\pi \Subset \U_I^X, \sigma \Subset \V_J^Y} \wfr(\pi, \sigma)
= \min_{\pi \Subset \U_I^X, \sigma \Subset \V_J^Y} \max\big(r(\pi, \sigma),
r(\rev{\pi}, \rev{\sigma})\big)\,.\]

\begin{lemma}\label{lem:weakIndependent}
Among \((I, X)\)- and \((J, Y)\)-respecting realisations, the prefix and the suffix are independent:
\[\wfrmin(\U_I^X, \V_J^Y) = \max\begin{cases}
    \min_{\pi \Subset \U_I^X, \sigma \Subset \V_J^Y} r(\pi, \sigma)\,,\\
    \min_{\pi' \Subset \U_I^X, \sigma' \Subset \V_J^Y} r(\rev{\pi'}, \rev{\sigma'})\,.
\end{cases}\]
\end{lemma}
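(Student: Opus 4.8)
The plan is to prove the two inequalities of the equation separately. The direction ``$\ge$'' needs no structure at all: for every pair of $(I,X)$- and $(J,Y)$-respecting realisations $\pi \Subset \U_I^X$, $\sigma \Subset \V_J^Y$ we have $\max\big(r(\pi,\sigma),r(\rev{\pi},\rev{\sigma})\big) \ge r(\pi,\sigma)$ and also $\ge r(\rev{\pi},\rev{\sigma})$, hence this quantity is at least the right-hand side, and taking the minimum over $\pi,\sigma$ preserves the bound. All the work is therefore in the direction ``$\le$'', for which it suffices to produce a \emph{single} pair of respecting realisations $(\pi,\sigma)$ that simultaneously satisfies $r(\pi,\sigma) \le \min_{\pi,\sigma} r(\pi,\sigma)$ and $r(\rev{\pi},\rev{\sigma}) \le \min_{\pi',\sigma'} r(\rev{\pi'},\rev{\sigma'})$; equivalently, to show that the two minimisations in the right-hand side can be attained by a common pair.

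The engine for this is a structural fact about growing curves of respecting realisations, which I would establish first. If $x_{\min} = x_{\max}$ the realisation is constant and the statement is trivial, so assume $x_{\min} < x_{\max}$ and, after handling the mirror-image case symmetrically (swapping the roles of minimum and maximum), that $i_{\min} < i_{\max}$. I claim that $\grow{\pi}$ is determined by $\pi|_{[1,i_{\min}]}$ together with the value $x_{\max}$, while $\grow{\rev{\pi}}$ is determined by $\pi|_{[i_{\max},m]}$ together with $x_{\min}$. The reason is that once a realisation has reached its global minimum $x_{\min}=\pi(i_{\min})$, which happens at some first index $\tau \le i_{\min}$, its running minimum is frozen at $x_{\min}$, so every later record of $\pi$ is a new \emph{maximum}; these new maxima form one strictly increasing run of the record subsequence and therefore collapse in $\grow{\pi}$ to the single trailing vertex $x_{\max}$. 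Concretely, $\grow{\pi}$ is the sequence of local extrema of the record subsequence of $\pi|_{[1,\tau]}$, with $x_{\max}$ appended as a final vertex if it was not already attained on $[1,\tau]$; since $\tau$ and $\pi|_{[1,\tau]}$ are functions of $\pi|_{[1,i_{\min}]}$, the claim follows. The statement for $\grow{\rev{\pi}}$ is the same claim applied to $\rev{\pi}$, whose first-encountered global extreme is the maximum $x_{\max}$ at reversed index $m+1-i_{\max}$.

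With the claim available, the construction is straightforward. Let $(\pi^*,\sigma^*)$ attain $\min_{\pi,\sigma} r(\pi,\sigma)$ and $(\pi^{**},\sigma^{**})$ attain $\min_{\pi',\sigma'} r(\rev{\pi'},\rev{\sigma'})$, all respecting, and (treating each curve independently and handling the other orientations symmetrically) assume $i_{\min} < i_{\max}$ and $j_{\min} < j_{\max}$. Define $\pi$ to follow $\pi^*$ on indices $1,\dots,i_{\max}-1$ and $\pi^{**}$ on indices $i_{\max},\dots,m$: this is consistent since $\pi^*(i_{\max}) = \pi^{**}(i_{\max}) = x_{\max}$, it is a valid realisation since each vertex comes from one of the two realisations, and it is $(I,X)$-respecting since all of $\pi^*$ and $\pi^{**}$ lie in $[x_{\min},x_{\max}]$ with the extremes attained at $i_{\min}$ and $i_{\max}$. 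Define $\sigma$ from $\sigma^*,\sigma^{**}$ the same way, splitting at $j_{\max}$. By the structural claim $\grow{\pi}=\grow{\pi^*}$ (they agree on $[1,i_{\min}]$ and share $x_{\max}$) and $\grow{\rev{\pi}}=\grow{\rev{\pi^{**}}}$ (they agree on $[i_{\max},m]$ and share $x_{\min}$), and likewise for $\sigma$. Using $r(\pi,\sigma)=r(\grow{\pi},\grow{\sigma})$ and the analogous identity for the reversals, we get $r(\pi,\sigma)=r(\pi^*,\sigma^*)=\min_{\pi,\sigma} r(\pi,\sigma)$ and $r(\rev{\pi},\rev{\sigma})=r(\rev{\pi^{**}},\rev{\sigma^{**}})=\min_{\pi',\sigma'} r(\rev{\pi'},\rev{\sigma'})$, so $\wfrmin(\U_I^X,\V_J^Y)\le \max\big(r(\pi,\sigma),r(\rev{\pi},\rev{\sigma})\big)$ equals the right-hand side, completing ``$\le$''.

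I expect the main obstacle to be the structural claim, specifically making precise that a growing curve ``forgets everything after the first global extreme''. The delicate points there are the collapse of the increasing run of new maxima into the lone vertex $x_{\max}$ and the case split over whether $x_{\max}$ has already appeared on the prefix $[1,\tau]$ (together with the mirror-image bookkeeping needed for $\grow{\rev{\pi}}$ and for the orientation $i_{\min}>i_{\max}$). Once that lemma is in place, the remaining min--max interchange is routine: prefix and suffix data being independent makes the feasible set a product, on which $\min\max = \max\min$.
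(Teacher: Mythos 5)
Your proposal is correct and takes essentially the same approach as the paper: the ``\(\geq\)'' direction is immediate, and for ``\(\leq\)'' you splice the two optimal pairs into a single \((I,X)\)- and \((J,Y)\)-respecting pair and argue that the growing curves \(\grow{\pi}\), \(\grow{\rev{\pi}}\) (hence the \(r\)-values) are unchanged, which is exactly the paper's cut-and-paste argument. The only cosmetic difference is that you split at \(i_{\max}\) (after symmetry) whereas the paper splits at \(i_{\min}\), and you spell out in more detail the structural fact---that a growing curve is determined by the prefix up to the first global extreme together with the value of the other extreme---which the paper dispatches in one line.
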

\begin{proof}
If we take \(\pi = \pi'\) and \(\sigma = \sigma'\), the right-hand side becomes a lower bound on
\(\wfrmin(\U_I^X, \V_J^Y)\).
To show that it is also an upper bound, consider \((I, X)\)-respecting realisations \(\pi\) and
\(\pi'\), and define \(\pi_c\) as the prefix of \(\pi\) up to \(i_{\min}\) concatenated with the
suffix of \(\pi'\) starting from \(i_{\min}\).
Then \(\pi_c\) is an \((I, X)\)-respecting realisation of \(\U\).
Moreover, the growing curves \(\grow{\pi}\) and \(\grow{\pi_c}\) are the same (this is obvious if
\(i_{\min} > i_{\max}\), and follows from the fact that the value of the \(i_{\max}\)-th vertex is
\(x_{\max}\) otherwise).
Symmetrically, \(\grow{\rev{\pi'}} = \grow{\rev{\pi_c}}\).
We can similarly define a \((J, Y)\)-respecting realisation \(\sigma_c\) of \(\V\) based on some
\(\sigma\) and \(\sigma'\).
Since \(\grow{\pi} = \grow{\pi_c}\) and \(\grow{\sigma} = \grow{\sigma_c}\), we have
\(r(\pi, \sigma) = r(\pi_c, \sigma_c)\), and symmetrically,
\(r(\rev{\pi'}, \rev{\sigma'}) = r(\rev{\pi_c}, \rev{\sigma_c})\).
We can therefore use \(\pi_c \Subset \U_I^X\) and \(\sigma_c \Subset \V_J^Y\) in the definition of
\(\wfr(\U_I^X, \V_J^Y)\) to obtain the desired upper bound.
\end{proof}
% Before we show how to compute \(\wfrmin(\U, \V)\) in general, we first handle the (symmetric) cases
% \(m = 1\) and \(n = 1\), that are not supported by the algorithm above.
% For \(m = 1\), computing \(\wfrmin(\U, \V)\) corresponds to finding a point \(x \in u_1\) and a
% realisation \(\sigma \Subset \V\) minimising \(d(x, \Ima(\sigma))\).
% For a given \(x\), such a realisation can be found by realising each \(v_j\) as close to \(x\) as
% possible, so \(\wfrmin(\U, \V) = \min_{x \in u_1} \max_j d(x, v_j)\).
% The term \(\max_j d(x, v_j)\) is the upper envelope of distance functions \(x \mapsto d(x, v_j)\),
% and has complexity linear in the total complexity of \(\V\), and can be computed in polynomial time.
% Finding the minimising \(x \in u_1\) is trivial given the envelope.
% The remainder of this section concerns the case where \(m, n > 1\) and is guided by the following
% observations based on \cref{lem:weakIndependent}.
The remainder of this section is guided by the following observations based on
\cref{lem:weakIndependent}.
\begin{enumerate}
\item If we can compute \(\min_{\pi \Subset \U_I^X, \sigma \Subset \V_J^Y} r(\pi, \sigma)\), we can
compute \(\wfrmin(\U_I^X, \V_J^Y)\).
\item To compute \(\wfrmin(\U_I, \V_J)\), we must find an optimal pair of images \(X\) and \(Y\)
for \(\pi\) and \(\sigma\).
\item We can find \(\wfrmin(\U, \V)\) by computing \(\wfrmin(\U_I, \V_J)\) for all \(O(m^2 n^2)\)
values for \((I, J)\).
\end{enumerate}
Instead of computing \(\min_{\pi \Subset \U_I^X, \sigma \Subset \V_J^Y} r(\pi, \sigma)\) for a
specific value of \((X, Y)\), we compute the function
\((X, Y) \mapsto \min_{\pi \Subset \U_I^X, \sigma \Subset \V_J^Y} r(\pi, \sigma)\) using a dynamic
program that effectively simulates the dynamic program \(r(\pi, \sigma)\) for all \(I\)- and
\(J\)-respecting realisations simultaneously.
So let
\begin{align*}
R_{I, J}[i, j](x, y, X, Y)
:=& \inf_{\substack{\pi \Subset \U_I, \Ima(\pi[1: i]) = X, \pi(i) = x\\
\sigma \Subset \V_J, \Ima(\sigma[1: j]) = Y, \sigma(j) = y}} r(\pi, \sigma)[i, j],\quad\text{then}\\
R_{I, J}[m, n](x, y, X, Y)
=& \inf_{\substack{\pi \Subset \U_I^X, \pi(m) = x\\\sigma \Subset \V_J^Y, \sigma(n) = y}}
r(\pi, \sigma).
\end{align*}

We derive
\begin{align*}
&R_{I, J}[0, \cdot](x, y, X, Y) = R_{I, J}[\cdot, 0](x, y, X, Y) = \infty,\\
&R_{I, J}[1, 1](x, y, X, Y) = \inf_{\substack{\pi \Subset \U_I, \{x\} = X, \pi(1) = x\\
\sigma \Subset \V_J, \{y\} = Y, \sigma(1) = y}} \lvert\pi(1) - \sigma(1)\rvert,
\text{ and for \((i, j)\neq(1, 1)\)}\\
&R_{I, J}[i, j](x, y, X, Y)\\
&\quad= \inf_{\substack{\pi \Subset \U_I, \Ima(\pi[1: i]) = X, \pi(i) = x\\
    \sigma \Subset \V_J, \Ima(\sigma[1: j]) = Y, \sigma(j) = y}}
\min\begin{cases}
    \max \{r(\pi, \sigma)[i - 1, j], d(\pi(i - 1), Y)\},\\
    \max \{r(\pi, \sigma)[i, j - 1], d(\sigma(j - 1), X)\}
\end{cases}\\
&\quad= \min\begin{cases}
\inf_{\substack{\pi \Subset \U_I, \Ima(\pi[1: i]) = X, \pi(i) = x\\
    \sigma \Subset \V_J, \Ima(\sigma[1: j]) = Y, \sigma(j) = y\\
    \pi(i - 1) = x'}}
    \max \{r(\pi, \sigma)[i - 1, j], d(x', Y)\},\\
\inf_{\substack{\pi \Subset \U_I, \Ima(\pi[1: i]) = X, \pi(i) = x\\
    \sigma \Subset \V_J, \Ima(\sigma[1: j]) = Y, \sigma(j) = y\\
    \sigma(j - 1) = y'}}
    \max \{r(\pi, \sigma)[i, j - 1], d(y', X)\}
\end{cases}\\
&\quad= \min\begin{cases}
\inf_{\substack{\pi \Subset \U_I, \Ima(\pi[1: i]) = X, \pi(i) = x\\
    \Ima(\pi[1: i - 1]) = X', \pi(i - 1) = x'}}
    \max \{R_{I, J}[i - 1, j](x', y, X', Y), d(x', Y)\},\\
\inf_{\substack{\sigma \Subset \V_J, \Ima(\sigma[1: j]) = Y, \sigma(j) = y\\
    \Ima(\sigma[1: j - 1]) = Y', \sigma(j - 1) = y'}}
    \max \{R_{I, J}[i, j - 1](x, y', X, Y'), d(y', X)\},
\end{cases}
\end{align*}
where, crucially, the conditions on \(x'\), \(y'\), \(X'\), and \(Y'\) can be checked purely in
terms of \(\U_I\) and \(\V_J\), so the recurrence does not depend on any particular \(\pi\) or
\(\sigma\).
This yields a dynamic program that constructs the function \(R_{I, J}[i, j]\) based on the functions
\(R_{I, J}[i - 1, j]\) and \(R_{I, J}[i, j - 1]\).
 
The recurrence has the parameters \(I\), \(J\), \(i\), \(j\), \(x\), \(y\), \(X\), and \(Y\).
The first four are easy to handle, since \(i \in [m]\), \(j \in [n]\), \(I \in [m]^2\), and
\(J \in [n]^2\).
The other parameters are continuous.
\(X\) can be represented by \(x_{\min}\) and \(x_{\max}\), \(Y\) by \(y_{\min}\) and \(y_{\max}\).
To prove that we can solve the recurrence in polynomial time, it is sufficient to prove that we can
restrict the computation to a polynomial number of different \(x_{\min}\), \(x_{\max}\),
\(y_{\min}\), \(y_{\max}\), \(x\) and \(y\).

We assume that each of the \(u_i\) and \(v_j\) is given as a set of intervals.
This includes the cases of uncertain curves with imprecise vertices (where each of these is just one
interval) and with indecisive vertices (where each interval is just a point; but in this case we get
by definition only a polynomial number of different values for the parameters).

Consider the realisations \(\pi = \langle p_1, \dots, p_m\rangle\) and
\(\sigma = \langle q_1, \dots, q_n\rangle\) of the curves that attain the lower bound weak Fr\'echet
distance \(\wfrmin(\U, \V) =: \delta\).
In these realisations, we need to have a sequence of vertices
\(r_1 \leq r_2 \leq \dots \leq r_\ell\) with the \(r_k\) alternately from the set of \(p_i\) and the
set of \(q_j\) such that \(r_1\) is at a right interval endpoint, \(r_\ell\) is at a left interval
endpoint, and \(r_{k + 1} - r_k = \delta\).
Since \(1 \leq \ell \leq m + n\), this implies that there are only \(O(N^2 \cdot (m + n))\)
candidates for \(\delta\), where \(N\) is the total number of interval endpoints.
We can compute these candidates in time \(O(N^2 \cdot (m + n))\).

Now assume that we have chosen \(\pi\) and \(\sigma\) such that none of the \(p_i\) or \(q_j\) can
be increased (i.e.\@ moved to the right) without increasing the weak Fr\'echet distance.
Then for every \(p_i\) (and likewise \(q_j\)) there is a sequence
\(r_1 \leq r_2 \leq \dots \leq r_\ell = p_i\), where \(r_1\) is the endpoint of an interval and
\(r_{k + 1} - r_k = \delta\).
There are \(O(N)\) possibilities for \(r_1\), \(O(m + n)\) possibilities for \(\ell\), and
\(O(N^2 \cdot (m + n))\) possibilities for \(\delta\), thus the total number of positions to
consider for \(p_i\) is polynomial.

\begin{theorem}
The continuous weak Fr\'echet distance between uncertain one-dimensional curves can be computed in
polynomial time.
\end{theorem}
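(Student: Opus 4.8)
The plan is to combine the three ingredients already assembled above — the reduction to respecting realisations, \cref{lem:weakIndependent}, and the dynamic program for $R_{I,J}$ — into a single procedure, and then to settle the one genuinely continuous aspect, namely that the dynamic program need only be evaluated at polynomially many parameter values. First I would set up the reduction. Any pair of realisations $(\pi,\sigma)$ attains its global minimum and maximum at vertices, say at the indices recorded in $I=(i_{\min},i_{\max})$ and $J=(j_{\min},j_{\max})$, and has images $X=\Ima(\pi)$, $Y=\Ima(\sigma)$; thus it is $(I,X)$- and $(J,Y)$-respecting, so
\[\wfrmin(\U,\V)=\min_{I,J}\ \min_{X,Y}\ \wfrmin(\U_I^X,\V_J^Y)\,,\]
a minimum over $O(m^2n^2)$ index pairs and over images. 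For fixed $(I,J,X,Y)$, \cref{lem:weakIndependent} expresses $\wfrmin(\U_I^X,\V_J^Y)$ as the maximum of the prefix quantity $\min_{\pi,\sigma}r(\pi,\sigma)$ and a symmetric suffix quantity (the same expression run on the reversed curves with the index pairs reversed accordingly). The prefix quantity equals $\min_{x,y}R_{I,J}[m,n](x,y,X,Y)$, and $R_{I,J}$ is filled in bottom-up by the recurrence derived above — a recurrence that is legitimate precisely because, as already observed, the side conditions on $x'$, $y'$, $X'$, $Y'$ depend only on $\U_I$ and $\V_J$ and never on a particular realisation.

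The only remaining issue is that $x$, $y$, $x_{\min}$, $x_{\max}$, $y_{\min}$, $y_{\max}$ are continuous, so I must restrict the dynamic program to polynomially many values for each. For this I would exhibit a polynomial-size candidate set $P\subseteq\R$ that contains all vertex coordinates of some optimal pair $(\pi,\sigma)$. As in the discussion above, I would take an optimal pair in which no vertex $p_i$ or $q_j$ can be moved to the right without making the weak Fr\'echet distance exceed $\delta:=\wfrmin(\U,\V)$. Then each such vertex $p_i$ is pinned in place by the construction, which forces a chain $r_1\le\dots\le r_\ell=p_i$ alternating between the two curves with $r_1$ an interval endpoint and consecutive gaps exactly $\delta$, so that $p_i=r_1+(\ell-1)\delta$ with $\ell\le m+n$. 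Since $\delta$ itself must arise as $(e-e')/k$ for interval endpoints $e,e'$ and an integer $1\le k\le m+n$, it lies in a set $D$ of size $O(N^2(m+n))$ that can be listed in advance; hence $P:=\{\,e+k\delta' : e\text{ an interval endpoint},\ 0\le k\le m+n,\ \delta'\in D\,\}$ is polynomial and, crucially, does not presuppose knowledge of $\delta$. Finally, each of the six continuous parameters is a vertex coordinate of $\pi$ or $\sigma$ — for $x_{\min}$ and $x_{\max}$ because in one dimension the image of a prefix of $\pi$ is the interval between its extreme vertices, which are themselves vertices — so each ranges over $P$.

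Putting the pieces together, for each of the $O(m^2n^2)$ pairs $(I,J)$ the table $R_{I,J}[i,j](x,y,X,Y)$ has $O(mn\cdot|P|^6)$ entries, each obtained in polynomial time from the two predecessor tables via the recurrence; extracting $\min_{x,y}R_{I,J}[m,n]$, taking its maximum with the analogous reversed-curve (suffix) value, minimising over $X,Y\in P^2$, and finally minimising over $(I,J)$ all remain polynomial and deliver $\wfrmin(\U,\V)$.

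I expect the discretisation step to be the main obstacle. The dynamic-program skeleton and its correctness are essentially already in place; what needs care is (i) justifying that an optimal pair with the "no vertex can move to the right" property exists at all (a limit/optimality argument, with unbounded uncertainty intervals or minima attained only in a limit requiring separate handling), and (ii) characterising exactly which local configurations pin a vertex in place, so that the chain-back-to-an-endpoint structure — and hence the polynomial candidate set $P$ — is genuinely guaranteed rather than merely plausible. Once that combinatorial fact is nailed down, the rest is bookkeeping over a polynomial-size grid.
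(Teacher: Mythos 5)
Your proposal follows essentially the same route as the paper: it reuses the reduction to $(I,X)$- and $(J,Y)$-respecting realisations, \cref{lem:weakIndependent}, and the $R_{I,J}$ recurrence, and then discretises the continuous parameters exactly as the paper does, via candidate values $\delta=(e-e')/k$ from chains $r_1\le\dots\le r_\ell$ with gaps $\delta$ anchored at interval endpoints, and a pushed-to-the-right optimal realisation whose vertex coordinates lie in a polynomial grid of the form (endpoint) $+\,k\delta'$. The concerns you flag (existence of the maximally right-shifted optimum and the pinning chains) are precisely the points the paper also treats briefly, so your argument matches the paper's in both structure and level of detail.
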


\subsection{Hardness of Discrete Setting}\label{sec:wdhard}
In this section, we prove that minimising the discrete weak Fr\'echet distance is NP-hard, already
in one-dimensional space.
We show this both in the model where uncertainty regions are discrete point sets and in the model
where they are intervals.

In the constructions in this section, the lower bound Fr\'echet distance is never smaller than~\(1\).
The goal is to determine whether it is equal to~\(1\) or greater than~\(1\).

\subsubsection{Indecisive Points}
We reduce from \textsc{3SAT}.
Consider an instance with \(n\) variables and \(m\) clauses.
We assign each variable a unique \emph{height} (coordinate in the one-dimensional space):
variable \(x_i\) gets assigned height \(10i + 5\).
We use slightly higher heights (\(10i + 6\) and \(10i + 7\)) to interact with a positive state of the
variable, and slightly lower heights to interact with a negative state.

We construct two uncertain curves, one which represents the variables and one which represents the
clauses.
The first curve, \(\U\), consists of \(n + 2\) vertices.
The first and last vertex are certain points, both at height~\(0\).
The remaining vertices are uncertain points, with two possible heights each:
\[\U = \langle 0, \{14, 16\}, \{24, 26\}, \dots, \{10n + 4, 10n + 6\}, 0\rangle\,.\]
The second curve, \(\V\), consists of \(nm + n + m + 2\) vertices. 
For clause \(c_j = \ell_a \lor \ell_b \lor \ell_c\), let \(C_j\) be the set
\(\{10a + 3/7, 10b + 3/7, 10c + 3/7\}\), where for each literal we choose \(+3\) if \(\ell_i = x_i\)
or \(+7\) if \(\ell_i = \neg x_i\).
Let \(S\) be the set \(S = \{15, 25, \dots, 10n + 5\}\) of `neutral' variable heights.
Then \(\V\) is the curve that starts and ends at~\(0\), has a vertex for each \(C_j\), and has
sufficiently many copies of \(S\) between them:
\[\V = \langle 0, S, \dots, S, C_1, S, \dots, S, C_2, S, \dots, S, \dots\dots, C_m, 0\rangle\,.\]

Consider the free-space diagram, with a `spot' \((i, j)\) corresponding to each pair of vertices
\(u_i\) and \(v_j\).
The discrete weak Fr\'echet distance is equal to~\(1\) if and only if there is an assignment to each
uncertain vertex such that the there is a path from the bottom left to the top right of the diagram
that uses only accessible spots, where a spot is accessible if the assigned heights of the
corresponding row and column are within~\(1\).
\Cref{fig:discrete-weak-indecisive} shows an example.

\begin{figure}
\centering
\includegraphics{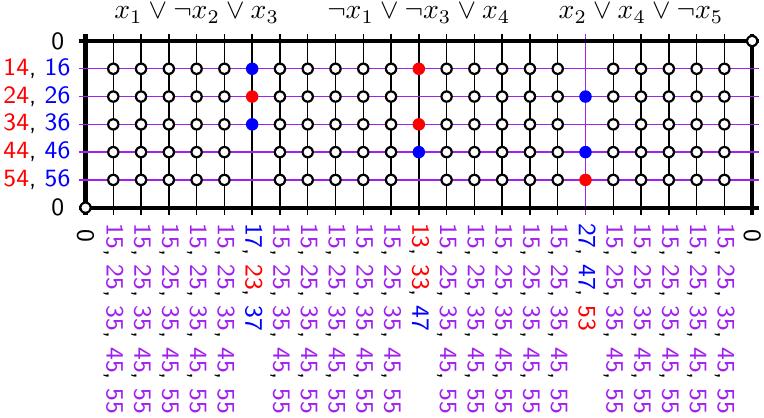}
\caption{An example with five variables and three clauses.
White dots are always accessible, no matter the state of the variables (however, note that only one
white dot per column can be used).
Red / blue dots are accessible only if the corresponding variable is set to \False\ / \True.
Spots without a dot are never accessible.}
\label{fig:discrete-weak-indecisive}
\end{figure}

We can only cross a column corresponding to clause \(c_j\) if at least one of the corresponding
literals is set to true.
The remaining columns can always be crossed at any row.
Note that the repetition is necessary: although all spots are in principle reachable, only one spot
in each column can be reachable at the same time.
If we have at least \(n\) columns between each pair of clauses, this will always be possible.
\begin{theorem}
Given two uncertain curves \(\U\) and \(\V\), each given by a sequence of values and sets of values
in \(\R\), the problem of choosing a realisation of \(\U\) and \(\V\) such that the weak discrete
Fr\'echet distance between \(\U\) and \(\V\) is minimised is NP-hard.
\end{theorem}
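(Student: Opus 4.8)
The plan is to verify that the construction just given is a polynomial-time reduction from \textsc{3SAT}. Write $C$ for the input $3$-CNF formula with $n$ variables and $m$ clauses. The curves $\U$ and $\V$ have lengths polynomial in $n+m$ and are built in polynomial time, so it suffices to show that $\wfrmin(\U,\V)=1$ when $C$ is satisfiable and $\wfrmin(\U,\V)>1$ when $C$ is not. As already noted, in this construction every realisation satisfies $\wfr\ge 1$ (any column whose $\V$-vertex is $S$ can only be traversed through a spot at distance exactly $1$), so deciding whether $\wfrmin(\U,\V)\le 1$ is equivalent to deciding satisfiability of $C$, which gives the theorem. Throughout I view a discrete weak matching of cost at most $1$ as a walk in the free-space grid from $(1,1)$ to $(|\U|,|\V|)$ that at each step moves to a grid-adjacent spot (sharing a side or corner) and stays in \emph{accessible} spots, where a spot $(i,j)$ is accessible at cost $1$ exactly when $|u_i-v_j|\le 1$ for the chosen realisation.

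The first and main step is a structural analysis of the free-space grid. I would establish: (a) any such walk visits at least one spot in every row and in every column, since each coordinate changes by at most one per step and the walk runs from row $1$ to row $n+2$ and from column $1$ to column $|\V|$; (b) the two height-$0$ rows of $\U$ and the two height-$0$ columns of $\V$ are at distance more than $1$ from all other vertices, so they are accessible only at the four corner spots, and hence all variable rows are crossed at interior columns; (c) in a column whose $\V$-vertex equals $S$, once the realisation of that vertex is fixed exactly one spot is accessible, it can be placed in an arbitrary variable row $k$ by realising the vertex at height $10k+5$, and then that spot is accessible for \emph{either} realisation of $u_{k+1}$, at distance exactly $1$; (d) in a column whose $\V$-vertex is a clause vertex $C_j$, once its realisation is fixed to the option stemming from a literal of $C_j$ over variable $x_a$, the only candidate spot sits in the variable row of $x_a$, and it is accessible precisely when that literal is true under the assignment encoded by the realisation of $\U$; otherwise every spot of that column is at distance more than $1$. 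The care in (c) and (d) is in ruling out spurious accessibility: this is where the gap of $10$ between consecutive variable heights is used against literal offsets drawn from $\{3,\dots,7\}$ and the tolerance $\pm 1$, and it is exactly what produces the ``at most one accessible spot per column'' phenomenon that forces the repeated copies of $S$.

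For the forward direction, given a satisfying assignment $a$ I realise every uncertain vertex of $\U$ at the height encoding $a$, realise every clause vertex $C_j$ at the option of some literal of $C_j$ made true by $a$ (this is consistent across clauses, since a literal over a variable is picked only when the assignment of that variable makes it true), and realise the $S$-columns so that they form a column-monotone staircase of accessible spots: from $(1,1)$ step diagonally into a variable row, and then, using that each block of $S$-columns has length at least $n$ while any two variable rows differ by at most $n-1$, move one row per column so as to occupy the required variable row at each clause column and finally reach row $n+2$ at $(|\U|,|\V|)$. Every spot on this walk is at distance exactly $1$, so it is a weak matching of cost $1$ and $\wfrmin(\U,\V)\le 1$; with the standing lower bound, $\wfrmin(\U,\V)=1$.

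For the converse, assume $C$ is unsatisfiable and fix an arbitrary realisation; it encodes an assignment $a$ under which some clause $C_j$ is false. By (a) any weak matching must cross the column of $C_j$, but by (d) no realisation of that vertex makes any spot of its column accessible when none of its literals is satisfied, so no weak matching has cost at most $1$ and $\wfr>1$ for this realisation. Since the realisation was arbitrary, $\wfrmin(\U,\V)>1$. The two directions together establish the reduction, so minimising the discrete weak Fréchet distance over realisations is NP-hard. I expect the main obstacle to be the bookkeeping of step one, in particular making the ``exactly one accessible spot per column'' statements precise and verifying that $n$ copies of $S$ between consecutive clauses always suffice to reposition the walk between the two relevant variable rows.
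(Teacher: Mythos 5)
Your proposal is correct and follows essentially the same route as the paper: it verifies the given 3SAT construction via the free-space-grid reachability argument, using exactly the observations the paper states informally (every row/column must be crossed, at most one accessible spot per realised column, clause columns passable only via a true literal, and $n$ interleaved copies of $S$ sufficing to reposition the path). The only difference is that you spell out the distance bookkeeping and the staircase routing that the paper leaves implicit, which is consistent with its construction.
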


\subsubsection{Imprecise Points}
The construction above relies heavily on the ability to select arbitrary sets of values as
uncertainty regions.
We now show that this is not required.
We strengthen the proof in two ways: we restrict the uncertainty regions to be connected intervals,
and we use uncertainty in only one of the curves.

The main idea of the adaptation is to encode clauses not by a single uncertain vertex, but by sets
of globally distinct paths through the free-space diagram.
To facilitate this, we need a global \emph{frame} to guide the possible solution paths, and we need
more copies of the variable vertices (though only one copy will be uncertain) to facilitate the
paths.

Let \(T = 10(n + 2)\).
We build a frame for the construction using four unique heights: \(0\), \(10\), \(T - 10\) and
\(T\).\footnote{The actual values are, in fact, irrelevant for the construction\dsh they simply need
to be unique numbers sufficiently removed from the values we will use for encoding the variables.}
Let \(S = \langle 0, 10, ?, T - 10, ?, 10, ?, T - 10, T \rangle\) be a partial sequence\dsh the
question marks indicate gaps where we insert other vertices later.
Globally, the curves have the structure \(\U = S\) and
\(\V = S \concat S^{-1} \concat S \concat S^{-1} \concat \dots \concat S\): one copy or reversed
copy of \(S\) for each clause (if the number of clauses is even, simply add a trivial clause).
In the free-space diagram, this creates a frame that every path needs to adhere to.
The frame consists of one block per clause, and inside each block, there are three potential paths
from the bottom left to top right corner (or from the top left to bottom right corner for reversed
blocks).
See \cref{fig:discrete-weak-imprecise-frame}.

\begin{figure}
\centering
\includegraphics{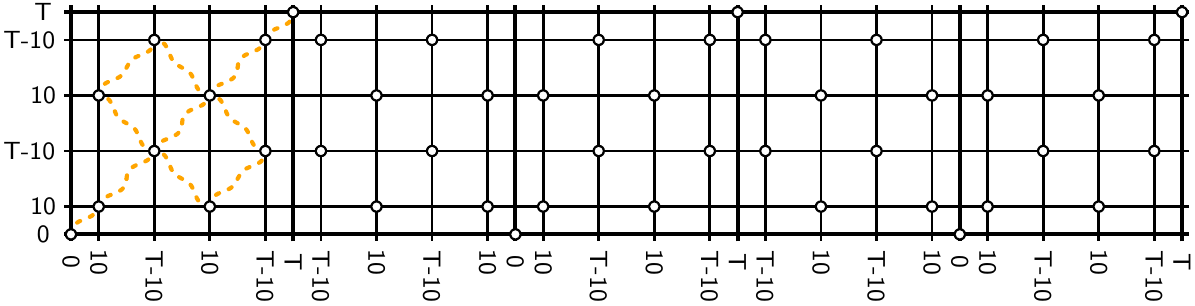}
\caption{The global frame.
White dots are accessible, spots without a dot are never accessible.
Within each block, there are three potential paths between its two accessible corners.}
\label{fig:discrete-weak-imprecise-frame}
\end{figure}

Next, we fill in the gaps.
Let \(\U = \U_1 \concat \U_2^{-1} \concat \U_1\), where
\begin{align*}
\U_1 &= \langle 0, 10, 14, 16, 24, 26, \dots, 10n + 4, 10n + 6, T - 10, T\rangle\,,\\
\U_2 &= \langle 0, 10, [14, 16], [24, 26], \dots, [10n + 4,10n + 6], T - 10, T\rangle\,.
\end{align*}
Let \(\V = \Concat_{1 \leq j \leq m} C_j^{(-1)^{j-1}}\) be concatenation of clause sequences, where
each even clause sequence is reversed.
For a clause \(c_j = \ell_a \lor \ell_b \lor \ell_c\), the sequence \(C_j\) is of the form 
\[C_j = \langle 0, 10\rangle \concat L_a \concat \langle T - 10\rangle \concat L_b^{-1} \concat
\langle 10\rangle \concat L_c \concat \langle T - 10, T\rangle\,,\]
where the literal sequence \(L_i\) corresponding to \(\ell_i = x_i\) (positive literals) or
\(\ell_i = \neg x_i\) (negative literals) is respectively
\begin{align*}
L_i &= \langle 15, 25, \dots, 10(i - 1) + 5, 10i + 5, 10i + 7, 10(i + 1) + 5, \dots, 10n + 5 \rangle\,,\quad\text{or}\\
L_i &= \langle 15, 25, \dots, 10(i - 1) + 5, 10i + 3, 10i + 5, 10(i + 1) + 5, \dots, 10n + 5 \rangle\,.
\end{align*}
See \cref{fig:discrete-weak-imprecise} for an example of the resulting free-space diagram.

\begin{sidewaysfigure}
\centering
\includegraphics{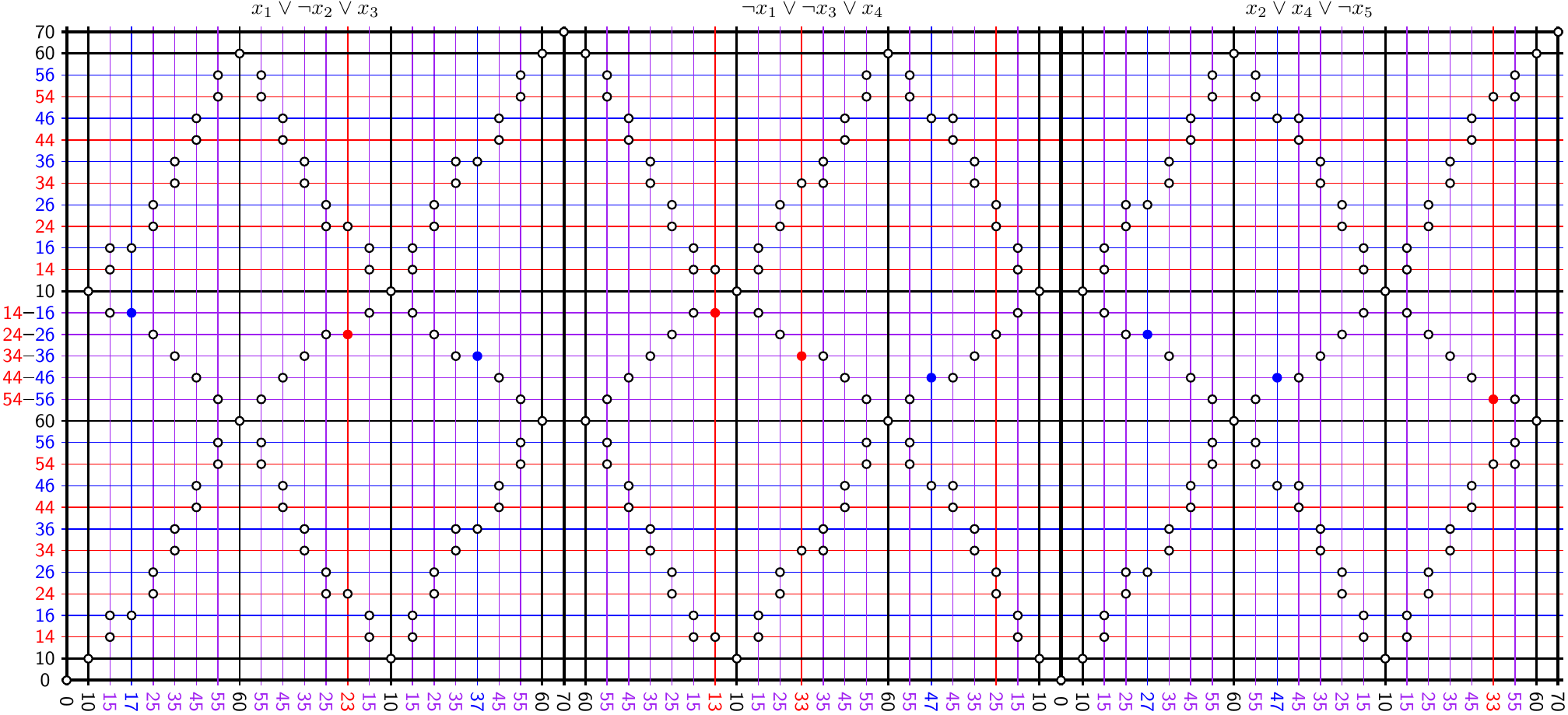}
\caption{An example with five variables and three clauses.
White dots are always accessible, no matter the state of the variables.
Red / blue dots are accessible only if the corresponding variable is set to \False\ / \True.
Spots without a dot are never accessible.}
\label{fig:discrete-weak-imprecise}
\end{sidewaysfigure}

The construction relies on the following.
\begin{observation}
\(L_i\) can always be matched to \(\U_1\).
\(L_i\) can be matched to \(\U_2\) if and only if \(\ell_i = x_i\) and \(x_i\) is set to \True, or
\(\ell_i = \neg x_i\) and \(x_i\) is set to \False.
\end{observation}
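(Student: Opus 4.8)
The plan is to prove the two assertions of the observation directly in terms of discrete weak matchings of cost at most the threshold $\delta = 1$ (recall that in this section the distance is never below $1$, so ``matched'' means ``at distance exactly $1$''). Throughout I read ``$L_i$ matched to $\U_1$'' (resp.\ to $\U_2$) as: there is a discrete weak matching of cost at most $1$ between the sequence $L_i$ and the variable portion $\langle 14, 16, 24, 26, \dots, 10n+4, 10n+6\rangle$ of $\U_1$ (resp.\ between $L_i$ and $\langle v_1, \dots, v_n\rangle$, where $v_k \in [10k+4, 10k+6]$ is the realised value of the $k$-th imprecise vertex of $\U_2$, with $v_k = 10k+6$ read as $x_k = \True$ and $v_k = 10k+4$ as $x_k = \False$). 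I would first record the harmless fact that every height occurring in any $L_i$ lies in $[15, 10n+7]$, while the four frame heights $0, 10, T-10, T$ (with $T = 10n+20$) are at distance at least $3$ from all of them; hence in any matching of a clause sequence against $\U$ the frame vertices of the two curves are matched only to one another and take no further part, so it really suffices to look at the variable portions, as above.

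For the first assertion and for the ``if'' direction of the second, I would exhibit explicit monotone matchings (which are in particular weak matchings) of cost $1$. For $\U_1$: send each neutral vertex $10k+5$ of $L_i$ with $k \neq i$ to both $10k+4$ and $10k+6$, and at index $i$ pair the two consecutive non-neutral vertices of $L_i$ one-to-one with $10i+4$ and $10i+6$, namely $10i+5 \mapsto 10i+4$ and $10i+7 \mapsto 10i+6$ when $\ell_i = x_i$, and $10i+3 \mapsto 10i+4$ and $10i+5 \mapsto 10i+6$ when $\ell_i = \neg x_i$; all pairs are at distance exactly $1$, and since $L_i$ has $n+1$ vertices and the variable portion of $\U_1$ has $2n$, this consumes all vertices of both sequences monotonically, so the cost is $1$. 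For $\U_2$, suppose $\ell_i = x_i$ with $x_i = \True$ (so $v_i = 10i+6$) or $\ell_i = \neg x_i$ with $x_i = \False$ (so $v_i = 10i+4$); send each neutral $10k+5$ ($k\neq i$) to $v_k$ (at distance $\lvert 10k+5 - v_k\rvert \le 1$) and send both non-neutral vertices of $L_i$ at index $i$ to $v_i$, which is at distance exactly $1$ from $10i+5$ and from $10i+7$ in the positive case, and from $10i+3$ and $10i+5$ in the negative case. Again this is a monotone matching of cost $1$, so $L_i$ is matched to $\U_2$.

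The substantive direction is ``only if''. Suppose $L_i$ is matched to $\U_2$ by some weak matching $\mu$ of cost at most $1$, and let $w$ be the unique non-neutral vertex of $L_i$, so $w = 10i+7$ when $\ell_i = x_i$ and $w = 10i+3$ when $\ell_i = \neg x_i$. Since a weak matching sweeps through every vertex of $L_i$, $w$ is matched by $\mu$ to at least one vertex of $\U_2$, and every such vertex lies within distance $1$ of $w$. I would then check that $v_i$ is the only vertex of $\U_2$ that can possibly lie within distance $1$ of $w$: for $k \neq i$ one has $\lvert w - v_k\rvert \ge 7$, because distinct variables are spaced by at least $10$ and $v_k \in [10k+4, 10k+6]$, and the frame heights are at distance at least $3$ from $w$ by the first paragraph (this covers the boundary variables $i=1$ and $i=n$, where the missing neighbour $v_{i\pm1}$ is replaced by the frame height $10$ or $T-10$). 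Finally $\lvert 10i+7 - v_i\rvert \ge 1$ with equality only for $v_i = 10i+6$, and $\lvert 10i+3 - v_i\rvert \ge 1$ with equality only for $v_i = 10i+4$; hence $v_i = 10i+6$, i.e.\ $x_i = \True$, when $\ell_i = x_i$, and $v_i = 10i+4$, i.e.\ $x_i = \False$, when $\ell_i = \neg x_i$ — in either case $\ell_i$ is satisfied. Note that allowing $\mu$ to backtrack does not weaken this deduction: it can only add further vertices that $w$ must be close to, and a realisation with $v_i$ strictly between $10i+4$ and $10i+6$ merely fails the test for both literal signs, which is consistent with the reduction (see also \cref{fig:discrete-weak-imprecise-frame}).

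The one place where care is genuinely needed is the distance bookkeeping of the last paragraph: one must verify, for every position $i$ (including the boundary cases $i=1$ and $i=n$) and every vertex of $\U_2$ and of the surrounding frame, that nothing other than $v_i$ comes within distance $1$ of $w$, and one must double-check that the matchings exhibited in the second paragraph really are cell-by-cell monotone and stay inside the free space. With the chosen spacings these are routine computations, but they are exactly where a careless argument could slip.
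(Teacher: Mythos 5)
Your argument is correct: the paper states this as an unproved observation, relying on the construction and \cref{fig:discrete-weak-imprecise}, and your explicit cost-\(1\) matchings for \(\U_1\) and for the satisfied-literal case of \(\U_2\), together with the distance bookkeeping showing that the non-neutral vertex \(w\) can only be within distance \(1\) of the realisation \(v_i\) (and only when \(v_i\) is the endpoint \(10i+6\) for a positive literal, \(10i+4\) for a negative one), are exactly the routine verification the paper leaves implicit. One tiny slip: the heights occurring in the \(L_i\) lie in \([13, 10n+7]\), not \([15, 10n+7]\), since a negative literal on \(x_1\) contributes \(13\); this changes nothing, as the frame heights \(0, 10, T-10, T\) are still at distance at least \(3\) from that range, and your boundary check for \(i=1\) already treats \(w=13\) explicitly.
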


\begin{theorem}
Given an uncertain curve \(\U\), given by a sequence of values and intervals in \(\R\), and a
certain curve \(\V\), given by a sequence of values in \(\R\), the problem of choosing a realisation
of \(\U\) such that the weak discrete Fr\'echet distance between \(\U\) and \(\V\) is minimised is
NP-hard.
\end{theorem}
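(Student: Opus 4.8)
The plan is to verify that the construction above is a correct polynomial-time reduction from \textsc{3SAT} with threshold $\delta = 1$; that is, I would show that the minimum over realisations $\pi \Subset \U$ of the discrete weak Fr\'echet distance between $\pi$ and $\V$ equals $1$ if the formula is satisfiable and is strictly greater than $1$ otherwise. The curve $\U$ has $O(n)$ vertices and $\V$ has $O(mn)$ vertices, and both are clearly computable in polynomial time, so a reduction of this form gives NP-hardness.

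I would first dispatch the lower bound: for every realisation $\pi$, the discrete weak Fr\'echet distance between $\pi$ and $\V$ is at least $1$. Indeed, the two precise copies of $\U_1$ inside $\U$ contain a vertex at height $16$, and every vertex of $\V$ lies at a height in $\{0, 10, T - 10, T\}$ or of the form $10i + 5$, $10i + 3$, or $10i + 7$; each of these is at distance at least $1$ from $16$. Since any weak matching couples that precise vertex of $\U$ with some vertex of $\V$, its cost is at least $1$. It therefore remains to decide whether the minimum cost is exactly $1$.

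The heart of the argument is to characterise, for a fixed realisation $\pi$, the matchings of cost at most $1$, and I would do this in two stages. \textbf{Frame rigidity.} The four frame heights $0, 10, T - 10, T$ are pairwise at distance at least $10$ and each lies at distance more than $1$ from every other height occurring in the construction (all of which lie in $[13, 10n + 7]$). Hence a cost-$\le 1$ matching must couple each frame vertex of one curve only with a frame vertex of the same height on the other curve. As $\U$ and every clause sequence $C_j$ carry the same frame skeleton $\langle 0, 10, ?, T - 10, ?, 10, ?, T - 10, T\rangle$, this forces any cost-$\le 1$ matching to split into one \emph{block} per clause, consecutive blocks meeting at shared frame corners and reversed blocks $C_j^{-1}$ handled by the symmetric skeleton (cf.\ \cref{fig:discrete-weak-imprecise-frame}). \textbf{Inside a block.} For $c_j = \ell_a \lor \ell_b \lor \ell_c$, the frame forces the three literal subsequences $L_a$, $L_b^{-1}$, $L_c$ of $C_j$ to be coupled, in order, with the three variable segments of $\U$ --- the two precise copies inside $\U_1$ and the single imprecise copy inside $\U_2^{-1}$. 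Inspecting the discrete free-space diagram (cf.\ \cref{fig:discrete-weak-imprecise}) shows that the only routings available are the three that pair the imprecise segment with one of the three literal subsequences while pairing the other two with a precise copy, and that backtracking inside a block does not create further possibilities. By the observation stated above, a literal subsequence couples with $\U_1$ at cost at most $1$ unconditionally and with $\U_2$ at cost at most $1$ exactly when the corresponding literal is satisfied by the assignment encoded by the realisation of $\U_2$. Thus the block for $c_j$ admits a cost-$\le 1$ routing if and only if $c_j$ is satisfied.

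Assembling the two stages, a cost-$\le 1$ matching of the full curves exists if and only if every block admits one, hence if and only if the global assignment induced by the realisation of $\U_2$ satisfies every clause, i.e.\ the formula is satisfiable. For the converse direction I would, given a satisfying assignment, set each interval of $\U_2$ to the endpoint dictated by the assignment and stitch together the per-block matchings, obtaining a matching of cost $1$; with the lower bound this yields minimum cost exactly $1$. The main obstacle I anticipate is the second stage of the structural analysis: rigorously proving that the frame together with the wide gap between $T - 10$ and the variable heights genuinely forces the block decomposition and admits \emph{precisely} the three claimed routings, excluding exotic matchings that exploit the backtracking freedom of weak couplings. This is a finite but delicate case analysis of the free-space diagram, for which the figures are the organising guide.
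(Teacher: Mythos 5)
Your proposal is correct and follows essentially the same route as the paper: it verifies that the frame construction with threshold \(\delta = 1\) is a reduction from \textsc{3SAT}, using the same block decomposition forced by the frame heights, the three per-block routings that pair the imprecise segment \(\U_2\) with one literal subsequence, and the paper's Observation that \(L_i\) always matches \(\U_1\) but matches \(\U_2\) only when the literal is satisfied. The paper itself offers no more detailed case analysis than your sketch (it argues via the frame and the free-space figures), so your plan, including the delicate routing analysis you flag as the remaining work, coincides with the paper's intended argument.
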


\subsubsection{Continuous Weak Fr\'echet Distance in \texorpdfstring{\(\R^2\)}{R²}}
Finally, we mention that the results in this section carry over to continuous weak Fr\'echet
distance in one dimension higher.
We simply construct the same curves as described above on the \(x\)-axis, and intersperse each curve
with the point at \((0, \infty)\).
\begin{corollary}
Given an uncertain curve \(\U\), given by a sequence of points and regions in \(\R^2\), and a
certain curve \(\V\), given by a sequence of points in \(\R^2\), the problem of choosing a
realisation of \(\U\) such that the weak Fr\'echet distance between \(\U\) and \(\V\) is minimised
is NP-hard.
\end{corollary}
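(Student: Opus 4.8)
The plan is to reduce from the NP\nobreakdash-hard problem established immediately above: choosing a realisation minimising the \emph{discrete} weak Fr\'echet distance between an uncertain curve $\U$ with interval uncertainty and a certain curve $\V$, both in $\R$. Given such a one-dimensional instance, I embed both curves in $\R^2$ by placing every vertex on the $x$-axis (a value $x$ becomes $(x,0)$, an interval $[a,b]$ becomes the horizontal segment $[a,b]\times\{0\}$, a finite point set becomes a finite subset of the $x$-axis) and then insert between every two consecutive vertices of \emph{each} curve one common \emph{spike vertex} $z=(0,M)$ for a sufficiently large constant $M$ (polynomial in the input size suffices) --- this is the ``point at $(0,\infty)$'' of the statement. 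Since $z$ is certain and identical in both curves, realisations of the embedded $\U$ are in bijection with realisations of the original one-dimensional $\U$, so it suffices to show that, for $M$ large enough, the continuous $\wfr$ between two such ``combed'' realisations is at most $1$ exactly when the one-dimensional discrete weak Fr\'echet distance between the underlying realisations is at most $1$.

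The core is a free-space analysis of two combed curves $\pi'=\langle p_1,z,p_2,z,\dots,z,p_m\rangle$ and $\sigma'=\langle q_1,z,q_2,z,\dots,z,q_n\rangle$, using that $\wfr(\pi',\sigma')\le\delta$ iff the start and end corners lie in the same connected component of the $\delta$-free space. The local facts I would prove are: (i)~at a parameter pair where both curves sit at $z$ the two points coincide, so the free space contains a neighbourhood there --- call these the \emph{junctions}; (ii)~along a pair of ``teeth'' (one edge-pair running from $z$ to $p_i$ and back, the other from $z$ to $q_j$ and back) the two points have $y$-coordinates $(1-\lambda)M$ and $(1-\mu)M$, so membership in the $\delta$-free space forces $\lvert\lambda-\mu\rvert\le\delta/M$, whence as $M\to\infty$ the free space in this tooth-pair collapses to a thin sliver around $\lambda=\mu$ whose tip (both curves at their $p_i$, $q_j$) is free precisely when $\lvert p_i-q_j\rvert\le\delta$, and this sliver links only the four surrounding junctions to that tip; (iii)~the first and last vertices of $\U$ and $\V$ in the construction are fixed and equal (at the origin after embedding), so the start and end corners are free and attached to the first and last junctions. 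Combining these, for large $M$ the start and end corners are connected in the $\delta$-free space iff there is a walk in the graph on junctions and tips whose junctions are always open and whose tip $(i,j)$ is open iff $\lvert p_i-q_j\rvert\le\delta$ --- and reachability in this graph is exactly what defines the discrete weak Fr\'echet distance of $\pi$ and $\sigma$ (successive visited vertex pairs differing by at most one in each coordinate, each pair within $\delta$).

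To finish I would set $\delta=1$ and note that in the one-dimensional construction all relevant coordinates are separated so that the discrete weak Fr\'echet distance of any realisation is either exactly $1$ or at least $1+c$ for an absolute constant $c>0$; choosing $M$ large relative to $c$ and the (linear) coordinate magnitudes makes the sliver approximation of step~(ii) exact enough that the continuous $\wfr$ of the combed curves is $\le 1$ iff the discrete one-dimensional distance is $\le 1$. The embedding and spiking increase the curve lengths only by constant factors, so the reduction is polynomial, and minimising the continuous weak Fr\'echet distance in $\R^2$ is NP\nobreakdash-hard; applying the same construction to the earlier indecisive-point hardness instance additionally covers point-set uncertainty. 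I expect the main obstacle to be exactly the free-space analysis of the combed curves: verifying that for large $M$ no sliver introduces a spurious shortcut --- that a tip is joined only to its four adjacent junctions, that no two non-adjacent junctions or two unrelated tooth-pairs get connected, and that the half-teeth at the two ends and the behaviour exactly at the threshold $\delta=1$ are handled correctly --- so that the continuous connectivity graph coincides on the nose with the discrete weak Fr\'echet move graph.
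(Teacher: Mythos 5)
Your construction is exactly the paper's: the paper proves this corollary in two sentences, by placing the 1D discrete hardness construction on the \(x\)-axis and interspersing both curves with ``the point at \((0,\infty)\)''; your finite spike \(z=(0,M)\) with the junction/sliver free-space analysis is the elaborated version of that remark, and your items (i)--(iii), including the observation that the forward direction is exact (ascend and descend the teeth at equal heights, giving distance \(\lambda\lvert p_i-q_j\rvert\le\delta\)), are sound and supply details the paper omits.

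There is, however, one genuine flaw in your finishing step. The claim that in the 1D instance every realisation has discrete weak Fr\'echet distance either exactly \(1\) or at least \(1+c\) for an absolute constant \(c>0\) is false in the imprecise model: placing the interval vertex at \(10i+4+\eta\) gives distance \(1+\eta\) to the literal height \(10i+3\) for arbitrarily small \(\eta>0\), so there is no uniform gap over realisations and ``choose \(M\) large relative to \(c\)'' is not available. The repair is local but necessary. First quantify the cheat you are worried about: a tooth-switch that avoids the tip happens when one parameter crosses its vertex while the other curve is at height \(h\le\delta\) on its tooth, which only certifies \(\lvert p_i-q_j\rvert\le\sqrt{1-h^2}+hC/M\le 1+C^2/(2M^2)\), where \(C\) bounds the coordinates. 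Then use the structure of the 1D gadget instead of a gap: every distance involving the imprecise vertex \(u_i\in[10i+4,10i+6]\) is either always \(\le 1\) (neutral height \(10i+5\)), always \(\ge 2\) (frame and other variables), or lies in \([1,3]\) and is \(\le 1+\varepsilon\) precisely when \(u_i\) is within \(\varepsilon\) of the relevant interval endpoint. Hence any threshold \(1+\varepsilon\) with \(\varepsilon<1\) induces the same accessibility pattern as threshold \(1\) after rounding \(u_i\) to that endpoint (and no \(u_i\) can be within \(\varepsilon\) of both endpoints), so a \(\le 1\) path for the combed curves yields, after rounding, a realisation with 1D discrete weak distance \(\le 1\). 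With this replacement your argument is complete and coincides with the paper's intended proof.
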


\bibliographystyle{plainurl}
\bibliography{1d_frechet}
\end{document}